\newcommand{\bigO}{\mathcal{O}}
\title{String Attractors: Verification and Optimization}
\author[1]{Dominik Kempa}
\author[2]{Alberto Policriti}
\author[3]{Nicola Prezza}
\author[3]{Eva Rotenberg}
\affil[1]{University of Helsinki, Finland\\
	\texttt{dkempa@cs.helsinki.fi}}
\affil[2]{University of Udine, Udine, Italy\\
  \texttt{alberto.policriti@uniud.it }}
\affil[3]{Technical University of Denmark, Kgs. Lyngby, Denmark\\
  \texttt{\{npre,erot\}@dtu.dk}}
\authorrunning{D. Kempa, A. Policriti, N. Prezza, and E. Rotenberg} 
\subjclass{Dummy classification -- please refer to \url{http://www.acm.org/about/class/ccs98-html}}
\keywords{Dummy keyword -- please provide 1--5 keywords}
\begin{document}

\maketitle

\begin{abstract}
	
	String attractors [STOC 2018] are combinatorial objects recently introduced to unify all known dictionary compression techniques in a single theory. A set $\Gamma\subseteq [1..n]$ is  a \emph{$k$-attractor} for a string $S\in[1..\sigma]^n$ if and only if every distinct substring of $S$ of length at most $k$ has an occurrence straddling at least one of the positions in $\Gamma$. Finding the smallest $k$-attractor is NP-hard for $k\geq3$, but polylogarithmic approximations can be found using reductions from dictionary compressors.
	It is easy to reduce the $k$-attractor problem to a set-cover instance where string's positions are interpreted as sets of substrings. The main result of this paper is a much more powerful reduction based on the truncated suffix tree.
	Our new characterization of the problem leads to more efficient algorithms for string attractors: we show how to check the validity and minimality of a $k$-attractor in near-optimal time and how to quickly compute exact and approximate solutions.
	For example, we prove that a minimum $3$-attractor  can be found in optimal $\bigO(n)$  time when $\sigma\in \bigO(\sqrt[3+\epsilon]{\log n})$ for any constant $\epsilon>0$, and $2.45$-approximation can be computed in  $\bigO(n)$ time on general alphabets.
	To conclude, we introduce and study the complexity of the closely-related \emph{sharp-$k$-attractor} problem: to find the smallest set of positions capturing all distinct substrings of length \emph{exactly} $k$. We show that the problem is in P for $k=1,2$ and is NP-complete for constant $k\geq 3$.
\end{abstract}

\section{Introduction}
The goal of dictionary compression is to reduce the size of an input string by exploiting its repetitiveness. In the last decades, several dictionary compression techniques---some more powerful than others---were developed to achieve this goal: Straight-Line programs~\cite{KY00} (context-free grammars generating the string), Macro schemes~\cite{storer1982data} (a set of substring equations having the string as unique solution), the run-length Burrows-Wheeler transform~\cite{burrows1994block} (a string permutation whose number of equal-letter runs decreases as the string's repetitiveness increases), and the compact directed acyclic word graph~\cite{blumer1987complete,crochemore1997direct} (the minimization of the suffix tree). Each scheme from this family comes with its own set of algorithms and data structures to perform compressed-computation operations---e.g. random access---on the compressed representation. Despite being apparently unrelated, in~\cite{kempa2017roots} all these compression schemes were proven to fall under a common general scheme: they all induce a set $\Gamma\subseteq  [1..n]$ whose cardinality is bounded by the compressed representation's size and with the property that each distinct substring has an occurrence crossing at least a position in $\Gamma$. A set with this property is called a \emph{string attractor}. 
Intuitively, positions in a string attractor  capture ``interesting'' regions of the string; a string of low complexity (that is, more compressible), will generate a smaller attractor. 
Surprisingly, given such a set one can build a data structure of size $\bigO(|\Gamma|\ \mathtt{polylog }(n))$ supporting random access queries in optimal time~\cite{kempa2017roots}: string attractors therefore provide a universal framework for performing compressed computation on top of \emph{any} dictionary compressor (and even optimally for particular queries such as random access). 

These premises suggest that an algorithm  computing a smallest string attractor for a given string would be an invaluable tool for designing better compressed data structures.
Unfortunately, computing a minimum string attractor is NP-hard. The problem remains NP-hard even under the restriction that only substrings of length at most $k$ are captured by $\Gamma$, for any $k\geq 3$ and on large alphabets. In this case, we refer to the problem as \emph{$k$-attractor}. Not all hope is lost, however: as shown in~\cite{kempa2017roots}, dictionary compressors are actually heuristics for computing a small $n$-attractor (with polylogarithmic approximation rate w.r.t. the minimum), and, more in general, $k$-attractor admits a $\bigO(\log k)$-approximation based on a reduction to set cover. It is actually easy to find such a reduction: choose as universe the set of distinct substrings and as set collection the string's positions (i.e. set $s_i$ contains all substrings straddling position $i$). The main limitation of this approach is that the set of distinct substrings could be quadratic in size; this makes the strategy of little use in cases where the goal is to design usable (i.e. as close as possible to linear-time) algorithms on string attractors.

The core result of this paper is a much more powerful reduction from $k$-attractor to set-cover: the universe $\mathcal U$ of our instance is equal to the set of edges of the $k$-truncated suffix tree, while the size of the set collection $\mathcal S\subseteq 2^{\mathcal U}$ is bounded by the size of the $(2k-1)$-truncated suffix tree. First of all, we obtain a universe that is always at least $k$ times smaller than the naive approach. Moreover, the size of our set-cover instance does not depend on the string's length $n$, unless $\sigma$ and $k$ do. This allows us to show that $k$-attractor is actually solvable in polynomial time for small values of $k$ and $\sigma$, and leads us to efficient algorithms for a wide range of different problems on string attractors.

The paper is organized as follows. In Section \ref{sec:notation} we describe the notation used throughout the paper and we report the main notions related to $k$-attractors. In Section \ref{sec:results} we give the main theorem stating our reduction to set-cover (Theorem \ref{th:main}) and briefly discuss the results that we obtain in the rest of the paper by applying it. Theorem \ref{th:main} itself is proven in Section \ref{sec:characterizations}, together with other useful lemmas that will be used  to further improve the running times of the algorithms described in Section \ref{sec: faster algorithms}. Finally, in Section \ref{section:sharp} we introduce and study the complexity of the closely-related \emph{sharp-$k$-attractor} problem: to capture all distinct substrings of length exactly $k$.

All proofs omitted for space reasons can be found in the Appendix.



\subsection{Notation and definitions}\label{sec:notation}

$[i..j]$ indicates the set $\{i, i+1, \dots j\}$. The notation $S[1,n]$ indicates a string of length $n$ with indices starting from $1$. 
When $D$ is an array (e.g. a string or an integer array), $D[i..j]$ indicates the sub-array $D[i], D[i+1], \dots, D[j]$.

We assume the reader to be familiar with the notions of \emph{suffix tree}~\cite{weiner1973linear}, \emph{suffix array}~\cite{manber1993suffix}, and \emph{wavelet tree}~\cite{grossi2003high,navarro2014wavelet}.
$ST^k(S)$ denotes the $k$-truncated suffix tree of $S$, i.e. the compact trie containing all substrings of $S$ of length at most $k$.  $\mathcal E(\mathcal T)$ denotes the set of edges of the compact trie $\mathcal T$.  $\mathcal L(\mathcal T)$ denotes the set of leaves at maximum string depth of the compact trie $\mathcal T$ (i.e. leaves whose string depth is equal to the maximum string depth among all leaves). 
Let $e=\langle u,v \rangle$ be an edge in the (truncated) suffix tree of $S$. With $s(e)$ we denote the string read from the suffix tree root to the first character in the label of $e$. $\lambda(e) = |s(e)|$ is the length of this string. We will also refer to $\lambda(e)$ as the \emph{string depth} of $e$.
Note that edges $e_1,\dots, e_t$ of the $k$-truncated suffix tree have precisely the same labels $s(e_1), \dots, s(e_t)$ of the suffix tree edges $e'_1,\dots, e'_t$ at string depth  $\lambda(e'_i) \leq k$. It follows that we can use these two edge sets interchangeably when we are only interested in their labels (this will be the case in our results). 
Let $SA[1,n]$ denote the suffix array of $S$.
$\langle l_e,r_e \rangle$, with $e\in \mathcal E(ST^k(S))$  being an edge in the $k$-truncated suffix tree, will denote the suffix array range corresponding to the string $s(e)$, i.e. $SA[l_e..r_e]$ contains all suffixes prefixed by $s(e)$. 

Unless otherwise specified, we give the space of our data structures in words of $\Theta(\log n)$ bits each.

With the following definition we recall the notion of \emph{$k$-attractor} of a string~\cite{kempa2017roots}.

\begin{definition}\label{def: string k-attractor}
	A \emph{$k$-attractor} of a string $S\in\Sigma^n$ is a set of positions $\Gamma \subseteq [1..n]$ such that every substring $S[i..j]$ with $i \leq j < i+k$ has at least one occurrence $S[i'..j'] = S[i..j]$ with $j'' \in [i'..j']$ for some $j''\in\Gamma$. 
\end{definition}

We call \emph{attractor} a $n$-attractor for $S$.

\begin{definition}\label{def: minimal k-attractor}
	A \emph{minimal $k$-attractor} of a string $S\in\Sigma^n$ is a $k$-attractor $\Gamma$ such that $\Gamma - \{j\}$ is not a $k$-attractor of $S$ for any $j\in\Gamma$.
\end{definition}

\begin{definition}\label{def: minimum k-attractor}
	A \emph{minimum $k$-attractor} of a string $S\in\Sigma^n$ is a $k$-attractor $\Gamma^*$ such that, for any $k$-attractor $\Gamma$ of $S$, $|\Gamma^*| \leq |\Gamma|$.
\end{definition}

\begin{theorem}~\cite[Thm. 8]{kempa2017roots}
	The problem of deciding whether a string $S$ admits a $k$-attractor of size at most $t$ is NP-complete for $k\geq 3$.
\end{theorem}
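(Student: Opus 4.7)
The plan is to prove the two standard parts of NP-completeness: membership in NP and NP-hardness via a polynomial reduction from a known NP-hard problem. The natural source is a covering problem (for example, Set Cover restricted to sets of bounded size, or 3-Dimensional Matching), since the $k$-attractor problem is structurally a set-cover-like problem where "sets" are string positions and "elements" are distinct substrings of length $\leq k$.

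For membership in NP, I would observe that a candidate $\Gamma\subseteq [1..n]$ with $|\Gamma|\leq t$ is a succinct certificate. To verify it in polynomial time, build the $k$-truncated suffix tree of $S$ (which contains all distinct substrings of length at most $k$), and for each such substring check whether at least one of its occurrences in $S$ straddles a position of $\Gamma$. Since there are at most $O(n^2)$ relevant (substring, occurrence) pairs to inspect, the check is polynomial.

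For NP-hardness at $k\geq 3$, my plan is to reduce from Set Cover over a large alphabet. Given a universe $U=\{u_1,\dots,u_n\}$ and a collection $\mathcal{S}=\{S_1,\dots,S_m\}$, construct a string $S$ in two layers. First, a \emph{forcing} layer made of substrings of length $\leq k$ each occurring only once: by uniqueness, every $k$-attractor must straddle these occurrences, pinning down a baseline of $c$ mandatory positions. Second, a \emph{choice} layer consisting of gadgets $G_1,\dots,G_m$, one per set $S_j$, separated by fresh unique delimiters so no length-$\leq k$ substring spans two gadgets. Inside $G_j$, for every $u_i\in S_j$ embed a length-$k$ "witness" substring $w_i$ (depending only on $u_i$, not on $j$) so that \emph{all} these witnesses straddle a single designated position $p_j$ of $G_j$. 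Then including $p_j$ in the attractor covers exactly the witnesses $\{w_i : u_i\in S_j\}$. Because the $w_i$ are pairwise distinct and each must be covered somewhere, the choice part of any valid attractor naturally corresponds to a set cover of $U$. A careful accounting then yields: $S$ admits a $k$-attractor of size $\leq c+\tau$ iff $(U,\mathcal{S})$ admits a set cover of size $\leq \tau$, which establishes hardness (Set Cover being NP-hard in its decision version).

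The main obstacle is designing the gadgets so that the correspondence is tight: one must ensure that (i) witnesses $w_i$ cannot be "accidentally" covered by forcing-layer positions or by attractor positions chosen inside a different gadget, (ii) no shortcut exists where a single choice position inside $G_j$ covers witnesses belonging to a set other than $S_j$, and (iii) the construction works for every $k\geq 3$, which is why $k=3$ is the critical case (for $k=1,2$ the combinatorics collapse, as the companion sharp-$k$ results in Section~\ref{section:sharp} suggest). This is achieved by making the alphabet large enough to introduce fresh delimiters between witnesses and by spacing witnesses inside each $G_j$ so that exactly one position straddles all of them. The delicate combinatorial check — that no unintended length-$\leq k$ substring spans gadget boundaries or creates cross-gadget coverings — is the technical heart of the proof, while the asymptotics of the reduction are routine.
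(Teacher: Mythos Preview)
This theorem is not proved in the present paper: it is quoted verbatim from~\cite[Thm.~8]{kempa2017roots} as background, so there is no ``paper's own proof'' to compare against. The closest thing here is the appendix proof of Theorem~\ref{th:sharp-attractor-npc} (the sharp-$k$ variant), which the authors explicitly say ``follows~\cite[Thm.~8]{kempa2017roots}, but the main gadgets are slightly different.'' So your high-level plan --- membership in NP via a polynomial verifier, hardness via a reduction from a bounded-set-size covering problem with per-set gadgets --- is the right shape and matches the known argument.

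Where your sketch has a genuine gap is the gadget mechanics. You propose that inside $G_j$ the witness strings $w_i$ (one per $u_i\in S_j$) all straddle a \emph{single} designated position $p_j$. This cannot work as stated: distinct length-$k$ witnesses depending only on $u_i$ would have to overlap one another around $p_j$, which forces equalities among their characters (e.g.\ if $w_1$ starts at $p_j-2$ and $w_2$ at $p_j-1$, the last $k-1$ characters of $w_1$ must equal the first $k-1$ of $w_2$), destroying the intended independence of the $w_i$. More broadly, a single text position is straddled by at most $k$ length-$k$ substrings, and those substrings are \emph{not} freely choosable --- they are consecutive windows of the same $2k-1$ characters.

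The actual construction (visible in the Theorem~\ref{th:sharp-attractor-npc} proof, and in the original~\cite{kempa2017roots}) uses a different mechanism. Each set-gadget $S_i$ contains the $n_i$ element-witnesses in \emph{separate, non-overlapping} blocks padded by unique delimiters $\$_{i,j}$. The gadget is engineered so that it has many length-$k$ substrings occurring nowhere else, forcing exactly $2n_i+1$ attractor positions inside it; crucially, this forced minimum cover is \emph{unique} and leaves all element-witnesses uncovered. An alternative ``universal'' cover of size $2n_i+2$ --- just one extra position --- covers everything including the witnesses. Thus ``selecting'' set $C_i$ costs exactly one additional attractor position, and the budget argument goes through. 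The subtle part you are glossing over is proving uniqueness of the minimum cover (so that saving the extra position \emph{necessarily} leaves the witnesses uncovered), not merely that no cross-gadget substrings arise. Your forcing/choice decomposition is morally right, but the ``single pivot position per gadget'' idea should be replaced by this ``minimum-vs-minimum-plus-one'' accounting.
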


Note that we can pre-process the input string $S$ so that its characters are mapped into the range $[1..n]$. This transformation can be computed in linear time and space. It is easy to see that a set $\Gamma$ is a $k$-attractor for $S$ if and only if it is a $k$-attractor for the transformed string. It follows that we do not need to put any restriction on the alphabet of the input string, and in the rest of the paper we can assume that the alphabet is $\Sigma = [1..\sigma]$, with $\sigma\leq n$.

\subsection{Overview of the contributions}\label{sec:results}

Our main theorem is a reduction to set-cover based on the notion of truncated suffix tree:

\begin{theorem}\label{th:main}
	Let $S'=\#^{k-1}S\#^{k-1}$, with $\#\notin \Sigma$.
	$k$-attractor can be reduced to a set-cover instance 
	with universe $\mathcal U = \mathcal E(ST^k(S))$ and set collection $\mathcal S$ of size $|\mathcal L(ST^{2k-1}(S'))|$. 
\end{theorem}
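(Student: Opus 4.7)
My plan is to build the set-cover instance in two stages. First, I will reduce the $k$-attractor condition to a per-edge condition in $ST^k(S)$: concretely, I aim to show that $\Gamma$ is a $k$-attractor of $S$ if and only if for every $e \in \mathcal E(ST^k(S))$ some $p \in \Gamma$ straddles an occurrence of $s(e)$. Second, I will show that the subset of edges ``contributed'' by a text position $p$ depends only on the length-$(2k-1)$ context around $p$, and that the number of such distinct contexts is exactly $|\mathcal L(ST^{2k-1}(S'))|$.

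For the first step, one direction is immediate: every $s(e)$ is a substring of $S$ of length at most $k$, so it must be straddled by $\Gamma$ whenever $\Gamma$ is a $k$-attractor. For the converse I will take an arbitrary substring $w$ with $|w| \leq k$ and locate its locus on some edge $e \in \mathcal E(ST^k(S))$, which forces $\lambda(e) \leq |w|$. By compactness of the (truncated) suffix tree, every string whose locus lies on $e$ has exactly the same set of starting positions in $S$ as $s(e)$ and is obtained from $s(e)$ by appending a (possibly empty) right-extension. Hence any $p$ straddling $s(e)$ at $[i, i+\lambda(e)-1]$ also lies in $[i, i+|w|-1]$ and straddles the corresponding occurrence of $w$.

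For the second step, I will define $\mathrm{cov}(p) \subseteq \mathcal E(ST^k(S))$ as the set of edges $e$ such that $p$ straddles some occurrence of $s(e)$ in $S$. By the first step, $\Gamma$ is a $k$-attractor iff $\{\mathrm{cov}(p) : p \in \Gamma\}$ covers $\mathcal E(ST^k(S))$, giving the set-cover instance with universe $\mathcal U = \mathcal E(ST^k(S))$. Whether any substring of length at most $k$ has an occurrence straddling $p$ is determined solely by the window $S[p-k+1..p+k-1]$, i.e.\ by the length-$(2k-1)$ substring $S'[p..p+2k-2]$ of the padded string; the $\#^{k-1}$ padding is introduced precisely so that this window is well-defined uniformly, even near the boundaries of $S$. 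Therefore $\mathrm{cov}(p)$ depends only on this context, and positions with the same context produce identical subsets. The distinct length-$(2k-1)$ substrings of $S'$ that arise in this way are in bijection with the leaves at maximum string depth of $ST^{2k-1}(S')$, which bounds $|\mathcal S|$ by $|\mathcal L(ST^{2k-1}(S'))|$ as claimed.

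The main obstacle I expect is the edge-based characterization in the first step. It rests on the observation that all substrings with locus on a single edge share exactly the same starting positions in $S$, a consequence of suffix-tree compactness. A subtle point is that edges of $ST^k(S)$ may be truncated at depth $k$, so I must verify that the shared-occurrence property still applies by identifying each edge of $ST^k(S)$ with an edge of the underlying suffix tree at string depth $\leq k$, a correspondence already recorded in the preliminaries of the paper. Once this lemma is in place, the remainder of the argument is a bookkeeping identification of $(2k-1)$-contexts with maximum-depth leaves of $ST^{2k-1}(S')$.
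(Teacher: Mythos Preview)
Your proposal is correct and follows the same two-step approach as the paper: your first step is precisely Lemma~\ref{theorem:k-attr characterization} (the suffix-tree $k$-marking characterization), and your second step reproduces the $k$-equivalence relation $\equiv_k$ together with Lemma~\ref{lemma:size}. The only cosmetic difference is that you argue completeness directly via ``equal $(2k-1)$-contexts yield identical sets $\mathrm{cov}(p)$,'' whereas the paper routes the same fact through minimality and the swap lemma (Lemmas~\ref{lemma:equiv_classes} and~\ref{lemma:swap}) to map an arbitrary $k$-attractor back into the restricted candidate set $\mathcal C$.
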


Figure \ref{fig:st} depicts the main technique (Lemma \ref{theorem:k-attr characterization}) standing at the core of our reduction: a set $\Gamma$ is a valid attractor if and only if it marks (or colours, in the picture), all suffix tree edges.

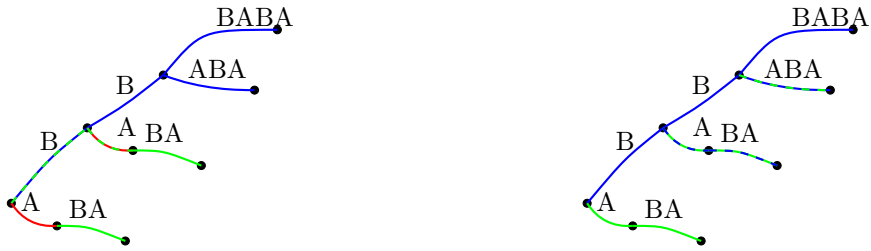
\begin{figure}[h]
	\begin{subfigure}{.48\textwidth}
		\begin{tikzpicture}
		\tikzstyle{vertex}=[circle,fill, draw=black,minimum size=3pt,inner sep=0pt]
		\tikzstyle{leaf}=[circle,fill, draw=black,minimum size=3pt,inner sep=0pt]
		
		\node[vertex] (r) at (0,0) {};
		\node[vertex] (b1) at (1,1) {};
		\node[vertex] (a1) at (0.6,-0.3) {};
		\node[leaf] (l1) at (1.5,-0.5) {};
		\node[vertex] (b2) at (2,1.7) {};
		\node[vertex] (a2) at (1.6,0.7) {};
		\node[leaf] (l2) at (2.5,0.5) {};
		\node[leaf] (l3) at (3.2,1.5) {};
		\node[leaf] (l4) at (3.5,2.3) {};

		\draw [blue,thick] (0,0) .. controls (0.5,0.6)  .. (1,1) node[midway, above, black] {B};
		\draw [green,thick,dashed] (0,0) .. controls (0.5,0.6)  .. (1,1) node[midway, above, black]{};
		\path [red,thick] (0,0) edge [bend right] node[above,black] {A} (0.6,-0.3); 
		\draw [green,thick] (0.6,-0.3) .. controls (1,-0.3) .. (1.5,-0.5) node[midway, above, black] {BA};
		\draw [blue,thick] (1,1) .. controls (1.5,1.3)  .. (2,1.7) node[midway, above, black] {B};
		\path [red,thick] (1,1) edge [bend right] node[above right,black] {A} (1.6,0.7); 
		\path [green,thick, dashed] (1,1) edge [bend right] (1.6,0.7); 
		\draw [green,thick] (1.6,0.7) .. controls (2,0.7) .. (2.5,0.5) node[midway, above, black] {BA};
		\draw [blue,thick] (2,1.7) .. controls (2.5,2.3)  .. (3.5,2.3) node[midway, above right, black] {BABA};
		\draw [blue,thick] (2,1.7) .. controls (2.5,1.5) and (3,1.5) .. (3.2,1.5) node[midway, above, black] {ABA};
		\end{tikzpicture}
		\caption{The suffix tree of the string B\textcolor{blue}{B}BA\textcolor{green}{B}\textcolor{red}{A}, coloured corresponding to attractor positions $\textcolor{blue}{2},\textcolor{green}{5},\textcolor{red}{6}$. This attractor is minimal: removing any position leaves some edge uncoloured.}
	\end{subfigure}	
	\hspace{0.05\textwidth}
	\begin{subfigure}{.47\textwidth}
		\begin{tikzpicture}
		\tikzstyle{vertex}=[circle,fill, draw=black,minimum size=3pt,inner sep=0pt]
		\tikzstyle{leaf}=[circle,fill, draw=black,minimum size=3pt,inner sep=0pt]
		
		\node[vertex] (r) at (0,0) {};
		\node[vertex] (b1) at (1,1) {};
		\node[vertex] (a1) at (0.6,-0.3) {};
		\node[leaf] (l1) at (1.5,-0.5) {};
		\node[vertex] (b2) at (2,1.7) {};
		\node[vertex] (a2) at (1.6,0.7) {};
		\node[leaf] (l2) at (2.5,0.5) {};
		\node[leaf] (l3) at (3.2,1.5) {};
		\node[leaf] (l4) at (3.5,2.3) {};
		
		\draw [blue,thick] (0,0) .. controls (0.5,0.6)  .. (1,1) node[midway, above, black] {B};
		\path [green,thick] (0,0) edge [bend right] node[above,black] {A} (0.6,-0.3); 
		\draw [green,thick] (0.6,-0.3) .. controls (1,-0.3) .. (1.5,-0.5) node[midway, above, black] {BA};
		
		\draw [blue,thick] (1,1) .. controls (1.5,1.3)  .. (2,1.7) node[midway, above, black] {B};
		\path [green,thick] (1,1) edge [bend right] node[above right,black] {A} (1.6,0.7); 
		\path [blue,thick, dashed] (1,1) edge [bend right] (1.6,0.7); 
		\draw [green,thick] (1.6,0.7) .. controls (2,0.7) .. (2.5,0.5) node[midway, above, black] {BA};
		\draw [blue,dashed,thick] (1.6,0.7) .. controls (2,0.7) .. (2.5,0.5);
		\draw [blue,thick] (2,1.7) .. controls (2.5,2.3)  .. (3.5,2.3) node[midway, above right, black] {BABA};
		\draw [blue,thick] (2,1.7) .. controls (2.5,1.5) and (3,1.5) .. (3.2,1.5) node[midway, above, black] {ABA};
		\draw [green,dashed,thick] (2,1.7) .. controls (2.5,1.5) and (3,1.5) .. (3.2,1.5);
		\end{tikzpicture}
		\caption{The suffix tree of the string BB\textcolor{blue}{B}\textcolor{green}{A}BA, coloured corresponding to attractor positions $\textcolor{blue}{3},\textcolor{green}{4}$. This attractor is minimum: it is minimal and of minimum size.}
	\end{subfigure}
	\caption{
		A position $i$ ``marks'' (or, here, \emph{colours}) a suffix tree edge $e$ if and only if it crosses an occurrence of the string read from the root to the first letter in the label of $e$.
		A set of positions forms a $k$-attractor if and only if they colour all edges of the $k$-truncated suffix tree (in this figure, $k=6$ and we colour the whole suffix tree).
		Dashed lines indicate that the edge has multiple colours.
		The string terminator $\$$ (and edges labeled with $\$$) is omitted for simplicity.}\label{fig:st}
\end{figure}

Using the reduction of Theorem \ref{th:main}, we obtain the following results. 
First, we present efficient algorithms to check the validity and minimality of a $k$-attractor. Note that it is trivial to perform these checks in cubic time (or quadratic with little more effort). In Theorem \ref{th:checking} we show that we can check whether a set $\Gamma\subseteq [1..n]$ is a valid $k$-attractor for $S$ in $\bigO(n)$ time and  $\bigO(n)$ words of space. Using recent advances in compact data structures, we show how to further reduce this working space to $\bigO(n\log\sigma)$ bits without affecting query times when $k\leq \sigma^{\bigO(1)}$, or with a small time penalty in the general case. In particular, when $k$ is constant we can always check the correctness of a $k$-attractor in $\bigO(n)$ time and $\bigO(n\log\sigma)$ bits of space.
With similar techniques, in Theorem \ref{theorem:minimality check} we show how to verify that  $\Gamma$ is a \emph{minimal} $k$-attractor for $S$ in near-optimal $\bigO(n\log n)$ time. 

In Theorem \ref{th:reporting} we  show that the  structure used in Theorems \ref{th:checking} and \ref{theorem:minimality check} can be augmented (using a recent result on weighted ancestors on suffix trees~\cite{gawrychowski2014weighted}) to support reporting all occurrences of a substring straddling a position in $\Gamma$ in optimal time.

We then focus on optimization problems. In Theorem \ref{th:minimum} we show that a minimum $k$-attractor
can be found in $\bigO(n) + exp\big(\bigO(\sigma^k\log\sigma^k)\big)$ time. Similarly, in Theorem \ref{theorem:find minimal} we show that a minimal $k$-attractor
can be found in $\bigO(\min\{n^2,nk^2, n+\sigma^{2k-1} k^2\})$ expected time. 
With Theorem \ref{theorem:minimal approx rate} we show that minimal $k$-attractors are within a  factor of $k$ from the optimum, therefore proving that Theorem \ref{theorem:find minimal} actually yields an approximation algorithm.
In Theorem \ref{theorem:approx} we show that within the same time  of Theorem \ref{theorem:find minimal} we can compute an exponentially-better $\bigO(\log k)$-approximation.

Theorems \ref{th:minimum}, \ref{theorem:find minimal}, and  \ref{theorem:approx}  yield the following corollaries:

\begin{corollary}\label{cor:k-attr P}
	$k$-attractor is in P when $\sigma^k \log\sigma^k \in \bigO(\log n)$.
\end{corollary}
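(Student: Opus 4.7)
The plan is to invoke Theorem~\ref{th:minimum} directly: it gives an exact algorithm for $k$-attractor running in time $\bigO(n) + \exp(\bigO(\sigma^k\log\sigma^k))$. Under the hypothesis $\sigma^k\log\sigma^k \in \bigO(\log n)$, the exponential term becomes $\exp(\bigO(\log n)) = n^{\bigO(1)}$. Adding the $\bigO(n)$ term, the overall running time is polynomial in $n$, placing the decision version of $k$-attractor in P.

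Concretely, I would argue as follows. By the assumption there is a constant $c>0$ with $\sigma^k\log\sigma^k \le c\log n$ for all sufficiently large $n$. Plugging into the bound of Theorem~\ref{th:minimum} yields total time at most $\bigO(n) + 2^{O(c\log n)} = \bigO(n) + n^{O(c)}$, which is polynomial in $n$ as $c$ is a constant. Since Theorem~\ref{th:minimum} produces an optimal $k$-attractor, one can in particular decide in polynomial time whether a $k$-attractor of size at most $t$ exists, for any given $t$, by comparing $t$ with the size of the optimum. Hence $k$-attractor is in P in this regime.

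There is essentially no technical obstacle: the statement is a direct quantitative consequence of Theorem~\ref{th:minimum}. The only point worth being careful about is that $\sigma$ and $k$ are allowed to depend on $n$, so ``in P'' must be read with respect to $n$ (the length of the input string, which is also an upper bound on the bit-size of the input under our alphabet normalization $\sigma\le n$); the hypothesis $\sigma^k\log\sigma^k\in\bigO(\log n)$ precisely ensures that the parameter-dependent blow-up in Theorem~\ref{th:minimum} collapses into a polynomial factor in $n$.
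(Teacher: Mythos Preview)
Your proposal is correct and follows precisely the intended argument: the paper states this corollary without proof because it is an immediate consequence of Theorem~\ref{th:minimum}, and your derivation spells out exactly that implication. There is nothing to add.
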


\begin{corollary}
	For constant $k$, a minimum $k$-attractor can be found in optimal $\bigO(n)$ time when $\sigma^{k+\epsilon} \in \bigO(\log n)$, for any constant $\epsilon>0$.
\end{corollary}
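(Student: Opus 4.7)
The plan is to invoke Theorem~\ref{th:minimum}, which returns a minimum $k$-attractor in time $\bigO(n) + \exp\bigl(\bigO(\sigma^k \log \sigma^k)\bigr)$, and to verify that the hypothesis $\sigma^{k+\epsilon} \in \bigO(\log n)$ with constant $k$ and $\epsilon > 0$ forces the exponential summand to be $\bigO(n)$, so that the overall running time collapses to the optimal $\bigO(n)$.

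The argument is then purely asymptotic bookkeeping. From $\sigma^{k+\epsilon} \leq c \log n$ for some constant $c$, I would raise both sides to the power $k/(k+\epsilon) < 1$ to obtain $\sigma^k \leq (c \log n)^{\alpha}$, where $\alpha := k/(k+\epsilon)$ is a constant strictly less than $1$. Separately, taking a logarithm of the same hypothesis gives $\log \sigma^k = k \log \sigma = \bigO(\log \log n)$. Multiplying these two bounds yields the key inequality
\[
\sigma^k \log \sigma^k \;=\; \bigO\bigl((\log n)^{\alpha} \log \log n\bigr).
\]

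To finish, I would show $(\log n)^{\alpha} \log \log n = \bigO(\log n)$, equivalently $(\log n)^{\alpha - 1} \log \log n = o(1)$, which is immediate because $\alpha - 1 = -\epsilon/(k+\epsilon)$ is a negative constant and polynomial decay in $\log n$ absorbs the $\log \log n$ factor. Exponentiating this into the running time of Theorem~\ref{th:minimum} gives $\exp\bigl(\bigO(\sigma^k \log \sigma^k)\bigr) = \bigO(n)$, as required. Optimality follows from the trivial $\Omega(n)$ lower bound of reading the input.

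I do not anticipate any genuine obstacle: the proof essentially checks that the $+\epsilon$ slack in the hypothesis is exactly what is needed to absorb the $\log \sigma^k$ factor appearing in the exponent of the runtime of Theorem~\ref{th:minimum}. The only care needed is to keep $k$ and $\epsilon$ fixed as constants throughout, so that $\alpha$ itself is a constant strictly less than $1$ and the polynomial-versus-polylogarithmic comparison goes through uniformly in $n$.
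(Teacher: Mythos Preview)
Your proposal is correct and follows essentially the same approach as the paper: both invoke Theorem~\ref{th:minimum} and verify that the hypothesis forces $\sigma^k \log \sigma^k \in o(\log n)$, so that the exponential term becomes $\bigO(n)$. The only difference is cosmetic bookkeeping: the paper bounds $\sigma^k \log \sigma^k \in o\bigl((\sigma^k)^{1+\epsilon'}\bigr) = o(\sigma^{k+\epsilon}) \subseteq o(\log n)$ directly via $\log x = o(x^{\epsilon'})$, whereas you pass through explicit bounds in terms of $(\log n)^{\alpha}\log\log n$ with $\alpha = k/(k+\epsilon) < 1$; both routes are routine and yield the same conclusion.
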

\begin{proof}
	Pick any constant $\epsilon>0$. Then, $\sigma^{k+\epsilon} = \sigma^{k(1+\epsilon')}$, where $\epsilon'=\epsilon/k > 0$ is a constant. On the other hand, for any constant $\epsilon'>0$ we have that $\sigma^k \log\sigma^k \in o(\sigma^k\cdot(\sigma^k)^{\epsilon'}) = o(\sigma^{k(1+\epsilon')})$. It follows that $\sigma^k \log\sigma^k \in o(\log n)$, i.e. by Theorem \ref{th:minimum} we can find a minimum $k$-attractor in linear time. 
\end{proof}

With our new results we can, for example (keep in mind that $3$-attractor  is NP-complete for general alphabets): 

\begin{itemize}
	\item Find a minimum $3$-attractor in $\bigO(n)$ time when $\sigma\in \bigO(\sqrt[3+\epsilon]{\log n})$, for any $\epsilon>0$.
	\item Find a minimal $3$-attractor in $\bigO(n)$ expected time.  This, by Theorem \ref{theorem:minimal approx rate}, is a $3$-approximation to the minimum.
	\item Find a $2.45$-approximation of the minimum $3$-attractor in $\bigO(n)$ expected time.
\end{itemize}

To conclude, in Section \ref{section:sharp} we study the \emph{sharp-$k$-attractor} problem: to find  a smallest set of positions covering all substrings of length exactly $k$. In Theorem \ref{th:sharp-attractor-npc} we show that the problem is NP-complete for constant $k\geq 3$, while in Theorem \ref{thm:2-sharp-attractor} we give a polynomial-time algorithm for the case $k=2$.

\section{A better reduction to set-cover}\label{sec:characterizations}

In this section we give our main result: a smaller reduction from $k$-attractor to set-cover. We start with an alternative characterization of $k$-attractors based on the $k$-truncated suffix tree.

\begin{definition}[Marker]\label{def:marker}
	$j\in\Gamma$ is a \emph{marker} for a suffix tree edge $e$ if and only if 
	$$\exists i\in SA[l_e..r_e]\ :\ i \leq j < i+\lambda(e)$$
	Equivalently, we say that $j$ \emph{marks} $e$.
\end{definition}

\begin{definition}[Edge marking]\label{def:edge marking}
	$\Gamma \subseteq [1..n] $ \emph{marks} a suffix tree edge $e$ if and only if there exists a $j\in \Gamma$ that marks $e$.
\end{definition}

\begin{definition}[Suffix tree $k$-marking]
	$\Gamma \subseteq [1..n]$ is a \emph{suffix tree $k$-marking} if and only if it marks every edge $e$ such that $\lambda(e)\leq k$ (equivalently, every $e\in\mathcal E(ST^k(S))$). 
\end{definition}

When $k=n$ we simply say \emph{suffix tree marking} (since all edges satisfy $\lambda(e)\leq n$).
We now show that the notions of $k$-attractor and suffix tree $k$-marking are equivalent.

\begin{lemma}\label{theorem:k-attr characterization}
	$\Gamma$ is a $k$-attractor if and only if it is a suffix tree $k$-marking.
\end{lemma}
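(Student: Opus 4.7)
The plan is to reduce the equivalence to a single correspondence: each distinct substring of $S$ of length at most $k$ lies along exactly one edge $e$ of $ST^k(S)$, and every such substring has length at least $\lambda(e)$ and occurs at precisely the starting positions $SA[l_e..r_e]$. Given this, a single marker $j$ satisfying $i\leq j<i+\lambda(e)$ for some $i\in SA[l_e..r_e]$ will automatically lie inside the occurrence $S[i..i+\ell-1]$ of every substring of length $\ell\geq\lambda(e)$ attached to $e$, since $i\leq j<i+\lambda(e)\leq i+\ell$. I would state this observation as a one-line lemma before proving the two implications.

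For the direction ($\Rightarrow$) I would fix an arbitrary edge $e\in\mathcal{E}(ST^k(S))$ and apply Definition~\ref{def: string k-attractor} to the substring $s(e)$, whose length is $\lambda(e)\leq k$. This yields an occurrence $S[i..i+\lambda(e)-1]=s(e)$ with $i\leq j\leq i+\lambda(e)-1$ for some $j\in\Gamma$. Since $s(e)$ occurs starting at $i$, we have $i\in SA[l_e..r_e]$, and the inequality $i\leq j<i+\lambda(e)$ witnesses that $j$ marks $e$ according to Definition~\ref{def:marker}.

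For the direction ($\Leftarrow$) I would pick any substring $w=S[a..b]$ with $b-a+1=\ell\leq k$ and locate the unique edge $e\in\mathcal{E}(ST^k(S))$ on which $w$ lies (the edge whose label, concatenated with the path from the root, has $w$ as a prefix of length $\geq\lambda(e)$). By the $k$-marking hypothesis some $j\in\Gamma$ marks $e$, so there exists $i\in SA[l_e..r_e]$ with $i\leq j<i+\lambda(e)$. By the observation from the first paragraph, $S[i..i+\ell-1]=w$ (all substrings sitting along $e$ share the occurrence set of $s(e)$), and $i\leq j<i+\lambda(e)\leq i+\ell$ shows that this occurrence of $w$ straddles $j$, verifying the $k$-attractor condition.

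There is no significant obstacle; the only point that deserves care is the observation underlying both directions, namely that every non-empty substring of $S$ of length $\leq k$ corresponds to a unique ``locus'' on some edge $e$ of $ST^k(S)$, has length at least $\lambda(e)$, and has occurrence set exactly $SA[l_e..r_e]$. This is a standard suffix-tree fact but it is what makes the two definitions align so cleanly, so I would spell it out explicitly rather than invoke it tacitly.
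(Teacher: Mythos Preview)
Your proof is correct and follows essentially the same approach as the paper: both directions hinge on the fact that for any substring $w$ of length at most $k$, the deepest edge $e$ with $s(e)$ a prefix of $w$ satisfies $\lambda(e)\leq|w|$ and has exactly the same set of starting positions $SA[l_e..r_e]$ as $w$ itself. The paper states this inline during the $(\Leftarrow)$ direction rather than as a separate preliminary observation, but the argument is otherwise identical.
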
	
\begin{proof}
	$(\Rightarrow)$ Let $\Gamma$ be a $k$-attractor. Pick any suffix tree edge $e$ such that $\lambda(e)\leq k$. Then, $\lambda(e) = |s(e)| \leq k$ and, by definition of $k$-attractor, there exists a $j\in \Gamma$ and a $i$ such that $s(e) = S[i..i+|s(e)|-1]$ and $i \leq j \leq i+|s(e)|-1$. We also have that $i \in SA[l_e..r_e]$ (being $\langle l_e,r_e\rangle$ precisely the suffix array range of suffixes prefixed by $s(e)$). Putting these results together, we found a $i \in SA[l_e..r_e]$ such that $i \leq j \leq i+\lambda(e)-1$ for some $j\in \Gamma$, which by Definition \ref{def:edge marking} means that $\Gamma$ marks $e$. Since the argument works for any edge $e$ at string depth at most $k$, we obtain that $\Gamma$ is a suffix tree $k$-marking.
	
	$(\Leftarrow)$ Let $\Gamma$ be a suffix tree $k$-marking. Let moreover $s$ be a substring of $S$ of length at most $k$. Consider the lowest suffix tree edge $e$ (i.e. with maximum $\lambda(e)$) such that $s(e)$ prefixes $s$. In particular, $\lambda(e)\leq k$.
	Note that, by definition of suffix tree, every occurrence $S[i..i+|s(e)|-1] = s(e)$ of $s(e)$ in $S$ prefixes an occurrence of $s$: $S[i..i+|s|-1] = s$. By definition of $k$-marking, there exists a $j\in \Gamma$ such that $j$ is a marker for $e$, which  means (by Definition \ref{def:marker}) that $\exists i\in SA[l_e..r_e]\ :\ i \leq j < i+\lambda(e)$. Since $i\in SA[l_e..r_e]$, $SA[i]$ is an occurrence of $s(e)$, and therefore of $s$. But then, we have that $i \leq j < i+\lambda(e) = i+|s(e)| \leq i+|s|$, i.e. $S[SA[i]..SA[i]+|s|-1]$ is an occurrence of $s$ straddling $j\in \Gamma$. Since the argument works for every substring $s$ of $S$ of length at most $k$, we obtain that $\Gamma$ is a $k$-attractor.	
\end{proof}

An equivalent formulation of Lemma \ref{theorem:k-attr characterization} is that $\Gamma$ is a $k$-attractor if and only if it marks all edges of the $k$-truncated suffix tree.
In particular (case $k=n$),  $\Gamma$ is an attractor if and only if it is a suffix tree marking.

Lemma \ref{theorem:k-attr characterization} will be used to obtain a smaller universe $\mathcal U$ in our set-cover reduction.
With the following Lemmas we show that also the size of the set collection $\mathcal S$ can be considerably reduced when $k$ and $\sigma$ are small.

\begin{definition}[$k$-equivalence]
	Two positions $i,j\in [1..n]$ are $k$-equivalent, indicated as $i\equiv _k j$, if and only if
	$$
	S'[ i-k+1..i+k-1 ] = S'[ j-k+1..j+k-1 ]
	$$
	where $S'[i] = \#$ if $i<1$ or $i>n$ (note that we allow negative positions) and $S'[i] = S[i]$ otherwise, and $\#\notin \Sigma$ is a new character.
\end{definition}

It is easy to see that $k$-equivalence is an equivalence relation. First, we bound the size of the distinct equivalence classes of $\equiv _k$ (i.e. the size of the quotient set $[1..n]/\equiv _k$).

\begin{lemma}\label{lemma:size}
	$|[1..n]/\equiv _k| = |\mathcal L(ST^{2k-1}(S'))| \leq \min\{n,\sigma^{2k-1} + 2k-2\}$
\end{lemma}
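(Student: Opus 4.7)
The plan is to recognise each equivalence class with a length-$(2k-1)$ string, namely the window $S'[i-k+1\mathinner{.\,.}i+k-1]$ centred at any representative $i$, and then to count these strings in two different ways.

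First, I would show the equality. By definition of $\equiv_k$, the map $i\mapsto S'[i-k+1\mathinner{.\,.}i+k-1]$ from $[1..n]$ to the set of strings of length $2k-1$ over $\Sigma\cup\{\#\}$ is constant on each equivalence class and injective modulo $\equiv_k$. Hence $|[1..n]/\equiv_k|$ is exactly the cardinality of the image, i.e.\ the number of \emph{distinct} length-$(2k-1)$ windows of this form. Reindexing $S'=\#^{k-1}S\#^{k-1}$ so that $S[i]$ sits at position $i+k-1$, these windows are precisely the length-$(2k-1)$ substrings of $S'$ starting at positions $1,2,\dots,n$, which (since $|S'|=n+2k-2$) exhausts all length-$(2k-1)$ substrings of $S'$. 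Distinct length-$(2k-1)$ substrings of $S'$ are in bijection with the leaves of $ST^{2k-1}(S')$ at maximum string depth $2k-1$, giving $|[1..n]/\equiv_k|=|\mathcal L(ST^{2k-1}(S'))|$.

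For the upper bound, the inequality $|[1..n]/\equiv_k|\le n$ is immediate since the quotient of an $n$-element set has at most $n$ classes. For the other bound, I would partition the equivalence classes according to whether their window contains the symbol $\#$. A window $S'[i-k+1\mathinner{.\,.}i+k-1]$ is $\#$-free iff $i-k+1\ge 1$ and $i+k-1\le n$, in which case it is a string of length $2k-1$ over $\Sigma$, and there are at most $\sigma^{2k-1}$ such strings. A window that does contain a $\#$ must be centred at a position $i\in[1..k-1]\cup[n-k+2..n]$, and this set has size at most $2k-2$, so it contributes at most $2k-2$ further classes. Adding the two contributions gives $|[1..n]/\equiv_k|\le \sigma^{2k-1}+2k-2$.

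Neither step looks technically hard; the only point requiring a moment of care is the indexing shift and the verification that every length-$(2k-1)$ substring of $S'$ really arises as some window centred at a position in $[1..n]$ (so that the count of distinct windows matches the number of leaves at depth $2k-1$ in $ST^{2k-1}(S')$, rather than just a lower bound).
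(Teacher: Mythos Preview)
Your proof is correct and follows essentially the same approach as the paper: identify each $\equiv_k$-class with a length-$(2k-1)$ substring of $S'$, note that these are exactly the leaves at maximum depth in $ST^{2k-1}(S')$, and then split the count into at most $\sigma^{2k-1}$ $\#$-free windows plus at most $2k-2$ boundary windows. Your treatment is in fact more careful about the indexing shift and about why every length-$(2k-1)$ substring of $S'$ arises as some centred window, points the paper leaves implicit.
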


We now show that any minimal $k$-attractor can have at most one element from each equivalence class of $\equiv_k$. 

\begin{lemma}\label{lemma:equiv_classes}
	If $\Gamma$ is a minimal $k$-attractor, then for any $1\leq i \leq n$ it holds $|\Gamma \cap [i]_{\equiv_k}| \leq 1$.
\end{lemma}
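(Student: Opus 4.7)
The plan is to argue by contradiction. Suppose $\Gamma$ is a minimal $k$-attractor yet contains two distinct positions $i \neq j$ with $i \equiv_k j$. I will show that $\Gamma \setminus \{j\}$ is still a $k$-attractor, contradicting minimality of $\Gamma$. The intuitive picture is that the window $S'[i-k+1..i+k-1]$ around $i$ is character-by-character identical to the window $S'[j-k+1..j+k-1]$ around $j$, so $i$ and $j$ ``see'' exactly the same short substrings; in particular, anything $j$ uniquely covers must also be covered by $i$.

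More concretely, the first step is to establish the following transfer property: for every occurrence $S[a..b]$ of a substring of length $\ell = b-a+1 \leq k$ with $a \leq j \leq b$, there is an occurrence of the same substring straddling $i$, namely $S[a+(i-j)..b+(i-j)]$. To prove this, observe that $a \geq j-k+1$ and $b \leq j+k-1$, so $S[a..b] = S'[a..b]$ sits inside the window $S'[j-k+1..j+k-1]$ at offset $a-(j-k+1)$. By $k$-equivalence, the symbols at the same offset inside $S'[i-k+1..i+k-1]$ form an identical string, i.e.\ $S'[a+(i-j)..b+(i-j)] = S[a..b]$. Since $S[a..b]$ contains no occurrence of the fresh symbol $\#$, neither does this copy, so both endpoints lie in $[1..n]$ and it is a genuine substring of $S$; it clearly straddles $i$ because $a \leq j \leq b$ implies $a+(i-j) \leq i \leq b+(i-j)$.

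Given this transfer property, the second step is immediate: take any substring $w$ of $S$ with $|w| \leq k$. Because $\Gamma$ is a $k$-attractor, some $\gamma \in \Gamma$ lies inside an occurrence of $w$. If $\gamma \neq j$ we are done. If $\gamma = j$, apply the transfer property to obtain an occurrence of $w$ straddling $i \in \Gamma \setminus \{j\}$. Either way $\Gamma \setminus \{j\}$ covers $w$, so $\Gamma \setminus \{j\}$ is a $k$-attractor, contradicting the minimality of $\Gamma$.

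The only subtle point is making the indexing in the transfer step fully rigorous, in particular the use of the $\#$-padding to justify that the shifted occurrence really lies inside $[1..n]$; otherwise the argument is a direct unfolding of the definitions and could alternatively be phrased through Lemma~\ref{theorem:k-attr characterization} by noting that $i \equiv_k j$ forces $i$ and $j$ to mark exactly the same set of edges of $ST^k(S)$, so any $j$ in a minimal attractor whose twin $i$ is also present could be dropped.
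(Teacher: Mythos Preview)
Your proof is correct and follows essentially the same approach as the paper: both argue by contradiction, pick two $k$-equivalent positions in $\Gamma$, use the equality of the length-$(2k-1)$ windows to transfer any short substring occurrence straddling one position to an occurrence straddling the other, and conclude that one of the two positions can be dropped, contradicting minimality. Your write-up is in fact more careful than the paper's about the $\#$-padding detail ensuring the shifted occurrence lies inside $[1..n]$.
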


Moreover, if we swap any element of a $k$-attractor with an equivalent element then the resulting set is still a $k$-attractor:

\begin{lemma}\label{lemma:swap}
	Let $\Gamma$ be a $k$-attractor. Then, $(\Gamma - \{j\}) \cup \{j'\}$ is a $k$-attractor for any $j\in \Gamma$ and any $j'\equiv_k j$.
\end{lemma}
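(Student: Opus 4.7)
The plan is to reduce the claim to the edge-marking characterization of Lemma \ref{theorem:k-attr characterization}, and then show a position-by-position translation argument: the occurrence of $s(e)$ witnessing that $j$ marks $e$ can be shifted to give an occurrence witnessing that $j'$ marks $e$, using only the fact that $j$ and $j'$ share a local window of radius $k-1$ in $S'$.

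First I would apply Lemma \ref{theorem:k-attr characterization} to rephrase the goal: it suffices to prove that $\Gamma'=(\Gamma-\{j\})\cup\{j'\}$ marks every edge $e\in\mathcal{E}(ST^k(S))$. Any such edge already marked by some position of $\Gamma-\{j\}$ is still marked in $\Gamma'$, so the content of the lemma is the following local claim: \emph{if $j$ marks $e$ and $j\equiv_k j'$, then $j'$ marks $e$.} Fix such an edge $e$ with $\ell=\lambda(e)\leq k$, and let $i\in SA[l_e..r_e]$ be a suffix such that $i\leq j < i+\ell$. Set $d=j-i$, so $0\leq d\leq \ell-1\leq k-1$, and $S[j-d..j-d+\ell-1]=s(e)$.

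Next I would transport this occurrence to $j'$ by letting $i'=j'-d$ and verifying that $S[i'..i'+\ell-1]=s(e)$. The interval $[j-d,\ j-d+\ell-1]$ is contained in $[j-k+1,\ j+k-1]$ (since $0\leq d\leq k-1$ and $\ell\leq k$), and symmetrically $[i',\ i'+\ell-1]\subseteq[j'-k+1,\ j'+k-1]$. The definition of $\equiv_k$ gives $S'[j-k+1..j+k-1]=S'[j'-k+1..j'+k-1]$, so aligning the two windows yields $S'[i'..i'+\ell-1]=S'[j-d..j-d+\ell-1]=s(e)$. Because $s(e)$ is a substring of $S$, it contains no $\#$, hence the window around $j'$ also contains no $\#$; this forces $1\leq i'$ and $i'+\ell-1\leq n$, so $i'\in[1..n]$ is a genuine starting position in $S$ with $S[i'..i'+\ell-1]=s(e)$. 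Consequently $i'\in SA[l_e..r_e]$ and $i'\leq j'< i'+\ell$, meaning $j'$ marks $e$ by Definition \ref{def:marker}.

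Finally, combining the edge-by-edge argument with Lemma \ref{theorem:k-attr characterization} yields that $\Gamma'$ is a suffix-tree $k$-marking, hence a $k$-attractor. The main obstacle to be careful about is the boundary behaviour: the $\#$ padding in $S'$ is what makes $\equiv_k$ well-defined near the ends of $S$, and one must check that the translated window around $j'$ stays within $[1..n]$. The clean way to handle this is the observation above, namely that the matched substring contains no $\#$ precisely because $s(e)\in\Sigma^{\ell}$; everything else is a routine index calculation.
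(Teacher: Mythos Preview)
Your proof is correct. The core idea---translate the occurrence straddling $j$ by the offset $j'-j$, using that the two length-$(2k-1)$ windows agree---is exactly the paper's argument, but the paper applies it directly to Definition~\ref{def: string k-attractor} in three sentences, without passing through the edge-marking characterization of Lemma~\ref{theorem:k-attr characterization}. Your detour through edges is harmless (and makes the indices explicit), and you are more careful than the paper about the boundary case: you actually justify that the translated occurrence lands inside $[1..n]$ via the observation that $s(e)\in\Sigma^{\ell}$ contains no $\#$, a point the paper's proof leaves implicit.
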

\begin{proof}
	Pick any occurrence of a substring $s$, $|s|\leq k$, straddling position $j$. By definition of $\equiv_k$, since $j'\equiv_k j$ there is also an occurrence of $s$ straddling $j'$. This implies that  $\Gamma' = (\Gamma - \{j\}) \cup \{j'\}$ is a $k$-attractor. 
\end{proof}

Lemmas \ref{lemma:equiv_classes} and \ref{lemma:swap} imply that we can reduce the set of candidate positions from $[1..n]$ to $\mathcal C = \{\min(I)\ :\ I\in [1..n]/\equiv _k\}$ (that is, an arbitrary representative---in this case, the minimum---from any class of $\equiv _k$), and still be able to find a minimal/minimum $k$-attractor. Note that, by Lemma \ref{lemma:size}, $|[1..n]/\equiv _k| \leq \min\{n,\sigma^{2k-1} + 2k-2\}$.

We can now prove our main theorem.

\begin{proof}[Proof of Theorem \ref{th:main}]
	We build our set-cover instance $\langle \mathcal U, \mathcal S\rangle$ as follows. We choose $\mathcal U = \mathcal E(ST^k(S))$, i.e. the set of edges of the $k$-truncated suffix tree. The set collection $\mathcal S$ is defined as follows. Let $s_i = \{e\in\mathcal E(ST^k(S)) \ :\ i\ marks\ e\}$ and $\mathcal C = \{\min(I)\ :\ I\in [1..n]/\equiv _k\}$. Then,  we choose
	$$
	\mathcal S = \{s_i\ :\ i\in \mathcal C\}
	$$
	By the way we defined $\equiv _k$, each $I\in [1..n]/\equiv _k$ is unambiguously identified by a substring of length $2k-1$ of the string $S' = \#^{k-1}S\#^{k-1}$. 
	We therefore obtain $|\mathcal S| = |\mathcal L(ST^{2k-1}(S'))|$. We now prove correctness and completeness of the reduction. 
	
	\textbf{Correctness} By the definition of our reduction, a solution $\{s_{i_1} ,\dots, s_{i_\gamma}\}$ to $\langle \mathcal U, \mathcal S\rangle$ yields a set $\Gamma = \{i_1 ,\dots, i_\gamma\}$ of positions marking every edge in $\mathcal E(ST^k(S))$. 
	Then, Lemma \ref{theorem:k-attr characterization} implies that  $\Gamma$ is a $k$-attractor.
	
	\textbf{Completeness} Let $\Gamma = \{i_1 ,\dots, i_\gamma\}$ be a minimal $k$-attractor. Then, Lemmas \ref{lemma:equiv_classes} and \ref{lemma:swap} imply that the following set is also a minimal $k$-attractor of the same size:  $\Gamma' = \{j_1=min([i_1]_{\equiv_k}) ,\dots, j_\gamma = min([i_\gamma]_{\equiv_k})\}$. 
	Note that $\Gamma' \subseteq \{\min(I)\ :\ I\in [1..n]/\equiv _k\}$.
	By Lemma \ref{theorem:k-attr characterization}, $\Gamma'$ marks every edge in $\mathcal E(ST^k(S))$. Then, by definition of our reduction the collection $\{s_{j_1}, \dots, s_{j_\gamma}\}$ covers $\mathcal U = \mathcal E(ST^k(S))$.
	
\end{proof}

In the rest of the paper, we use the notation $\mathcal U = \mathcal E(ST^k(S))$ and $\mathcal C = \{\min(I)\ :\ I\in [1..n]/\equiv _k\}$ to denote the universe to be covered (edges of the $k$-truncated suffix tree) and the candidate attractor positions, respectively. Recall moreover that $|\mathcal U| \leq  \min\{n,\sigma^k\}$ and $|\mathcal C| \leq \min\{n,\sigma^{2k-1} + 2k-2\}$.

\subsection{Marker graph}

In this section we  introduce a graph that will play a crucial role in our approximation algorithms of Sections \ref{sec:minimal} and \ref{sec:approx}: our  algorithms will take a time linear in the graph's size to compute a solution. Intuitively, this graph represents the inclusion relations of the set-cover instance of Theorem \ref{th:main}.

\begin{definition}[Marker graph]\label{def:marking graph}
	
	Given a positive integer $k$, the \emph{marker graph} $\mathcal G_{S,k}$ of string $S$ is a bipartite undirected graph $\mathcal G_{S,k} = \langle \mathcal C, \mathcal U, E\rangle$, where the set $E \subseteq \mathcal C \times \mathcal U$ of edges is defined as
	$$
	E = \{\langle j,e \rangle\ :\ j\ marks\ e\}
	$$	
\end{definition}

\begin{lemma}\label{lemma:size of G}
	$|\mathcal G_{S,k}| \in \bigO(|E|) \subseteq \bigO(|\mathcal C|\cdot \min\{k^2,|\mathcal U|\})$
\end{lemma}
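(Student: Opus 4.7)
The plan is to observe that $|\mathcal G_{S,k}| = |\mathcal C| + |\mathcal U| + |E|$, so it suffices to establish two things: (a) every vertex of $\mathcal G_{S,k}$ has degree at least one (which gives $|\mathcal C|+|\mathcal U|\le 2|E|$ and hence $|\mathcal G_{S,k}| = \bigO(|E|)$), and (b) $|E| \le |\mathcal C|\cdot \min\{k^2,|\mathcal U|\}$. The second bound splits into a trivial part ($|E|\le |\mathcal C|\cdot|\mathcal U|$, from bipartiteness) and a substantive part ($|E|\le |\mathcal C|\cdot k^2$), which is the real content of the lemma.

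To handle (a), I would argue that any $j\in \mathcal C$ trivially marks the depth-$1$ root-edge of $ST^k(S)$ whose first character is $S[j]$ (such an edge exists since $1\le j\le n$, and it has $\lambda(e)=1$, so $j\in SA[l_e..r_e]$ and $j\le j<j+1$). For an edge $e\in \mathcal U$, pick any occurrence $i$ of $s(e)$ in $S$; it marks $e$ by Definition~\ref{def:marker}. Then the class representative $j'=\min([i]_{\equiv_k}) \in \mathcal C$ also marks $e$ by the same ``occurrences straddle equivalently'' argument used in the proof of Lemma~\ref{lemma:swap}: since $|s(e)|=\lambda(e)\le k$ and $i\equiv_k j'$, any occurrence of $s(e)$ straddling $i$ has a counterpart straddling $j'$. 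Hence no vertex of $\mathcal G_{S,k}$ is isolated.

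The core step is the $k^2$ bound. I would fix $j\in \mathcal C$ and count the edges $e$ it marks. By Definition~\ref{def:marker}, every such edge satisfies $s(e)=S[a..b]$ for some $a,b$ with $a\le j\le b$, $1\le a$, $b\le n$, and $\lambda(e)=b-a+1\le k$. The pair $(a,b)$ ranges over at most $k\cdot k=k^2$ choices, because $a$ lies in a window of size at most $k$ ending at $j$ and $b$ lies in a window of size at most $k$ starting at $j$. Since distinct edges of the suffix tree have distinct strings $s(e)$, the map $e\mapsto s(e)$ is injective on $\mathcal U$, so the degree of $j$ in $\mathcal G_{S,k}$ is bounded by the number of distinct substrings counted above, i.e.\ by $k^2$. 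Summing over $j\in \mathcal C$ and combining with the trivial $|\mathcal U|$ bound yields $|E|\le |\mathcal C|\cdot\min\{k^2,|\mathcal U|\}$. I expect the only subtle point to be cleanly justifying why we count distinct $s(e)$'s rather than occurrences; this is precisely the injectivity of $e\mapsto s(e)$, and once noted the rest is straightforward bookkeeping.
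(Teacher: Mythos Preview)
Your proof is correct and follows the same approach as the paper: the $k^2$ bound on the degree of each $j\in\mathcal C$, obtained by counting substrings of length at most $k$ that straddle a fixed position, is exactly the paper's argument, and the $|\mathcal U|$ bound is the trivial bipartite bound in both. You are in fact more thorough than the paper, which simply asserts $|\mathcal G_{S,k}|\in\bigO(|E|)$ without justification; your no-isolated-vertex argument (part (a)) cleanly fills that gap.
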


\begin{lemma}\label{lemma:build G}
	$\mathcal G_{S,k}$ can be computed in $\bigO(n+|\mathcal G_{S,k}| + k\cdot |\mathcal C|)$ expected time.
\end{lemma}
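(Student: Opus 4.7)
The plan is to build $\mathcal G_{S,k}$ one neighborhood at a time, scanning $j\in\mathcal C$ and for each $j$ extracting its incident edges from the $k$-truncated suffix tree. The key reformulation of Definition~\ref{def:marker} is that $j$ marks an edge $e$ iff $e$ lies on the root-to-$L_p$ path in $ST^k(S)$ for some $p\in[\max(1,j-k+1)..j]$, with $\lambda(e)\geq j-p+1$, where $L_p$ denotes the locus in $ST^k(S)$ of $S[p..\min(n,p+k-1)]$. As preprocessing I would build $ST^k(S)$ together with the loci $L_1,\ldots,L_n$, extract $\mathcal C$ from $ST^{2k-1}(S')$ using Lemma~\ref{lemma:size}, and install a dynamic hash table supporting expected-$\bigO(1)$ lookups and insertions of suffix-tree edges. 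All of this takes $\bigO(n)$ time with standard tools.

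For each $j\in\mathcal C$ I would then maintain an initially empty hash set $M_j$ and iterate $d=1,2,\ldots,k$. In iteration $d$ I set $p=j-d+1$; if $p\notin[1..n]$ I skip it, and otherwise I walk upward from $L_p$ towards the root of $ST^k(S)$, inserting every traversed edge $e$ into $M_j$ and aborting the walk as soon as either $\lambda(e)<d$ or $e$ is already in $M_j$. After the $k$ walks I emit the graph edges $\langle j,e\rangle$ for all $e\in M_j$. Correctness of the walks \emph{without} the duplicate cutoff is immediate from the reformulation above; the delicate point is to show that the duplicate cutoff never misses a marked edge.

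The invariant I would prove by induction on $d$ is: at the end of iteration $d$, for every $e\in M_j$, all ancestor edges $a$ of $e$ in $ST^k(S)$ with $\lambda(a)\geq d$ also belong to $M_j$. Assuming this invariant, if the walk at iteration $d$ halts at some $e^{\ast}\in M_j$ then $\lambda(e^{\ast})\geq d$, and the ancestors of $e^{\ast}$ that still need inserting (those with $\lambda\geq d$) were already placed into $M_j$ in an earlier iteration $d'<d$ in which $e^{\ast}$ itself was first inserted; hence halting is safe. The inductive step exploits the uniqueness of ancestors in a tree: the edges above $e^{\ast}$ on the current root-to-$L_p$ path coincide with the edges above $e^{\ast}$ on the root-to-$L_{p'}$ path from the earlier iteration, and the stricter threshold $d>d'$ guarantees that every ancestor demanded now was already demanded and inserted then.

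For the running time, each of the at most $k\cdot|\mathcal C|$ walks spends $\bigO(1)$ on startup and on the final (aborting) step, and every intermediate step inserts a fresh edge into $M_j$, which is charged to the final size $|\mathcal G_{S,k}|$. Combining with the $\bigO(n)$ preprocessing and the expected-$\bigO(1)$ hash operations gives the claimed bound $\bigO(n+|\mathcal G_{S,k}|+k\cdot|\mathcal C|)$. The hardest parts of the argument will be (i) nailing down the duplicate-cutoff invariant, especially in the iterations where the walk aborts because $\lambda<d$ (which leaves part of the path unvisited but not yet in $M_j$, so care is needed not to invoke the invariant where it does not apply), and (ii) cleanly handling positions $p$ near the boundaries of $S$, which is precisely what the $\#$-padded string $S'$ of Theorem~\ref{th:main} is designed to absorb.
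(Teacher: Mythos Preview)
Your proposal is correct and follows essentially the same approach as the paper: for each $j\in\mathcal C$ and each start position $p=j-d+1$, locate $S[p..p+k-1]$ in $ST^k(S)$ and walk upward inserting edges into a hash set, aborting on either $\lambda<d$ or a duplicate. The paper obtains the starting locus via constant-time weighted-ancestor queries~\cite{gawrychowski2014weighted} rather than precomputing all $L_p$, and its completeness proof uses a direct leftmost-occurrence argument in place of your ancestor-closure invariant, but the algorithm, the charging scheme, and the resulting bound are identical.
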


Putting our bounds together, we obtain:

\begin{corollary}\label{cor:final G size}
	
	Let $g = \min\{ n^2,nk^2,\sigma^{2k-1}k^2 \}$. Then, $\mathcal G_{S,k}$ takes $\bigO(g)$ space and can be built in $\bigO(n+g)$ expected time. 
	
\end{corollary}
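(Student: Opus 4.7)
The plan is to combine Lemmas~\ref{lemma:size of G} and~\ref{lemma:build G} with the cardinality bounds
$|\mathcal U|\le \min\{n,\sigma^k\}$ and $|\mathcal C|\le \min\{n,\sigma^{2k-1}+2k-2\}$
(the first follows from the fact that $ST^k(S)$ has at most $n$ leaves and at most $\sigma^k$ distinct root-to-edge labels; the second is Lemma~\ref{lemma:size}). The space bound is then obtained by case analysis on which term in the outer minimum in $g$ is smaller, using Lemma~\ref{lemma:size of G}:
$$
|\mathcal G_{S,k}|\in\bigO(|\mathcal C|\cdot\min\{k^2,|\mathcal U|\}).
$$

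First I would handle the three cases separately. If we bound $|\mathcal C|\le n$ and $\min\{k^2,|\mathcal U|\}\le|\mathcal U|\le n$, we obtain $|\mathcal G_{S,k}|\in\bigO(n^2)$. Bounding instead $|\mathcal C|\le n$ and $\min\{k^2,|\mathcal U|\}\le k^2$ yields $\bigO(nk^2)$. Finally, using $|\mathcal C|\le \sigma^{2k-1}+2k-2$ together with $\min\{k^2,|\mathcal U|\}\le k^2$ gives $\bigO((\sigma^{2k-1}+2k-2)k^2)$, which I would simplify to $\bigO(\sigma^{2k-1}k^2)$ by observing that $k\le \sigma^{2k-1}$ for every $\sigma\ge 2$ and $k\ge 1$ (the case $\sigma=1$ is degenerate). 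Taking the minimum over the three cases yields the claimed $\bigO(g)$ space bound.

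For the running time I would substitute these bounds into the expression $\bigO(n+|\mathcal G_{S,k}|+k\cdot|\mathcal C|)$ from Lemma~\ref{lemma:build G} and verify in each of the three cases that the term $k\cdot|\mathcal C|$ is dominated by the corresponding bound on $|\mathcal G_{S,k}|$. Concretely, $k\cdot|\mathcal C|\le kn\le nk^2$ (which is also $\le n^2$ since $k\le n$), so the term is absorbed in the first two cases; in the third, $k\cdot|\mathcal C|\in\bigO(\sigma^{2k-1}k+k^2)\subseteq\bigO(\sigma^{2k-1}k^2)$ by the same inequality $k\le\sigma^{2k-1}$ used above. The additive $\bigO(n)$ from reading the input then accounts for the remaining term in $\bigO(n+g)$.

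The only mildly delicate point is the asymptotic simplification $(\sigma^{2k-1}+2k-2)k^2+k\cdot(\sigma^{2k-1}+2k-2)\in\bigO(\sigma^{2k-1}k^2)$, i.e.\ ensuring that the lower-order additive contributions in $k$ never escape the third term of the minimum. This is the only place where a small inequality must be checked explicitly; everything else is a mechanical substitution of the bounds already established in Lemmas~\ref{lemma:size}, \ref{lemma:size of G}, and~\ref{lemma:build G}.
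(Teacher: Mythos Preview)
Your proposal is correct and follows essentially the same approach as the paper: both combine Lemma~\ref{lemma:size of G} and Lemma~\ref{lemma:build G} with the cardinality bounds $|\mathcal U|\le\min\{n,\sigma^k\}$ and $|\mathcal C|\le\min\{n,\sigma^{2k-1}+2k-2\}$, then simplify using $k\in\bigO(\sigma^{2k-1})$ (the paper also invokes $k^2\in\bigO(\sigma^k)$). Your explicit three-way case analysis is slightly more verbose than the paper's compressed product-of-minima argument, but the content is identical.
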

\begin{proof}
	From Lemma \ref{lemma:size of G}, $|\mathcal G_{S,k}|\in \bigO(|\mathcal C|\cdot \min\{k^2,|\mathcal U|\})$. By Theorem \ref{th:main}, this space is $\bigO(\min\{n,\sigma^{2k-1} + 2k-2\}\cdot \min\{k^2,\min\{n, \sigma^k\}\})$. Since $k^2\in \bigO(\sigma^k)$ and $k\in \bigO(\sigma^{2k-1})$, this space simplifies to $\bigO(g)$.
	
	Finally, note that $k\cdot |\mathcal C| \in \bigO(k\cdot \min\{n,\sigma^{2k-1} + 2k-2\}) \subseteq \bigO(g)$, so the running time of Lemma \ref{lemma:build G} is $\bigO(n+g)$.
\end{proof}

\section{Faster algorithms}\label{sec: faster algorithms}

In this section we use properties of our reduction to provide faster algorithms for a range of problems: (i) checking that a given set $\Gamma\subseteq [1..n]$ is a $k$-attractor, (ii) checking that a given set is a \emph{minimal} $k$-attractor, (iii) finding a minimum $k$-attractor, (iv) finding a minimal $k$-attractor, and (v) approximate a minimum $k$-attractor. We note that problems (i)-(ii) admit naive cubic solutions, while problem (iii) is NP-hard for $k\geq 3$~\cite{kempa2017roots}.

\subsection{Checking the attractor property}

Given a string $S$, a set $\Gamma\subseteq [1..n]$, and an integer $k\geq 1$, is $\Gamma$ a $k$-attractor for $S$? we show that this question can be answered in $\bigO(n)$ time.

The main idea is to use Lemma \ref{theorem:k-attr characterization} and check, for every suffix tree edge $e$ at string depth at most $k$, if $\Gamma$ marks $e$.
Consider the suffix array $SA[1,n]$ of $S$ and the array $D[1,n]$ defined as follows: $D[i] = successor(\Gamma, SA[i])-SA[i]$, where $successor(X,x)$ returns the smallest element larger than or equal to $x$ in the set $X$ (i.e. $D[i]$ is the distance between $SA[i]$ and the element of $\Gamma$ following---and possibly including---$SA[i]$). 
$D$ can be built in linear time and space by creating a bit-vector $B[1,n]$ such that $B[i]=1$ iff $i\in \Gamma$ and pre-processing $B$ for constant-time successor queries~\cite{jacobson1988succinct,clark1998compact}.
We build a range-minimum data structure (RMQ) on $D$ ($\bigO(n)$ bits of space, constant query time~\cite{FH11}). Then for every suffix tree edge $e$ such that $\lambda(e)\leq k$, we check (in constant time) that $\lambda(e) > \min(D[l_e..r_e])$. The following lemma ensures that this is equivalent to checking whether $\Gamma$ marks $e$.

\begin{lemma}\label{lemma:check marking}
	$\lambda(e) > \min(D[l_e..r_e])$ if and only if $\Gamma$ marks $e$.
\end{lemma}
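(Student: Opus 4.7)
The plan is to unfold both sides of the equivalence directly from the definitions of $D$, of ``marker'', and of the suffix array range $[l_e..r_e]$, using the fact that $SA$ restricted to $[l_e..r_e]$ is exactly the set of starting positions in $S$ of the occurrences of $s(e)$.

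First I would prove the forward direction ($\Rightarrow$). Assume $\min(D[l_e..r_e]) < \lambda(e)$ and let $p \in [l_e..r_e]$ be an index realising the minimum. Set $i = SA[p]$ and $j = \mathrm{successor}(\Gamma, i)$; then $j \in \Gamma$, and by the definition of $D$ we have $j - i = D[p] < \lambda(e)$, so $i \leq j < i + \lambda(e)$. Since $p \in [l_e..r_e]$, $i \in SA[l_e..r_e]$, which by Definition~\ref{def:marker} says exactly that $j$ marks $e$; hence $\Gamma$ marks $e$.

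Next I would prove the reverse direction ($\Leftarrow$). Assume $\Gamma$ marks $e$, so by Definitions~\ref{def:marker} and~\ref{def:edge marking} there exist $j \in \Gamma$ and $i \in SA[l_e..r_e]$ with $i \leq j < i + \lambda(e)$. Let $p$ be the unique index with $SA[p] = i$; then $p \in [l_e..r_e]$. Because $j \in \Gamma$ and $j \geq i$, the successor of $i$ in $\Gamma$ satisfies $\mathrm{successor}(\Gamma, i) \leq j$, so
\[
D[p] \;=\; \mathrm{successor}(\Gamma, i) - i \;\leq\; j - i \;<\; \lambda(e),
\]
and therefore $\min(D[l_e..r_e]) \leq D[p] < \lambda(e)$.

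There is no real obstacle here: the statement is essentially a repackaging of the definition of $D$ via the successor function into the geometric condition $i \leq j < i + \lambda(e)$. The only point worth being careful about is the direction of the successor (successor on the right of $SA[i]$, possibly equal to it), which matches exactly the inequality $i \leq j$ used in the marking condition; once that is noted, both directions are one-line consequences of substituting definitions.
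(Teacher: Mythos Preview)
Your proof is correct and follows essentially the same approach as the paper's own proof: both directions are obtained by unfolding the definition of $D$ via the successor function and matching it against Definition~\ref{def:marker}, with the key observation in the $(\Leftarrow)$ direction that the successor of $i$ in $\Gamma$ is at most any given $j\in\Gamma$ with $j\geq i$. The only cosmetic difference is notation (you use $p$ where the paper uses $i'$) and that you take $p$ to realise the minimum in the forward direction, whereas the paper simply picks any $i'$ with $D[i']<\lambda(e)$; both choices work.
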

\begin{proof}
	$(\Rightarrow)$ Assume that $\lambda(e) > \min(D[l_e..r_e])$. By definition of $D$, this means that there exist an index $i' \in [l_e..r_e]$ and a  $j\in\Gamma$, with $j\geq i = SA[i']$, such that $j-i = D[i'] < \lambda(e)$. Equivalently, $i \leq j  < i + \lambda(e)$, i.e. $\Gamma$ marks $e$. 
	
	$(\Leftarrow)$ Assume that $\Gamma$ marks $e$. Then, by definition there exist an index $i' \in [l_e..r_e]$ and a  $j\in\Gamma$ such that $SA[i'] = i \leq j  < i + \lambda(e)$. Then, $j-i < \lambda(e)$. Since $D[i']$ is computed taking the $j\in\Gamma$, $j\geq SA[i']$, minimizing $j-SA[i']$, it must be the case that $D[i'] \leq j-i < \lambda(e)$. Since $i' \in [l_e..r_e]$, this implies that $\min(D[l_e..r_e]) < \lambda(e)$.
\end{proof}

Together, Lemmas \ref{theorem:k-attr characterization} and \ref{lemma:check marking} imply that, if $\lambda(e) > \min(D[l_e..r_e])$ for every edge at string depth at most $k$, then $\Gamma$ is a $k$-attractor for $S$. 
Since the suffix tree, as well as the other structures used by our algorithm, can be built in linear time and space on alphabet $[1..n]$~\cite{farach1997optimal} and checking each edge takes constant time, we obtain that the problem of checking whether a set $\Gamma\subseteq [1..n]$ is a valid $k$-attractor can be solved in optimal $\bigO(n)$ time and $\bigO(n)$ words of space. We now show how to improve upon this working space by using recent results in the field of compact data structures. In the following result, we assume that the input string is packed in $\bigO(n\log \sigma)$ bits (that is, $\bigO(n/\log_\sigma n)$ words).

We first need the following Lemma from \cite{belazzougui2016linear}:

\begin{lemma}{\cite[Thm. 3]{belazzougui2016linear}}\label{lemma:enumerate}
	In $\bigO(n)$ time and $\bigO(n\log \sigma)$ bits of space we can enumerate the following information for each suffix tree edge $e$:
	\begin{itemize}
		\item The suffix array range $\langle l_e, r_e\rangle$ of the string $s(e)$, and
		\item the length $\lambda(e)$ of $s(e)$.
	\end{itemize}
\end{lemma}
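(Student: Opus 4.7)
This lemma is quoted directly from \cite[Thm.~3]{belazzougui2016linear}, so a complete proof properly belongs to that reference; here I sketch the approach I would take to establish the stated bounds. The plan is to never materialize the suffix tree or suffix array explicitly (which would cost $\Theta(n\log n)$ bits), but instead to enumerate the triples $\langle l_e, r_e, \lambda(e)\rangle$ one by one in a streaming fashion on top of a compressed index occupying $\bigO(n\log\sigma)$ bits.

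First, I would construct the Burrows--Wheeler Transform of $S$ in $\bigO(n)$ time and $\bigO(n\log\sigma)$ bits (itself a central result of \cite{belazzougui2016linear}). On top of the BWT I would equip a compact FM-index with constant-time rank/select, LF-mapping, and backward-search operations, all within the same space budget. In parallel, I would store the suffix-tree topology in a $2n+o(n)$-bit balanced-parentheses encoding and the LCP information in Sadakane's PLCP bitvector, supporting constant-time access to the string depth of any node.

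Second, I would perform a DFS of the internal nodes of the suffix tree, implemented implicitly via backward search on the FM-index: the root corresponds to the range $[1,n]$, and given a node with suffix array range $[l,r]$ and string depth $d$ on the DFS stack, its children are enumerated by scanning the distinct characters of $\mathit{BWT}[l..r]$ and computing the sub-range of each $c$-extension through $\bigO(1)$ rank queries. Each such sub-range $[l_e,r_e]$ is the suffix array range of an outgoing edge $e$, with $\lambda(e) = d+1$ directly available from the DFS stack. Since there are $\bigO(n)$ internal nodes and $\bigO(n)$ edges in total, and each edge is processed in constant time, the full enumeration runs in $\bigO(n)$ time.

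The main obstacle is technical rather than structural: simultaneously supporting (i) linear-time BWT construction, (ii) constant-time FM-index queries, and (iii) constant-time access to string depth via a compressed LCP representation, all within $\bigO(n\log\sigma)$ bits of working space. Each of these ingredients is non-trivial in the small-space regime, and their joint realization is precisely the contribution of \cite{belazzougui2016linear}; we therefore take the lemma as a black box in the sequel.
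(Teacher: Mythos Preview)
The paper's own argument is considerably shorter than your sketch: it does not attempt to open the black box at all. It simply observes that the cited theorem of Belazzougui already outputs, for every right-maximal substring $W$ (equivalently, every internal suffix-tree node), the value $|W|$ together with the suffix-array range $\mathit{range}(Wb)$ of every one-character right extension $Wb$ occurring in $S$. Since each $s(e)$ is exactly such a $Wb$---with $W$ the label of the parent node of $e$ and $b$ the first character on $e$---the lemma follows by emitting $\langle l_e,r_e\rangle = \mathit{range}(Wb)$ and $\lambda(e)=|Wb|$.

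Your sketch goes further and tries to outline Belazzougui's construction itself, but the central step is stated incorrectly. Scanning the distinct characters of $\mathit{BWT}[l..r]$ and applying rank/LF queries yields \emph{left} extensions $cW$ (Weiner links), not the suffix-tree children $Wb$; the intervals produced this way are not sub-ranges of $[l,r]$ and are not the ranges of the outgoing edges of the node labelled $W$. A traversal driven by Weiner links does visit every internal node---and this is in fact how Belazzougui enumerates them---but at each visited node the ranges $\mathit{range}(Wb)$ must be obtained by a separate mechanism (partitioning $[l,r]$ by the character at string depth $d{+}1$, which is not what the BWT stores). Either correct the direction of extension (e.g.\ via a bidirectional index, or by navigating the BP topology you already built rather than the FM-index), or do as the paper does: invoke the cited theorem and record only the correspondence $W\leftrightarrow\text{internal node}$, $Wb\leftrightarrow s(e)$.
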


We can now prove our theorem. Note that the input set $\Gamma\subseteq [1..n]$ can be encoded in $n$ bits, so also the input fits in $\bigO(n\log\sigma)$ bits.

\begin{theorem}\label{th:checking}
	Given a string $S[1,n]$, a set $\Gamma\subseteq [1..n]$, and an
	integer $k\geq 1$, we can check whether $\Gamma$ is a $k$-attractor
	for $S$ in:
	\begin{itemize}
		\item Optimal $\bigO(n\log\sigma)$ bits of space and
		$O(n\log^{\epsilon} n)$ time, for any constant $\epsilon>0$, or
		\item $\bigO(n(\log\sigma+\log k))$ bits of space and  $\bigO(n)$ time. 
	\end{itemize}
\end{theorem}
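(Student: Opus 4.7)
My plan is to preserve the high-level algorithm already derived from Lemmas~\ref{theorem:k-attr characterization} and \ref{lemma:check marking}---for every edge $e$ of $ST^k(S)$ test $\lambda(e)>\min D[l_e..r_e]$---and to show that each component can be realized in compact space. Three pieces carry over immediately: the edge enumeration of Lemma~\ref{lemma:enumerate} already lives in $\bigO(n\log\sigma)$ bits; the bit vector $B[1..n]$ encoding $\Gamma$ uses $n+o(n)$ bits with constant-time successor; and the RMQ of \cite{FH11} uses only $2n+o(n)$ bits and returns the \emph{index} of the minimum in constant time, so it can be constructed on top of any array for which the values are merely \emph{accessible}. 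The non-trivial objects to compactify are the suffix array $SA$ (needed to evaluate $D[i]=\mathrm{succ}(\Gamma,SA[i])-SA[i]$) and $D$ itself.

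For the first bullet I would substitute $SA$ by a compressed suffix array of $\bigO(n\log\sigma)$ bits supporting $SA[i]$ access in $\bigO(\log^{\epsilon} n)$ time, and \emph{not} materialize $D$ at all. The RMQ on $D$ is then built by fetching each queried $D[i]$ on the fly through one CSA access plus one constant-time successor query on $B$; the construction performs $\bigO(n)$ such accesses, costing $\bigO(n\log^{\epsilon} n)$ in total. At query time, for each edge $e$ I would retrieve the argmin index $i^\star$ of $D[l_e..r_e]$ in $\bigO(1)$ and recompute $D[i^\star]$ with one more CSA access; summed over the $\bigO(n)$ edges enumerated by Lemma~\ref{lemma:enumerate} this is $\bigO(n\log^{\epsilon} n)$, in $\bigO(n\log\sigma)$ bits overall.

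For the second bullet I would instead exploit that only the truncated value $\min(D[i],k)$ influences the comparison $D[i]<\lambda(e)\leq k$: storing $D$ with entries capped at $k$ occupies $n\lceil\log(k+1)\rceil=\bigO(n\log k)$ bits and admits $\bigO(1)$-time random access. To fill $D$ in $\bigO(n)$ time within $\bigO(n\log\sigma)$ bits of auxiliary space, I plan to use the compact linear-time suffix tree construction of Belazzougui et al.~\cite{belazzougui2016linear} to stream the values $SA[1],\dots,SA[n]$ in lexicographic order; for each $i$ we compute $D[i]$ from $SA[i]$ in constant time via one successor on $B$, one subtraction, and one capping. A linear-time build of the RMQ on the now-materialized $D$ together with Lemma~\ref{lemma:enumerate}'s edge enumeration then provide $\bigO(1)$ per-edge checks, for $\bigO(n)$ total time and $\bigO(n(\log\sigma+\log k))$ bits of space.

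The main obstacle is avoiding an explicit suffix array while still allowing near-constant-time evaluation of $D[i]$---both during RMQ construction and during the per-edge checks. For the first bullet this is absorbed into the CSA's $\bigO(\log^{\epsilon} n)$ access time, yielding exactly the stated time penalty. For the second, it requires producing the $SA$ values in a streaming order compatible with writing $D$ sequentially, so that no full copy of $SA$ ever needs to coexist with the rest of the data structures; after this streaming pass only the capped $D$, the RMQ, $B$, and the compact suffix-tree interface need to remain resident, all of which fit within the claimed budget.
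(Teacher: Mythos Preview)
Your proposal follows the paper's proof almost verbatim: a CSA with $\bigO(\log^{\epsilon}n)$-time $SA$ access replacing the explicit $D$ array for the first trade-off, and a $k$-capped materialized $D$ of $\bigO(n\log k)$ bits for the second. The per-edge checking via Lemma~\ref{lemma:enumerate} and the $\bigO(n)$-bit RMQ of~\cite{FH11} are exactly what the paper uses.

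There is one technical wrinkle in your second bullet. You propose to ``stream the values $SA[1],\dots,SA[n]$ in lexicographic order'' so that $D$ can be written sequentially. The compact index of~\cite{belazzougui2016linear} does not give you this: what it provides (and what the paper actually uses) is a constant-time $LF$ function, with which one enumerates the pairs $\langle ISA[i],i\rangle=\langle j,SA[j]\rangle$ for $i=n,n-1,\dots,1$, i.e.\ in \emph{text} order of $SA[j]$, not in lexicographic order of $j$. Fortunately the sequential-write constraint you impose on yourself is unnecessary: since $D$ is materialized as a packed array with $\bigO(1)$-time random access, writing $D[j]$ at arbitrary positions $j$ as the pairs arrive is fine. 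Replace ``stream $SA$ in lex order'' by ``enumerate all pairs $\langle j,SA[j]\rangle$ via the $LF$-walk'' and your argument coincides with the paper's.
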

\begin{proof}
	To achieve the first trade-off we will replace the $D$ array
	(occupying $\bigO(n \log n)$ bits) with a smaller data structure supporting random access to $D$. We start by replacing the standard
	suffix array with a compressed suffix array (CSA)~\cite{FerraginaM05,GrossiV05}. Given a text stored in
	$\bigO(n\log\sigma)$ bits, the CSA can be built in deterministic
	$\bigO(n)$ time and optimal $\bigO(n\log\sigma)$ bits of
	space~\cite{MunroNN17}, and supports access queries to the suffix array $SA$ in
	$\bigO(\log^{\epsilon}n)$ time~\cite{GrossiV05}, for any constant $\epsilon>0$ chosen at construction time. Given that
	$D[i]=successor(\Gamma,SA[i])-SA[i]$ and we can compute the
	successor function in constant time using a $\bigO(n)$-bit data
	structure (array $B$), $D[i]$  can be computed in $\bigO(\log^{\epsilon} n)$
	time. Using access to $D$, the RMQ data structure (occupying $\bigO(n)$
	bits) can be built in $\bigO(n \log^{\epsilon} n)$ time and
	$\bigO(n)$ bits of space~\cite[Thm. 5.8]{FH11}. 
	At this point, observe
	that the order in which we visit suffix tree edges does not affect
	the correctness of our algorithm. By using Lemma \ref{lemma:enumerate} we can enumerate $\lambda(e)$ and $\langle l_e, r_e\rangle$ for every suffix tree edge $e$ in linear time and compact space, and check $\lambda(e) > \min(D[l_e..r_e])$ whenever $\lambda(e) \leq k$ (Lemma \ref{lemma:check marking}).

	To achieve the second trade-off we observe that in our algorithm
	we only explore the suffix tree up to depth $k$ (i.e. we only perform the check of Lemma \ref{lemma:check marking} when $\lambda(e)\leq k$), hence any $D[i]>k$
	can be replaced with $D[i]=k+1$ without affecting the correctness of the verification
	procedure. In this way,  array $D$ can be stored in just $O(n \log
	k)$ bits. To compute the $D$ array in $\bigO(n)$ time and compact space we observe that it suffices to access
	all pairs $\langle i,SA[i]\rangle$ in \emph{any} order (not necessarily
	$\langle1,SA[1]\rangle, \langle 2, SA[2]\rangle, \dots$). 
	From~\cite[Thm. 10]{belazzougui2016linear}, in $O(n)$ time and $O(n\log\sigma)$ bits of space we can
	build a compressed suffix array supporting constant-time LF function computation. By repeatedly applying LF from the first suffix array position, we 
	enumerate entries of the inverse suffix array $ISA$ in right-to-left order in $O(n)$ time~\cite[Lem. 1]{belazzougui2016linear}.
	This yields the sequence of pairs $\langle ISA[i],i\rangle = \langle j,SA[j]\rangle$, for $i=n, \dots, 1$ and $j=ISA[i]$, which can be used to compute $D$ in linear time and compact space. As in the first trade-off, we use Lemma \ref{lemma:enumerate} to enumerate $\lambda(e)$ and $\langle l_e, r_e\rangle$ for every suffix tree edge $e$, and check $\lambda(e) > \min(D[l_e..r_e])$  whenever $\lambda(e) \leq k$ (Lemma \ref{lemma:check marking}).
\end{proof}

Note that with the second trade-off of Theorem \ref{th:checking} we achieve $\bigO(n)$ time and optimal $\bigO(n\log\sigma)$-bits of space when $k\leq \sigma^{\bigO(1)}$ (in particular, this is always the case when $k$ is constant).
Note also that, since we now assume that the input string is packed in $\bigO(n/\log_\sigma n)$ words, the running time is not optimal (being $\Omega(n/\log_\sigma n)$ a lower-bound in this model).

As a by-product of Theorem \ref{th:checking}, we obtain a data structure able to report, given a substring $s$, all occurrences $s'$ of $s$ straddling a position in $\Gamma$.

\begin{theorem}\label{th:reporting}
	Let $S[1,n]$ be a string and $\Gamma$ be an attractor for $S$.
	In $\bigO(n)$ space we can build a structure of size $\bigO(n)$ words supporting the following query: given a range $\langle i,j \rangle$, report all (or at most any number of) positions $i'$ such that $S[i'..i'+(j-i)] = S[i..j]$ and $j' \in [i'..i'+(j-i)]$ for some $j'\in \Gamma$. Every such $i'$ is reported in constant time. 
\end{theorem}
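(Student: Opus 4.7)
The plan is to combine the structures already built for Theorem~\ref{th:checking} with weighted ancestor queries on the suffix tree. Recall from that proof we have, in $\bigO(n)$ words, the suffix array $SA$, the array $D[i]=successor(\Gamma,SA[i])-SA[i]$, and a constant-time RMQ structure on $D$. I will augment these with the weighted-level-ancestor data structure of Gawrychowski et al.~\cite{gawrychowski2014weighted}, which, after $\bigO(n)$ preprocessing on the suffix tree, allows us to locate in constant time, given any text range $\langle i,j\rangle$, the suffix tree locus of $S[i..j]$ (start at the leaf for suffix $i$ and take the ancestor at string depth $\ell=j-i+1$). This locus directly gives the suffix array range $\langle l,r\rangle$ of all occurrences of $S[i..j]$.

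Once $\langle l,r\rangle$ and $\ell$ are known, by Lemma~\ref{lemma:check marking} an index $i'\in[l..r]$ corresponds to an occurrence $SA[i']$ that straddles some $j'\in\Gamma$ if and only if $D[i']<\ell$. Reporting all such indices is a classical range-reporting task solved by recursive RMQ: query the minimum of $D$ on the current sub-range; if it is $\geq\ell$ stop; otherwise output that index and recurse on the two sub-ranges to its left and right. Each recursive call either produces a new output or is a final failed test that is charged to its parent's successful output, so the total work is $\bigO(1)$ per reported $i'$. Converting $i'$ to the text position $SA[i']$ is another $\bigO(1)$ lookup.

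The whole data structure still occupies $\bigO(n)$ words and is built in $\bigO(n)$ time and space: the suffix tree, suffix array, bit-vector $B$ with successor support, array $D$, RMQ on $D$, and the weighted ancestor structure all meet these bounds on alphabet $[1..n]$. Correctness follows because the weighted ancestor query yields exactly the suffix array interval of $S[i..j]$, and Lemma~\ref{lemma:check marking} characterizes precisely the positions of $[l..r]$ whose corresponding occurrence is straddled by $\Gamma$.

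The only delicate step is the first one: mapping the input range $\langle i,j\rangle$ to a suffix tree node (and hence to a suffix array interval) in constant time, which is exactly what the weighted ancestor result of~\cite{gawrychowski2014weighted} supplies; the rest is a standard assembly of known tools. Note that the query optionally restricts how many positions to report, which is handled trivially by halting the RMQ recursion after the requested number of outputs has been emitted.
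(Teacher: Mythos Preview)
Your proposal is correct and follows essentially the same approach as the paper: weighted ancestor queries on the suffix tree (via~\cite{gawrychowski2014weighted}) to locate the suffix array interval of $S[i..j]$ in constant time, followed by the standard recursive RMQ on the array $D$ to report every index with $D[i']<\ell$. The paper's proof is slightly terser but uses exactly the same ingredients and the same three-sided range-reporting recursion, so there is nothing to add.
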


\subsection{Checking minimality}

Given a string $S$, a set $\Gamma\subseteq [1..n]$, and an integer $k\geq 1$, is $\Gamma$ a \emph{minimal} $k$-attractor for $S$? The main result of this section is that this question can be answered in almost-optimal $\bigO(n\log n)$ time.

We first show that minimal $k$-attractors admit a convenient characterization based on the concept of suffix tree $k$-marking.

\begin{definition}[$k$-necessary position]\label{def:k-necessary}
	$j\in \Gamma$ is \emph{$k$-necessary} with respect to a set $\Gamma'$, with $\Gamma \subseteq \Gamma' \subseteq [1..n]$, if and only if there is at least one suffix tree edge $e$ such that:
	\begin{enumerate}
		\item $\lambda(e)\leq k$,
		\item $j$ marks $e$, and
		\item If $j'\in\Gamma'$ marks $e$, then $j'=j$
	\end{enumerate}
\end{definition}

\begin{definition}[$k$-necessary set]\label{def:k-necessary-set}
	$\Gamma$ is \emph{$k$-necessary} if and only if all its elements are $k$-necessary with respect to $\Gamma$.
\end{definition}

A remark: we give Definition \ref{def:k-necessary} with respect to a general superset $\Gamma'$ of $\Gamma$, but for now (Definition \ref{def:k-necessary-set}) we limit our attention to the case $\Gamma' = \Gamma$. Later (Theorem \ref{theorem:find minimal}) we will need the more general definition. For simplicity, in the proofs of the following two theorems we just say \emph{$k$-necessary} (referring to some $j\in\Gamma$) instead of \emph{$k$-necessary with respect to $\Gamma$}.

\begin{lemma}\label{theorem:minimal characterization}
	$\Gamma$ is a minimal $k$-attractor if and only if: 
	\begin{enumerate}
		\item It is a $k$-attractor, and
		\item it is $k$-necessary.
	\end{enumerate}
\end{lemma}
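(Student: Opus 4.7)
The plan is to prove both directions by a direct application of Lemma \ref{theorem:k-attr characterization}, which states that a set is a $k$-attractor if and only if it marks every edge of $ST^k(S)$. The proof essentially unfolds the definitions, so the main work is just to connect ``minimality'' (an edge-removal condition on $\Gamma$ w.r.t.\ the attractor property) to ``$k$-necessity'' (the existence of a uniquely-marked edge in $ST^k(S)$).

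For the forward direction, assume $\Gamma$ is a minimal $k$-attractor. Condition (1) is immediate. For condition (2), pick any $j\in\Gamma$. Since $\Gamma$ is minimal, $\Gamma-\{j\}$ is not a $k$-attractor, and by Lemma \ref{theorem:k-attr characterization} there must exist some suffix tree edge $e$ with $\lambda(e)\leq k$ that $\Gamma-\{j\}$ does not mark. But $\Gamma$ itself marks $e$ (again by Lemma \ref{theorem:k-attr characterization}, since $\Gamma$ is a $k$-attractor), so the only possible marker of $e$ in $\Gamma$ is $j$. This edge $e$ witnesses all three conditions of Definition \ref{def:k-necessary} for $j$ with respect to $\Gamma$, hence $j$ is $k$-necessary. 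As $j$ was arbitrary, $\Gamma$ is $k$-necessary.

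For the backward direction, assume $\Gamma$ is a $k$-attractor and is $k$-necessary. Pick any $j\in\Gamma$; by $k$-necessity there is a suffix tree edge $e$ with $\lambda(e)\leq k$ such that $j$ marks $e$ and $j$ is the unique marker of $e$ in $\Gamma$. Removing $j$ from $\Gamma$ therefore leaves $e$ unmarked, so by Lemma \ref{theorem:k-attr characterization} the set $\Gamma-\{j\}$ is not a $k$-attractor. Since this holds for every $j\in\Gamma$, $\Gamma$ is minimal.

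There is no real obstacle: once Lemma \ref{theorem:k-attr characterization} is in hand, the statement is a straightforward reformulation of minimality in the language of edge marking. The only thing to be slightly careful about is that the ``$\Gamma' = \Gamma$'' specialization of Definition \ref{def:k-necessary} is exactly what matches the removal of a single element from $\Gamma$, which is why this instantiation suffices here (the more general version with $\Gamma \subsetneq \Gamma'$ is, as the authors note, deferred to Theorem \ref{theorem:find minimal}).
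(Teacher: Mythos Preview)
Your proof is correct and follows essentially the same approach as the paper's own proof: both directions are obtained by applying Lemma~\ref{theorem:k-attr characterization} to translate minimality into the existence, for each $j\in\Gamma$, of an edge $e$ with $\lambda(e)\leq k$ marked by $j$ alone. The paper spells out the uniqueness step in the forward direction via a small contradiction argument, but the content is identical to yours.
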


A naive solution for the minimality-checking problem is to test the $k$-attractor property on $\Gamma - \{i\}$ for every $i\in \Gamma$ using Theorem \ref{th:checking}. This solution, however, runs in quadratic time. 	
Our efficient strategy consists in checking, for every suffix tree edge $e$, if there is only one $j\in\Gamma$ marking it. In this case, we flag $j$ as necessary. If, in the end, all attractor positions are flagged as necessary, then the attractor is minimal by Lemma \ref{theorem:minimal characterization}. 

Unfortunately, the following simple strategy does not work: for every suffix tree edge $e$, report, with the structure of Theorem \ref{th:reporting}, two distinct occurrences of the string $s(e)$ that straddle an attractor position. The reason why this does not work is that, even we find two such occurrences, the attractor position that they straddle could be the same (this could happen e.g. if $s$ is periodic). 
Our solution is, therefore, a bit more involved.

\begin{theorem}\label{theorem:minimality check}
	Given a string $S[1,n]$, a set $\Gamma\subseteq [1..n]$, and an integer $k\geq 1$, we can check whether $\Gamma$ is a minimal $k$-attractor for $S$ in $\bigO(n\log n)$ time and $\bigO(n\log |\Gamma|)$ space. 
\end{theorem}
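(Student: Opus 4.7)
The plan is to invoke Lemma~\ref{theorem:minimal characterization}: $\Gamma$ is a minimal $k$-attractor iff (i) it is a $k$-attractor and (ii) every $j\in\Gamma$ is the sole marker in $\Gamma$ of some edge of $ST^k(S)$. Condition~(i) takes $\bigO(n)$ time via Theorem~\ref{th:checking}. For~(ii) I compute, for every edge $e\in\mathcal{E}(ST^k(S))$, the pair of smallest \emph{distinct} markers $M_1(e)<M_2(e)$ of $e$ in $\Gamma$, using $+\infty$ when a value does not exist. Whenever $M_1(e)<\infty=M_2(e)$ the edge is uniquely marked, so I flag $M_1(e)$ as necessary; at the end $\Gamma$ is minimal iff every element has been flagged.

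The key structural observation for computing $(M_1(e),M_2(e))$ concerns what each leaf can contribute to the marker set $\Gamma_e=\Gamma\cap\bigcup_{\ell\in[l_e..r_e]}[SA[\ell],SA[\ell]+\lambda(e)-1]$. For each leaf $\ell$ let $J[\ell]=\mathrm{successor}(\Gamma,SA[\ell])$ and $J'[\ell]=\mathrm{successor}(\Gamma,J[\ell]+1)$, with activation depths $D[\ell]=J[\ell]-SA[\ell]$ and $D'[\ell]=J'[\ell]-SA[\ell]$; these arrays are built in $\bigO(n)$ time by a linear pass over $SA$ using a bitvector on $\Gamma$ with constant-time successor support. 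A short case analysis shows that the two smallest elements of $\Gamma_e$ always belong to $\{J[\ell] : \ell\in[l_e..r_e],\ D[\ell]<\lambda(e)\}\cup\{J'[\ell] : \ell\in[l_e..r_e],\ D'[\ell]<\lambda(e)\}$: indeed, the third-or-later successor at any single leaf is strictly greater than that leaf's $J'$, hence cannot appear among the first two overall.

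To turn this into an $\bigO(n\log n)$-time algorithm I sweep edges in order of increasing $\lambda(e)$ and maintain a segment tree indexed by suffix-array position. Each leaf stores the currently active subset of $\{J[\ell],J'[\ell]\}$ (at most two values) and each internal node caches the two smallest \emph{distinct} marker values in its subtree, mergeable in constant time from the at-most-four candidates supplied by its children. For each $\ell$ I schedule two insertions, at activation depths $D[\ell]+1$ and $D'[\ell]+1$; before processing the edges of depth $d$, I apply every insertion with activation depth $\le d$, at $\bigO(\log n)$ each for ancestor updates. A range query on $[l_e..r_e]$ then returns $(M_1(e),M_2(e))$ in $\bigO(\log n)$. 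With at most $2n$ insertions and $\bigO(n)$ queries, the total time is $\bigO(n\log n)$. The tree has $\bigO(n)$ nodes holding two values each, stored as $\log|\Gamma|$-bit ranks into $\Gamma$, for $\bigO(n\log|\Gamma|)$ bits overall; the arrays $J,J',D,D'$, the suffix array, and the bitvector for $\Gamma$ fit within the same budget.

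The main obstacle, foreshadowed by the paragraph preceding the theorem, is the ``periodicity trap'': two distinct occurrences of $s(e)$ straddling $\Gamma$ can easily hit the same attractor position, so merely reporting two occurrences via Theorem~\ref{th:reporting} does not certify $|\Gamma_e|\ge 2$. Tracking $(J[\ell],J'[\ell])$ per leaf and aggregating ``two smallest distinct values'' sidesteps this: whenever $|\Gamma_e|\ge 2$ the query surfaces two genuinely different attractor positions, whether they arise from a single leaf (as $J[\ell]$ and $J'[\ell]$ of the same $\ell$) or from several leaves that happen to share the same successor. Formally justifying the $(J,J')$-per-leaf reduction, and the correctness of the distinctness-aware merge in the segment tree, is the technically delicate portion of the argument; the remainder is routine sweep-and-segment-tree bookkeeping.
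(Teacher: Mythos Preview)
Your proposal is correct and reaches the stated bounds, but by a different implementation than the paper's. Both arguments share the decisive insight you identify: tracking only the \emph{first} attractor successor $J[\ell]$ per leaf is insufficient because of the periodicity trap, and tracking the first \emph{two} successors $J[\ell],J'[\ell]$ suffices to decide, for every edge, whether its marker set in $\Gamma$ has size one or at least two. Your case analysis that $m_1,m_2\in\{J[\ell],J'[\ell]:\ell\in[l_e..r_e]\text{ activated}\}$ is sound (though the one-line justification you give is a bit loose; the clean argument is: $m_1=J[\ell]$ for any $\ell$ witnessing $m_1$, and if $m_2\neq J[\ell']$ for its witness $\ell'$ then $J[\ell']=m_1$ forces $J'[\ell']=m_2$).

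Where the two proofs diverge is in the range machinery. The paper casts the problem as three-sided \emph{colored} range reporting on the grid $\{\langle i,D[i]\rangle\}$ (doubled to include the second successor) and builds a wavelet tree whose nodes carry, in addition to the usual bitvector, the color sequence and a run-change bitvector; this supports ``report two distinct colors in a 3-sided range'' in $\bigO(\log n)$ per query. You instead run an offline sweep over $\lambda(e)$ and maintain a segment tree over suffix-array positions with the associative aggregate ``two smallest distinct values''. Your route is arguably more elementary (no wavelet-tree internals, no custom colored-reporting lemma) and gives the same $\bigO(n\log n)$ time; the paper's structure has the mild advantage of being static and query-order independent.

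One small slip: you claim the suffix array and the $J,J',D,D'$ arrays ``fit within the same budget'' of $\bigO(n\log|\Gamma|)$ \emph{bits}. They do not in general; $SA$ alone needs $\Theta(n\log n)$ bits. The theorem's space bound is $\bigO(n\log|\Gamma|)$ \emph{words} under the paper's convention, and under that convention your $\bigO(n)$-word auxiliary arrays are clearly within budget. Adjust the units and the argument is complete.
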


\subsection{Computing a minimum $k$-attractor}

Computing a minimum $k$-attractor is NP-hard for $k\geq 3$ and general $\sigma$. In this section we show that the problem is actually polynomial-time-solvable for small $k$ and $\sigma$. Our algorithm takes advantage of both our reduction to set-cover and the optimal verification algorithm of Theorem \ref{th:checking}.

First, we give an upper-bound to the cardinality of the set of all minimal $k$-attractors. This will speed up our procedure for finding a  minimum $k$-attractor (which must be, in particular, minimal). 
By Lemma \ref{lemma:size}, there are no more than
$exp(\bigO(\sigma^{2k}))$ $k$-attractors for $S$. With the following lemma, we give a better upper-bound to the number of \emph{minimal} $k$-attractors.

\begin{lemma}\label{lemma: number minimal k-attractors}
	There cannot be more than
	$exp\left(\bigO(\sigma^k\log\sigma^k)\right)$ 
	minimal $k$-attractors.
\end{lemma}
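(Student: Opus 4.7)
The plan is to combine a size bound on any minimal $k$-attractor with a restriction of its elements to the canonical set $\mathcal{C}$, and then apply an elementary counting argument. Specifically, I will show that every minimal $k$-attractor $\Gamma$ satisfies $|\Gamma|\leq|\mathcal{U}|\leq\sigma^k$, and that after replacing each $j\in\Gamma$ by $\min([j]_{\equiv_k})$ it sits inside $\mathcal{C}$; the bound then follows by counting subsets of $\mathcal{C}$ of size at most $\sigma^k$.

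For the size bound, I invoke Lemma~\ref{theorem:minimal characterization}: minimality forces every $j\in\Gamma$ to be $k$-necessary, so each such $j$ is the \emph{unique} element of $\Gamma$ marking some edge $e_j\in\mathcal{E}(ST^k(S))$. The assignment $j\mapsto e_j$ is injective (if $e_{j_1}=e_{j_2}$ then $j_1$ would be a second marker of $e_{j_2}$ in $\Gamma$, contradicting the $k$-necessity of $j_2$), yielding $|\Gamma|\leq|\mathcal{U}|\leq\sigma^k$.

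Next, I define $\Gamma^c=\{\min([j]_{\equiv_k})\,:\,j\in\Gamma\}\subseteq\mathcal{C}$. By Lemma~\ref{lemma:equiv_classes} the chosen classes are pairwise distinct, so $|\Gamma^c|=|\Gamma|$; iterating Lemma~\ref{lemma:swap} shows that $\Gamma^c$ is again a $k$-attractor, and minimality transfers by applying the swap in reverse---if some $j'\in\Gamma^c$ were removable, one could swap it back to its pre-image $j\in\Gamma$ to conclude that $\Gamma-\{j\}$ is a $k$-attractor, contradicting minimality of $\Gamma$. Thus every minimal $k$-attractor canonicalizes to a minimal $k$-attractor contained in $\mathcal{C}$ of the same size, and it suffices to count the latter. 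Using $|\mathcal{C}|\leq\sigma^{2k-1}+2k-2$ from Lemma~\ref{lemma:size}, the number of such subsets is at most
$$
\sum_{i=0}^{\sigma^k}\binom{|\mathcal{C}|}{i}\leq(\sigma^k+1)\,|\mathcal{C}|^{\sigma^k}\leq\exp\bigl(O(\sigma^k\log\sigma^k)\bigr).
$$

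The main obstacle I anticipate is the minimality-transfer step in the canonicalization: one must verify that replacing positions by their class-minima cannot accidentally render some element removable, and this is handled cleanly by invoking the contrapositive of Lemma~\ref{lemma:swap} to lift a hypothetical removal in $\Gamma^c$ back to a removal in $\Gamma$. Everything else reduces to combining the previously established lemmas with a standard bound on the number of small subsets of a moderately-sized universe.
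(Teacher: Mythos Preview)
Your approach mirrors the paper's: bound $|\Gamma|\leq|\mathcal U|\leq\sigma^k$ via the injection $j\mapsto e_j$ coming from $k$-necessity (Lemma~\ref{theorem:minimal characterization}), pass to canonical representatives in $\mathcal C$, and count subsets of $\mathcal C$ of size at most $\sigma^k$. You are more explicit than the paper about the canonicalization step and the transfer of minimality, and your final estimate $(\sigma^k{+}1)\,|\mathcal C|^{\sigma^k}$ is cruder than the paper's computation $\sum_i\binom{N}{i}\in O\bigl((Ne/t)^t\bigr)$ with $N=\sigma^{2k}$, $t=\sigma^k$, but both land in $\exp\bigl(O(\sigma^k\log\sigma^k)\bigr)$.

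There is, however, a genuine gap at ``it suffices to count the latter'': the map $\Gamma\mapsto\Gamma^c$ is not injective, so bounding the number of canonical images $\Gamma^c\subseteq\mathcal C$ does not by itself bound the number of minimal $k$-attractors in $[1..n]$. For instance, for $S=a^n$ every singleton $\{i\}$ is a minimal $1$-attractor, yet all share the same $\Gamma^c$. The paper's proof elides the same point by simply asserting that a minimal $k$-attractor ``is a set \ldots\ chosen from a universe $\mathcal C$''. In both cases what is actually established---and what Theorem~\ref{th:minimum} actually uses---is the bound on minimal $k$-attractors \emph{contained in} $\mathcal C$ (equivalently, on minimal $k$-attractors up to $\equiv_k$). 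Your argument proves exactly this; just rephrase the conclusion to match, rather than claiming a bound on all minimal $k$-attractors.
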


Using the above lemma, we now provide  a strategy to find a minimum $k$-attractor.

\begin{theorem}\label{th:minimum}
	A minimum $k$-attractor
	can be found in $\bigO(n) + exp\big(\bigO(\sigma^k\log\sigma^k)\big)$ time. 
\end{theorem}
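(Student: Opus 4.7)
The plan is to separate the work into two phases that together yield the additive bound: an $\bigO(n)$ preprocessing that extracts a set-cover instance whose size depends only on $\sigma$ and $k$ (via Theorem \ref{th:main}), followed by a brute-force search over this instance whose cost depends only on $\sigma$ and $k$. The key facts I would use are: $|\mathcal{U}|\le \sigma^k$, $|\mathcal{C}|\le\sigma^{2k-1}+2k-2$, and the bound of Lemma \ref{lemma: number minimal k-attractors} on the number of minimal $k$-attractors.

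First, in $\bigO(n)$ time I would build the suffix tree of $S$ (and of $S'=\#^{k-1}S\#^{k-1}$), extract $ST^k(S)$ and $ST^{2k-1}(S')$, and read off $\mathcal{U}=\mathcal{E}(ST^k(S))$ together with a representative $\min(I)$ of each equivalence class $I\in[1..n]/\!\equiv_k$ (these are the leaves of $ST^{2k-1}(S')$, by Lemma \ref{lemma:size}), giving $\mathcal{C}$. I would then materialize the set-cover instance of Theorem \ref{th:main}, i.e.\ store for each $i\in\mathcal{C}$ the set $s_i=\{e\in\mathcal{U}:i\text{ marks }e\}$; this is exactly the marker graph, so by Corollary \ref{cor:final G size} it can be built in $\bigO(n)+\bigO(\sigma^{2k-1}k^2)$ expected time, which fits in the claimed bound.

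Second, I would solve this small set-cover instance by exhaustive search. Since the trivial upper bound on a minimum $k$-attractor is $|\mathcal{U}|\le\sigma^k$ (picking any marker of each edge yields a valid attractor), it suffices to enumerate all subsets $\Gamma\subseteq\mathcal{C}$ with $|\Gamma|\le\sigma^k$ in increasing order of cardinality, and return the first that covers $\mathcal{U}$ (equivalently, is a $k$-attractor, by Theorem \ref{th:main}). The number of subsets inspected is at most
\[
\sum_{t=0}^{\sigma^k}\binom{|\mathcal{C}|}{t}\le |\mathcal{C}|^{\sigma^k+1}\le (\sigma^{2k-1}+2k-2)^{\sigma^k+1}=\exp\bigl(\bigO(k\sigma^k\log\sigma)\bigr)=\exp\bigl(\bigO(\sigma^k\log\sigma^k)\bigr),
\]
and each coverage check costs $\bigO(|\mathcal{U}|\cdot|\mathcal{S}|)$, which is polynomial in $\sigma^k$ and hence absorbed into the same exponential. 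Correctness is immediate from Theorem \ref{th:main}: a minimum cover of $\mathcal{U}$ using sets in $\mathcal{S}$ corresponds exactly to a minimum $k$-attractor, and since the optimum has size at most $\sigma^k$ it will be found by the enumeration.

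The only nontrivial point is the counting step, which is essentially the combinatorial identity above; its shape is already foreshadowed by Lemma \ref{lemma: number minimal k-attractors}, so I expect no real obstacle (one could alternatively directly enumerate minimal $k$-attractors guaranteed by that lemma, but the ``all small subsets of $\mathcal{C}$'' enumeration is simpler and already within budget). Adding the $\bigO(n)$ preprocessing to the exponential brute-force phase gives the claimed $\bigO(n)+\exp(\bigO(\sigma^k\log\sigma^k))$ running time.
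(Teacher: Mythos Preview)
Your approach is correct and follows the same high-level strategy as the paper: reduce to the small set-cover instance of Theorem~\ref{th:main}, observe that the optimum has size at most $|\mathcal U|\le\sigma^k$, and brute-force over subsets of $\mathcal C$ of size at most $\sigma^k$. The difference lies in how each candidate is verified. You materialize the marker graph once and then test coverage of $\mathcal U$ combinatorially in $\text{poly}(\sigma^k)$ time per candidate. The paper instead constructs an auxiliary string $C=c(i_1)\$c(i_2)\$\cdots\$c(i_t)$ of length $O(\sigma^{2k}k)$ by concatenating the $(2k{-}1)$-contexts of all representatives $i\in\mathcal C$, shows that $k$-attractors of $S$ correspond bijectively to certain $k$-attractors of $C$, and then runs the linear-time checker of Theorem~\ref{th:checking} on $C$ (not on $S$) for each candidate. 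Your route is more direct, since it sidesteps both the auxiliary-string gadget and the correspondence argument; the paper's route has the minor advantage of being fully deterministic, whereas your invocation of Corollary~\ref{cor:final G size} inherits an \emph{expected}-time bound from the hashing in Lemma~\ref{lemma:build G}. That is cosmetic rather than substantive: once $\mathcal C$ and $\mathcal U$ have been computed (in $O(n)$ time), both have size $\text{poly}(\sigma^k)$, so the hash table can be replaced by a plain $|\mathcal C|\times|\mathcal U|$ array and the whole second phase becomes deterministic within the same $\exp(O(\sigma^k\log\sigma^k))$ budget.
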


\subsection{Computing a minimal $k$-attractor}\label{sec:minimal}

In this section we provide an algorithm to find a minimal $k$-attractor, and then show that such a solution is a $k$-approximation to the optimum.

\begin{theorem}\label{theorem:find minimal}
	A minimal $k$-attractor
	can be found in $\bigO(\min\{n^2,nk^2, n+\sigma^{2k-1} k^2\})$ expected time. 
\end{theorem}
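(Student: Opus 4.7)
The plan is to exploit the marker graph $\mathcal G_{S,k}$ of Definition \ref{def:marking graph} as a compact encoding of the incidences between candidate positions and $k$-truncated suffix tree edges. By Corollary \ref{cor:final G size}, $\mathcal G_{S,k}$ has $\bigO(g)$ edges, where $g=\min\{n^2,nk^2,\sigma^{2k-1}k^2\}$, and can be built in $\bigO(n+g)$ expected time.

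First I would observe that $\mathcal C$ is itself a $k$-attractor. The set $[1..n]$ is trivially a $k$-attractor, and for each $j\in[1..n]$ we may swap $j$ for $\min([j]_{\equiv_k})\in \mathcal C$ without destroying the attractor property by Lemma \ref{lemma:swap}. Equivalently, by Lemma \ref{theorem:k-attr characterization} each $e\in\mathcal U$ is marked by some position and, again by Lemma \ref{lemma:swap}, by some class representative in $\mathcal C$; hence every $e\in \mathcal U$ has at least one neighbour in $\mathcal C$ in $\mathcal G_{S,k}$.

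Next I would run a greedy pruning procedure on $\mathcal G_{S,k}$. Initialize $\Gamma\leftarrow \mathcal C$ and, for every $e\in\mathcal U$, set $c[e]:=\deg_{\mathcal G_{S,k}}(e)$, i.e.\ the number of positions of $\Gamma$ currently marking $e$; this is computed in $\bigO(|E(\mathcal G_{S,k})|)=\bigO(g)$ time while traversing the graph. Then process each $j\in\mathcal C$ in arbitrary order: walk the adjacency list of $j$ and, if $c[e]\geq 2$ for every edge $e$ adjacent to $j$, remove $j$ from $\Gamma$ and decrement each such $c[e]$; otherwise keep $j$. Summed over all $j$, the work in the main loop is $\sum_{j\in\mathcal C}\deg_{\mathcal G_{S,k}}(j)=\bigO(|E(\mathcal G_{S,k})|)=\bigO(g)$, so together with construction the total expected time is $\bigO(n+g)=\bigO(\min\{n^2,nk^2,n+\sigma^{2k-1}k^2\})$.

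The subtle point, and main thing to nail down in the write-up, is that the output is not merely a $k$-attractor but in fact a \emph{minimal} one. The attractor invariant ``$c[e]\geq 1$ for all $e\in\mathcal U$'' is maintained by the removal test, so by Lemma \ref{theorem:k-attr characterization} $\Gamma$ remains a $k$-attractor throughout. For the $k$-necessity condition required by Lemma \ref{theorem:minimal characterization}, whenever a position $j$ is \emph{kept}, there is some edge $e^\ast$ adjacent to $j$ with $c[e^\ast]=1$ and $j$ its unique surviving marker at the moment of processing; because counters only ever decrease, $e^\ast$ keeps $j$ as its unique marker in the final $\Gamma$. Hence every element of the returned $\Gamma$ is $k$-necessary, so by Lemma \ref{theorem:minimal characterization} $\Gamma$ is a minimal $k$-attractor, completing the proof.
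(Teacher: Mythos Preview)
Your proof is correct and follows the same high-level strategy as the paper: build the marker graph $\mathcal G_{S,k}$ via Corollary~\ref{cor:final G size} and then perform a greedy pruning in time linear in the graph size. The implementations differ, though. The paper maintains two sets, an ``active'' set $A$ (initially $\mathcal C$) and the attractor $\Gamma$ (initially empty); it repeatedly picks some $j\in A$, promotes to $\Gamma$ any neighbour $j'$ that would become uniquely responsible for an edge after $j$'s removal, and then deletes $j$ from both $A$ and the graph. Correctness is argued through four invariants tracking necessity with respect to $A\cup\Gamma$. Your version is the dual deletion scheme: start with $\Gamma=\mathcal C$, keep edge-degree counters, and drop $j$ exactly when every incident edge still has another marker. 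Your minimality argument (a kept $j$ witnesses some $e^\ast$ with $c[e^\ast]=1$, and counters are monotone) is shorter than the paper's invariant bookkeeping and arguably more transparent; the paper's formulation, on the other hand, makes explicit the moment each attractor position becomes $k$-necessary, which dovetails with the characterization in Lemma~\ref{theorem:minimal characterization} and Definition~\ref{def:k-necessary}. Both yield the same $\bigO(n+g)$ bound.
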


We now give the approximation ratio of minimal attractors, therefore showing that the strategy of Theorem \ref{theorem:find minimal} yields an approximation algorithm.

\begin{theorem}\label{theorem:minimal approx rate}
	Any minimal $k$-attractor is a $k$-approximation to the minimum $k$-attractor.
\end{theorem}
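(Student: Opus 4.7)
Let $\Gamma$ be a minimal $k$-attractor and $\Gamma^*$ a minimum $k$-attractor; the goal is to show $|\Gamma|\le k\,|\Gamma^*|$. The plan is to construct a map $\phi:\Gamma\to\Gamma^*$ and then prove $|\phi^{-1}(j^*)|\le k$ for every $j^*\in\Gamma^*$.

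The construction is the natural one. By Lemma~\ref{theorem:minimal characterization}, minimality of $\Gamma$ is equivalent to $\Gamma$ being $k$-necessary, so for each $j\in\Gamma$ there is a witness edge $e_j\in\mathcal{E}(ST^k(S))$ of which $j$ is the \emph{unique} $\Gamma$-marker. Since $\Gamma^*$ is itself a $k$-attractor, Lemma~\ref{theorem:k-attr characterization} tells us that $\Gamma^*$ marks every edge of $ST^k(S)$; in particular it marks $e_j$. Pick any $\phi(j)=j^*\in\Gamma^*$ that marks $e_j$ together with a witness occurrence $p_j\in SA[l_{e_j}..r_{e_j}]$ satisfying $p_j\le j^*<p_j+\lambda(e_j)$. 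Define the offset $d_j:=j^*-p_j\in\{0,1,\dots,k-1\}$. The key claim I want to establish is that, with the right canonical choice of $(e_j,p_j)$, the assignment $j\mapsto(\phi(j),d_j)$ is \emph{injective}; this immediately yields $|\Gamma|\le k\,|\Gamma^*|$.

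Fix $j^*\in\Gamma^*$ and suppose, toward a contradiction, that $j\neq j'$ in $\phi^{-1}(j^*)$ are mapped to the same pair, i.e.\ $d_j=d_{j'}$. Then $p_j=p_{j'}=p$, and the same suffix starting at position $p$ is prefixed by both $s(e_j)$ and $s(e_{j'})$; since distinct edges of $ST^k(S)$ carry distinct strings $s(e)$, we may assume $s(e_j)\subsetneq s(e_{j'})$, which gives the nesting $SA[l_{e_{j'}}..r_{e_{j'}}]\subseteq SA[l_{e_j}..r_{e_j}]$ of suffix-array ranges. To derive a contradiction, I would canonically pick $e_j$ as the witness edge of \emph{maximum} depth $\lambda(e_j)$ and $p_j$ as the leftmost occurrence among those marking $e_j$ from $j^*$ (i.e.\ the one maximising $d_j$). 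The existence of $e_{j'}$ with $s(e_j)\subsetneq s(e_{j'})$ and the same $p$ would then provide a strictly deeper edge that $j$ also marks uniquely in $\Gamma$ (because every marker of $e_{j'}$ is a marker of $e_j$, by the nested ranges, so the uniqueness of $j$ for $e_j$ forces uniqueness for $e_{j'}$), contradicting the maximality of $\lambda(e_j)$.

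The main obstacle is the last step: verifying that uniqueness really transfers from $e_j$ to the longer edge $e_{j'}$. The nested suffix-array ranges give a containment of occurrence sets but not of marker intervals (the intervals are longer at the deeper edge), so the uniqueness argument requires carefully comparing $[q,q+\lambda(e_j)-1]$ with $[q,q+\lambda(e_{j'})-1]$ for $q\in SA[l_{e_{j'}}..r_{e_{j'}}]$ and ruling out that some other $\Gamma$-element sneaks into the enlarged tail. If that transfer fails directly, a fallback is to use Lemmas~\ref{lemma:equiv_classes} and \ref{lemma:swap} to normalise the positions in $\Gamma$ to class representatives in $\mathcal{C}$ and reason inside $ST^{2k-1}(S')$. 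I expect this transfer-of-uniqueness step to be the only technically delicate part of the proof; once it is settled the injectivity is automatic and the $k$-approximation bound follows immediately.
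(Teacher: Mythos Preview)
Your approach has a genuine gap at precisely the step you flag as delicate, and that step cannot be completed as stated: the inclusion ``every $\Gamma$-marker of $e_{j'}$ is a $\Gamma$-marker of $e_j$'' is simply false. For a concrete failure, take $s(e_j)=a$ (so $\lambda(e_j)=1$) and $s(e_{j'})=ab$; a position $j''$ with $S[j'']=b$ and $S[j''-1]=a$ marks $e_{j'}$ via the occurrence at $j''-1$, but it does not mark $e_j$, since marking a depth-$1$ edge at $j''$ would require $S[j'']=a$. The suffix-array nesting goes the wrong way for your purposes: every occurrence of $s(e_{j'})$ is an occurrence of $s(e_j)$, but the marking \emph{interval} for the longer string is longer, so a marker of the deeper edge may sit entirely in the extra tail. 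Hence neither your maximal-depth choice of $e_j$ nor the uniqueness of $j$ for $e_j$ produces the contradiction you want, and there is no evident repair that keeps the map anchored at $j^*$.

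The paper's proof sidesteps this by a different decomposition. Rather than mapping $\Gamma$ into $\Gamma^*\times[0..k{-}1]$, it factors through the set $\mathcal L$ of leaf-edges of $ST^k(S)$: it shows $|\Gamma|\le|\mathcal L|$ and $|\mathcal L|\le k\,|\Gamma^*|$ separately. The second bound is a trivial count (each attractor position straddles at most $k$ distinct $k$-mers, hence marks at most $k$ leaf-edges). For the first, the paper assigns to each $j\in\Gamma$ a leaf-edge $ledge(j)$ by taking an occurrence $S[i..i+|s(e_j)|-1]=s(e_j)$ \emph{straddling $j$ itself} (not $j^*$), with $j-i$ maximised, and extending $S[i..]$ to a $k$-mer, which determines a unique leaf. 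Injectivity is then clean: if $ledge(j)=ledge(j')=\ell$ with $|s(e_j)|<|s(e_{j'})|$, the chosen occurrence of $s(e_j)$ at $i$ is also a prefix of an occurrence of $s(e_{j'})$, so $i\le j\le i+|s(e_j)|-1< i+|s(e_{j'})|$ shows that $j$ marks $e_{j'}$, contradicting that $j'$ is the \emph{unique} $\Gamma$-marker of $e_{j'}$. The decisive difference from your attempt is that the occurrence is anchored at $j$, which is exactly what lets the marking relation carry over to the longer edge; anchoring at $j^*$ discards all information about where $j$ sits and is why your transfer step fails.
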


\subsection{Better approximations to the minimum $k$-attractor}\label{sec:approx}

From~\cite{kempa2017roots}, we can compute poly-logarithmic approximations to the smallest attractor in linear time using reductions from dictionary compression techniques. This strategy, however, works only for $n$-attractors. 

In~\cite[Thm. 10]{kempa2017roots}, the authors show that a simple reduction to $k$-set cover allows one to compute in polynomial time a $\mathcal H(k(k+1)/2)$-approximation to the minimum $k$-attractor, where $\mathcal H(p) = \sum_{i=1}^p \frac{1}{i} \leq \ln (p+1) +1$ is the $p$-th harmonic number. This approximation ratio is at most $1+2\ln\left( \frac{k+1}{\sqrt 2} \right)$ for $k>1$ (and case $k=1$ is trivial to solve optimally in linear time). 
The key observation of~\cite[Thm. 10]{kempa2017roots} is that we can view each text position $i$ as a set $s_i$ containing all distinct $k'$-mers, with $k'\leq k$, overlapping the position. Then, solving $k$-attractor is equivalent to covering the universe set of all distinct substrings of length at most $k$ using the smallest possible number of sets $s_i$. This is, precisely, a $(k(k+1)/2)$-set cover instance (since $|s_i| \leq k(k+1)/2$ for all $i$), which can be approximated within a factor of  $\mathcal H(k(k+1)/2)$ using the greedy algorithm that at each step chooses the $s_i$ that covers the largest number of uncovered universe elements. A naive implementation of this procedure, however, runs in cubic time. 
We now show how to efficiently implement this greedy algorithm over the reduction of Theorem \ref{th:main}.

We first give a lemma needed to achieve our result.

\begin{lemma}\label{lemma:queue}
	Let $U$ be some universe of elements and $p:U\rightarrow [1..M]$ be a function assigning a priority to each element of $U$.
	In $\bigO(M+q)$ time we can build a priority queue $\mathcal Q$ initialized with $q$ elements of $U$ such that, later, we can perform any sequence of $m$ of the following operations in expected  $\bigO(M+m)$ time:
	
	\begin{itemize}
		\item $\mathcal Q.pop()$: return the $x\in \mathcal Q$ with largest $p(x)$ and remove $x$ from $\mathcal Q$.
		\item $\mathcal Q.dec(x)$: update $p(x) \leftarrow p(x)-1$, for a $x\in \mathcal Q$.
	\end{itemize}	
	
	At any point in time, the size of $\mathcal Q$ is of $\bigO(M+q)$ words.
\end{lemma}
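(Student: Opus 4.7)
The plan is to implement $\mathcal{Q}$ as a bucket priority queue with lazy top maintenance. I keep an array $B[1..M]$ of doubly linked lists, where $B[p]$ stores every element of $\mathcal{Q}$ whose current priority equals $p$; a hash table $H$ mapping each stored element $x$ to its current priority and a pointer to its list node; and a single index $top$ intended to equal the largest non-empty bucket index. Initialization allocates and zero-fills $B$ in $\bigO(M)$ time, inserts each of the $q$ given elements into its bucket in $\bigO(q)$ expected time (updating $H$), and sets $top$ to the maximum priority observed, for a total of $\bigO(M+q)$ time.

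For $\mathcal{Q}.dec(x)$ I read $(p,\mathrm{node})$ from $H[x]$, unlink $\mathrm{node}$ from $B[p]$, push it at the head of $B[p-1]$, and update $H[x]$; this is $\bigO(1)$ expected. For $\mathcal{Q}.pop()$ I walk $top$ downwards through empty buckets until $B[top]\neq \emptyset$, then unlink and return the head of $B[top]$ (also deleting its $H$ entry). The correctness of $pop$ hinges on $top$ always being an upper bound for the current maximum priority.

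The key observation for the amortized bound is that neither operation can ever create an element whose priority exceeds the previous maximum: $dec$ strictly lowers a priority by one, and $pop$ only removes. Hence $top$ is monotonically non-increasing throughout the whole execution, and the total work spent advancing $top$ across all $pop$ calls is $\bigO(M)$. Every remaining step of every $pop$ or $dec$ is $\bigO(1)$ expected, so any sequence of $m$ operations runs in $\bigO(M+m)$ expected time. At any moment the structure uses $\bigO(M)$ words for $B$ plus $\bigO(|\mathcal{Q}|)\leq \bigO(q)$ words for $H$ and the list nodes, staying within $\bigO(M+q)$ space. The only delicate point is the monotonicity of $top$ and the resulting amortization argument; the rest is standard bucket-queue bookkeeping, and the use of hashing on arbitrary identifiers in $U$ is what makes the time expected rather than worst-case.
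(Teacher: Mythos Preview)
Your proposal is correct and follows essentially the same approach as the paper: a bucket array $B[1..M]$ of doubly linked lists, a hash table for $O(1)$ expected access to elements, and a pointer \emph{top} (the paper calls it \emph{max}) that is monotonically non-increasing so that the total scanning work across all \texttt{pop} calls is $\bigO(M)$. The only cosmetic difference is that the paper also decrements \emph{max} inside \texttt{dec} when the top bucket becomes empty, whereas you defer all \emph{top} maintenance to \texttt{pop}; this does not affect the bound.
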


\begin{theorem}\label{theorem:approx}
	A $\mathcal H(k(k+1)/2)$-approximation of a minimum $k$-attractor can be computed in $\bigO(\min\{n^2, nk^2, n+\sigma^{2k-1} k^2\})$ expected time, where  $\mathcal H(p) = \sum_{i=1}^{p} \frac{1}{i}$ is the $p$-th harmonic number.
\end{theorem}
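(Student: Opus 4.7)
The plan is to implement the classical greedy set-cover algorithm on top of the reduction of Theorem~\ref{th:main}, using the marker graph $\mathcal{G}_{S,k}$ of Definition~\ref{def:marking graph} as the explicit data structure for the set-cover incidences. I first establish the approximation ratio. Fix any $i \in \mathcal{C}$; by Definition~\ref{def:marker}, $i$ is adjacent in $\mathcal{G}_{S,k}$ exactly to those edges $e$ with some occurrence of $s(e)$ straddling $i$. The number of substrings of $S$ of length at most $k$ having an occurrence that straddles a fixed position $i$ is bounded by $|\{(a,\ell):\ a \leq i \leq a+\ell-1,\ 1 \leq \ell \leq k\}| = k(k+1)/2$. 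Hence every set $s_i$ in the reduction of Theorem~\ref{th:main} satisfies $|s_i| \leq k(k+1)/2$, making it a $p$-set-cover instance with $p = k(k+1)/2$; the standard greedy therefore returns an $\mathcal{H}(k(k+1)/2)$-approximation.

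For the running time, I would first build $\mathcal{G}_{S,k}$ via Corollary~\ref{cor:final G size} in $\bigO(n+g)$ expected time, where $g = \min\{n^2, nk^2, \sigma^{2k-1}k^2\}$. Alongside the graph I would keep a boolean array $covered[\cdot]$ indexed by $\mathcal{U}$, and a priority queue $\mathcal{Q}$ instantiated via Lemma~\ref{lemma:queue} with maximum priority $M = k(k+1)/2$, priority $p(i) = \deg_{\mathcal{G}_{S,k}}(i)$, and $q = |\mathcal{C}|$ initial elements. The greedy loop repeatedly extracts the position $i$ of largest priority, appends it to the output, and then for each edge $e$ adjacent to $i$ with $covered[e] = false$, sets $covered[e] \leftarrow true$ and calls $\mathcal{Q}.dec(j)$ for every other position $j$ adjacent to $e$. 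I stop as soon as the popped priority is zero; by completeness of Theorem~\ref{th:main} at least one valid cover lives inside $\mathcal{C}$, so all edges of $\mathcal{U}$ must by that point be marked.

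The main obstacle, and the subtlety that makes the bound tight, is to avoid re-scanning an already-covered edge's adjacency list, which would inflate the decrement count to $\bigO(\sum_e \deg(e)^2)$. The $covered[\cdot]$ flag fixes this: each edge $e$ triggers at most $\deg(e)$ decrements, and does so only on its first covering, so the total decrement cost telescopes to $\bigO(\sum_{e \in \mathcal{U}} \deg(e)) = \bigO(|E|)$. Together with at most $|\mathcal{C}|$ pops, Lemma~\ref{lemma:queue} bounds all priority-queue work by $\bigO(M + |\mathcal{C}| + |E|)$ in expectation, and the graph-traversal work is of the same order. Since $|E|$, $|\mathcal{C}|$ and $M = k(k+1)/2$ are all in $\bigO(g)$, the greedy phase runs in expected $\bigO(g)$ time. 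Combined with the $\bigO(n+g)$ construction cost of $\mathcal{G}_{S,k}$, this yields the claimed expected bound of $\bigO(n + g) = \bigO(\min\{n^2, nk^2, n + \sigma^{2k-1}k^2\})$.
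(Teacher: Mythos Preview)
Your proposal is correct and follows essentially the same approach as the paper: build $\mathcal{G}_{S,k}$ via Corollary~\ref{cor:final G size}, run the greedy set-cover over it using the priority queue of Lemma~\ref{lemma:queue}, and argue that the total number of decrements is $\bigO(|E|)$. The only cosmetic differences are that the paper takes $M=|\mathcal U|$ rather than your (tighter) $M=k(k+1)/2$, physically deletes covered universe vertices instead of using a \texttt{covered} flag, and defers the $\mathcal H(k(k+1)/2)$ bound to~\cite[Thm.~10]{kempa2017roots} whereas you spell out the $|s_i|\le k(k+1)/2$ argument directly.
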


Note that the approximation ratio of Theorem \ref{theorem:approx} is $\mathcal H(k(k+1)/2) \leq 1+2\ln\left( \frac{k+1}{\sqrt 2} \right) \in \bigO(\log k)$.

\section{$k$-sharp attractors}
\label{section:sharp}

In this section we consider a natural variant of string attractors we
call \emph{$k$-sharp attractors}, and we prove some results concerning
their computational complexity.

Formally, we define a $k$-\emph{sharp} attractor of a string $S\in
\Sigma^n$ to be a set of positions $\Gamma\subseteq\left[1..n\right]$ such that every substring $S[i..j]$ with $j-i+1 = k$ has
an occurrence $S[i'..j'] = S[i..j]$ with $j'' \in [i'..j']$ for some
$j''\in\Gamma$. In other words, a $k$-sharp-attractor is a subset that
covers all substrings of length exactly $k$.

By \textsc{Minimum-$k$-Sharp-Attractor} we denote the optimization
problem of finding the smallest $k$-sharp attractor of a given input
string.  By $k\textsc{-Sharp-Attractor} = \{\langle T,p\rangle :
\text{String }T\text{ has a }k\text{-sharp-attractor of size}\leq p\}$
we denote the corresponding decision problem.  The NP-completeness of
\textsc{$k$-Sharp-Attractor} for constant $k$ is obtained by a
reduction from \textsc{$k$-SetCover} problem that is
NP-complete~\cite{DuhF97} for any constant $k \geq 3$: given integer
$p$ and a collection $C=\{C_1, C_2, \ldots, C_m\}$ of $m$ subsets of
the universe set $\mathcal{U}=\{1, 2, \ldots, n\}$ such that
$\bigcup_{i=1}^{m}C_i=\mathcal{U}$, and for any $i\in\{1,\ldots,m\}$,
$|C_i| \leq k$, return YES iff there exists a subcollection
$C'\subseteq C$ such that $\bigcup C' = \mathcal{U}$ and $|C'|\leq p$.

\begin{theorem}
	\label{th:sharp-attractor-npc}
	For any constant $k \geq 3$, \textsc{$k$-Sharp-Attractor} is
	NP-complete.
\end{theorem}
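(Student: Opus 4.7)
The plan is to verify membership in NP and then reduce from \textsc{$k$-SetCover} to establish NP-hardness.

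For NP membership, given a candidate $\Gamma\subseteq[1..n]$ I would enumerate the $\bigO(n)$ distinct length-$k$ substrings of $T$ (via the $k$-truncated suffix tree, as in Lemma \ref{lemma:enumerate}) and, for each, check in polynomial time whether at least one occurrence straddles a position of $\Gamma$, using the suffix-array machinery developed in the proof of Theorem \ref{th:checking}. A purported witness is therefore checkable in polynomial time, so \textsc{$k$-Sharp-Attractor} lies in NP.

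For NP-hardness, from a \textsc{$k$-SetCover} instance $(\mathcal{U},C=\{C_1,\dots,C_m\},p)$ with each $|C_i|\leq k$, I would build in polynomial time a string $T$ over an alphabet of size $\bigO(n+m)$ together with a threshold $p'$ so that $T$ admits a $k$-sharp-attractor of size $\leq p'$ iff the instance admits a cover of size $\leq p$. The construction uses a unique character $\sigma_e$ per element $e\in\mathcal{U}$ and a unique separator $\#_i$ per set $C_i$, taking $T=G_1G_2\cdots G_m$ to be a concatenation of one gadget per set, padded by separators so that cross-gadget length-$k$ substrings are either impossible or trivially covered. Each gadget $G_i$ is arranged so that (i) for every $e\in C_i$ a specific length-$k$ ``anchor'' substring $w_e$ appears inside $G_i$ and is the same string in every $G_j$ whose $C_j$ also contains $e$, making $w_e$ a single globally-shared coverage obligation; and (ii) a ``set-selector'' collection of a constant number of positions inside $G_i$ covers exactly $\{w_e:e\in C_i\}$, emulating the act of choosing $C_i$. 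Taking $p'=cm+p$ for the constant $c$ counting the mandatory per-gadget positions needed for gadget-local ``junk'' substrings, a cover of size $p$ yields a sharp-attractor of size $p'$ by picking the mandatory positions in every gadget plus a selector in each chosen gadget, and any sharp-attractor of size $p'$ must use enough selectors to cover all anchors, inducing a cover of size $p$.

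The main obstacle is the rigid sliding-window structure: a single position of $T$ straddles exactly $k$ consecutive length-$k$ windows that pairwise overlap in $k-1$ characters. Thus the anchors $w_e$ for $e\in C_i$ that one selector covers must form a ``chain'' in the de Bruijn sense, consistent both internally within $C_i$ and globally across sets sharing elements. A naive fixed-per-element encoding cannot accommodate arbitrary $k$-subsets as such chains, so the gadget has to inject enough local auxiliary structure (through $\#_i$ padding and extra per-slot characters) to give the required flexibility while preserving a single globally-unique anchor $w_e$ per element. The delicate part of the correctness argument is to verify both that the mandatory positions suffice to cover every gadget-local substring and, crucially, that each anchor $w_e$ can only be covered from inside a gadget whose set contains $e$, so that minimum $k$-sharp-attractors correspond exactly to minimum $k$-covers.
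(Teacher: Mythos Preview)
Your high-level plan matches the paper's: membership in NP via the suffix-tree machinery, and hardness by reducing from \textsc{$k$-SetCover} with one gadget per set and a shared length-$k$ ``anchor'' substring per universe element. That part is fine.

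The gap is precisely the obstacle you name but do not resolve. With one character $\sigma_e$ per element and one separator $\#_i$ per set, there is no evident way to make a bounded ``selector'' inside $G_i$ straddle the anchors of \emph{all} elements of $C_i$ while keeping each anchor $w_e$ a fixed, gadget-independent string: consecutive length-$k$ windows overlap in $k-1$ characters, so distinct anchors that are to be hit by a common position would have to share almost all their symbols, contradicting the requirement that $w_e$ depend only on $e$. You diagnose this correctly, but ``inject enough local auxiliary structure'' is a hope, not a construction, and your threshold $p'=cm+p$ presupposes that the construction succeeds.

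The paper avoids the chaining problem rather than solving it. It spends $k$ fresh symbols $x_u^{(1)},\dots,x_u^{(k)}$ per element, so the anchor is the full block $x_u^{(1)}\cdots x_u^{(k)}$, and it places one such block per element of $C_i$ inside $S_i$, separated by unique delimiters $\$_{i,1},\dots,\$_{i,n_i+1}$ and runs $\#^{k-1}$. Crucially, there is no single selector position covering several anchors; instead each gadget admits exactly two relevant covers: a \emph{unique} minimum cover of size $2n_i+1$ that hits every gadget-local length-$k$ substring except the $n_i$ anchors, and a ``universal'' cover of size $2n_i+2$ that hits everything, anchors included. Choosing $C_i$ thus costs exactly one extra position, giving the threshold $2t+m+p+2$ with $t=\sum_i n_i$. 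The backward direction relies on the uniqueness of the minimum cover (so that using only $2n_i+1$ positions \emph{forces} all anchors in $S_i$ to remain uncovered), a rigidity argument your sketch has no analogue of.
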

\begin{proof}[Proof idea]
	We obtain our reduction as follows (see the Appendix for full details). 
	For any constant $k \geq 3$,
	given an instance $\langle \mathcal{U}, C\rangle$ of
	\textsc{$k$-SetCover} we build a string $S_C$ of length
	$\bigO(mk^2)$, where $m$  denotes the number of sets in the input collection, with the following property: $\langle \mathcal{U},C
	\rangle$ has a cover of size $\leq p$ if and only if $S_C$ has a
	$k$-sharp-attractor of size $\leq 2t+m+p+2$, where
	$t=\sum_{i=1}^{m}n_i$. The proof
	follows~\cite[Thm. 8]{kempa2017roots}, but the main gadgets are
	slightly different.
	
	
	The main idea is to let each element $u\in U$ correspond to a
	substring, which is repeated once for every set $C_i$ containing
	$u$. We construct a substring $S_i$ for each set $C_i$ such that
	$S_i$ can always be covered by $2|C_i| + 2$ attractor positions,
	corresponding to choosing the set, but $S_i$ can be covered by
	$2|C_i| +1$ attractor positions when all elements of the set already
	belong to chosen sets. And, $S_i$ indeed requires $2|C_i| + 1$
	attractor positions. Finally, the substrings $S_i$ are padded and
	concatenated to form one long string $S_C$. 
\end{proof}

Denote the size of the minimum $k$-sharp-attractor of string $S$ by
$\gamma_{k}^{*}(S)$. Observe that the above theorem is proved for
constant values of $k$. This is because, unlike for general
$k$-attractors, $\gamma_{k}^{*}(S)$ is not monotone with respect to
$k$. This becomes apprarent when we observe that
$\gamma_{|S|}^{*}(S)=1$ for all $S$, hence e.g. for $X={\tt abb}$ we
have $\gamma_{2}^{*}(X) \not\leq \gamma_{3}^{*}(X)$. Note also that
our reduction requires large alphabet.

Interestingly, however, for $k=2$ the $k$-sharp-attractor admits a
polynomial-time algorithm. Note that such a result is not known for $k$-attractors (the case $k=2$ being the only one still open).

\begin{theorem}
	\label{thm:2-sharp-attractor}
	Minimum $2$-sharp-attractor is in P.
\end{theorem}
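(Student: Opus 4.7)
The plan is to reduce Minimum-$2$-Sharp-Attractor to Minimum Edge Cover in a multigraph with self-loops, a problem solvable in polynomial time via maximum matching. The key observation is that any text position $j$ covers at most two distinct bigrams, namely $S[j-1..j]$ (if $j\geq 2$) and $S[j..j+1]$ (if $j<n$). Hence Minimum-$2$-Sharp-Attractor is an instance of set cover in which every set has size at most $2$, a class of instances that is classically equivalent to Minimum Edge Cover.

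Concretely, I would build a multigraph $H=(V,E)$ as follows. The vertex set $V$ is the set of distinct bigrams of $S$. For every distinct trigram $abc$ occurring in $S$, I add an edge $\{ab,bc\}$ (which degenerates to a self-loop on $aa$ when $a=b=c$); additionally, I add a self-loop on $S[1..2]$ and on $S[n-1..n]$ to model the boundary positions $1$ and $n$, each of which covers only one bigram. By construction the edges of $H$ enumerate exactly the distinct cover sets $s_j$ realizable by text positions, and every bigram-vertex has at least one incident edge (every occurrence of a bigram is either internal, hence part of a trigram, or sits at a boundary), so $H$ has no isolated vertices and an edge cover always exists.

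The next step is to establish a size-preserving correspondence between $2$-sharp-attractors and edge covers of $H$. Given any $2$-sharp-attractor, I retain one representative position per distinct cover set; those positions map to edges that together cover every vertex of $H$. Conversely, any edge cover of $H$ lifts to a $2$-sharp-attractor of the same cardinality by choosing one text position realizing each selected edge. Minimality transfers in both directions, so the minimum $2$-sharp-attractor size equals the minimum edge cover number of $H$. I would then invoke that Minimum Edge Cover is polynomial-time computable in graphs with self-loops: eliminate each loop by attaching a pendant leaf, then apply Gallai's identity $|\text{min edge cover}| = |V| - \mu(G)$ together with Edmonds' blossom algorithm to compute a maximum matching $\mu(G)$, and translate the answer back. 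Constructing $H$ takes polynomial time, so the whole procedure does.

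The step I expect to require the most care is the bookkeeping around the two boundary positions and the degenerate $aaa$-trigram, where the cover set $s_j$ has cardinality $1$ rather than $2$: I must verify that every singleton cover set realizable by a text position is indeed represented by a self-loop in $H$, and conversely that each self-loop of $H$ can be realized by some valid text position, so that the correspondence is tight and the two optima genuinely coincide.
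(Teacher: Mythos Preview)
Your proposal is correct and follows essentially the same approach as the paper: reduce to Minimum Edge Cover on a graph whose vertices are the distinct bigrams, with an edge $\{xy,yz\}$ for each trigram $xyz$ and self-loops for the two boundary positions, then solve edge cover in polynomial time via matching. You are somewhat more explicit than the paper about the degenerate $aaa$-trigram case and about how to handle self-loops when computing the edge cover, but the underlying reduction is identical.
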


\begin{proof}
	It is easy to show that $2$-sharp-attractor is in $P$ by a reduction
	to edge cover. Given a string $S$, let $V\subseteq \Sigma^2$ be the
	set of strings of length $2$ that occur at least once in $S$. For
	every substring of length $3$ of the form $xyz$, add the edge
	$(xy,yz)$ to the edge-set $E$, and add self-loops for the first and
	last pair.
	
	A position $\gamma\in \Gamma$ thus corresponds to an edge,
	$e_\gamma$, and it is easy to see that $\Gamma$ is a
	$2$-sharp-attractor if and only if $\{e_\gamma | \gamma \in \Gamma
	\}$ is an edge cover.
	
	The number of vertices and edges in this graph are both $\leq n$, so
	a minimum edge cover can be found in $O(n\sqrt{n})$
	time~\cite{Micali1980}.
\end{proof}

\bibliography{attractors-algo}

\newpage

\appendix

\section{Appendix}

\begin{proof}[Proof of Lemma \ref{lemma:size}]
	By the way we defined $\equiv _k$, the set $[1..n]/\equiv _k$ has one element per distinct substring of length $(2k-1)$ in $S'$, that is, per distinct path from the suffix tree root to each of the nodes in $\mathcal L(ST^{2k-1}(S'))$.
	Clearly, $|\mathcal L(ST^{2k-1}(S'))|\leq n$.
	On the other hand, there are at most $\sigma^{2k-1}$ distinct substrings of length $2k-1$ on $\Sigma$. There are other $2k-2$ additional substrings to consider on the borders of $S'$ (to include the runs of symbol $\#$). It follows that the cardinality of $\mathcal L(ST^{2k-1}(S'))$ is upper-bounded also by $\sigma^{2k-1} + 2k-2$. 
\end{proof}

\begin{proof}[Proof of Lemma \ref{lemma:equiv_classes}]
	Suppose, by contradiction, that $|\Gamma \cap [i]_{\equiv_k}| > 1$ for some $i$. Then, let $j,j'\in \Gamma \cap [i]_{\equiv_k}$, with $j\neq j'$. By definition of $\equiv_k$, $S'[j-k+1..j+k-1] = S'[j'-k+1..j'+k-1]$. This means that if a substring of $S$ of length at most $k$ has an occurrence straddling position $j$ in $\Gamma$ then it has also one occurrence straddling position $j' \in (\Gamma - \{j\})$. On the other hand, any other substring occurrence straddling any position $j''\neq j,j'$ is also captured by $\Gamma - \{j\}$ since $j''$ belongs to this set.	
	This implies that $\Gamma - \{j\}$ is a $k$-attractor, which contradicts the minimality of $\Gamma$.
\end{proof}

\begin{proof}[Proof of Lemma \ref{lemma:size of G}]
	Clearly, $|E| \in \bigO(|\mathcal C|\cdot |\mathcal U|)$.
	Note that each position $j\in \mathcal C$ crosses at most $\bigO(k^2)$ distinct substrings of length at most $k$ in $S$, therefore this is also an upper-bound to the number of suffix tree edges it can mark. It follows that there are at most $\bigO(k^2)$ edges in $\mathcal G_{S,k}$ sharing a fixed $j\in \mathcal C$, which implies $|E| \in \bigO(|\mathcal C|\cdot k^2)$.
	
\end{proof}

\begin{proof}[Proof od Lemma \ref{lemma:build G}]
	Clearly, $\mathcal U$ can be computed in linear time using the suffix tree of $S$, as this set contains suffix tree edges at string depth at most $k$.
	
	The set $\mathcal C$ can be computed considering all suffix tree nodes (explicit or implicit) at string depth $2k-1$, extracting the leftmost occurrence in their induced sub-tree (i.e. an occurrence of the string of length $2k-1$ read from the root to the node), and adding $k-1$. This task takes linear time once the suffix tree of $S$ is built.

	Let $st\_edge(S[i..i+k-1])$ return the suffix tree edge $e$ reached following the path labeled $S[i..i+k-1]$ from the suffix tree root. This edge can be found in constant time using the optimal structure for weighted ancestors queries on the suffix tree described in~\cite{gawrychowski2014weighted} (which can be build beforehand in $\bigO(n)$ time and space).
	Let moreover $\pi(e)$ be the parent edge of $e$, i.e. $\pi(e) = \langle u,v\rangle$ and $e = \langle v, u'\rangle$ for some suffix tree nodes $u,v,u'$.
	We implement $E$ using hashing, so insert and membership operations take expected constant time.
	
	To build $E$ we proceed as follows. 
	For every $j\in \mathcal C$, and for $i=j, j-1, \dots, j-k+1$:
	\begin{enumerate}
		\item Find edge $e = st\_edge(S[i..i+k-1])$.
		\item If $\lambda(e) > j-i$ and $\langle j, e \rangle \notin E$, then insert $\langle j, e\rangle$ in $E$. Otherwise, proceed at step 1 with the next value of $i$.
		\item $e \leftarrow \pi(e)$. Repeat from step 2.
	\end{enumerate}

	\textbf{Correctness} It is easy to see that we only insert in $E$ edges $\langle j,e\rangle$ such that $j$ marks $e$ (because we check that $\lambda(e) > j-i$), so the algorithm is correct.
	
	\textbf{Completeness} We now show that if $j$ marks $e$, then we insert $\langle j,e\rangle$ in $E$. Assume that  $j$ marks $e$. Consider the (unique) occurrence $S[i..i+\lambda(e)-1] = s(e)$ overlapping $j$ in its leftmost position (i.e. $i \leq j < i+\lambda(e)$ and $j-i$ is minimized; note that there could be multiple occurrences of $s(e)$ overlapping $j$). 
	Consider the moment when we compute $e'= st\_edge(S[i..i+k-1])$ at step 1. We now show that, for each edge $e''$ on the path from $e'$ to $e$, it must hold  $\langle j,e''\rangle\notin E$. This will prove the property, since then we insert all these edges (including $\langle j,e\rangle$) in $E$.
	Assume, by contradiction, that $\langle j,e''\rangle\in E$ for some $e''$ on the path from $e'$ to $e$. Then this means that in the past at step 1. we have already considered some $S[i'..i'+k-1]$, with $i < i' \leq j < i'+k$, prefixed by $s(e'')$ (note that it must be the case that $i'>i$ since we consider values of $i$ in decreasing order). But then, since $s(e)$ prefixes $s(e'')$, it also prefixes $S[i'..i'+k-1]$, i.e. $S[i'..i'+\lambda(e)-1] = s(e)$ and $i'>i$. This is in contradiction with the way we defined $i$ (i.e. $j-i$ is minimized).
	
	\textbf{Complexity} 
	Overall, in step 1. we call $\bigO(k\cdot |\mathcal C|)$ times function $st\_edge$ (constant time per call). Then, in steps 2. and 3. we only spend (constant) time whenever we insert a new edge in $E$ (since we check $\langle j, e \rangle \notin E$ before inserting). Overall, our algorithm runs in $\bigO(n+|\mathcal G_{S,k}| + k\cdot |\mathcal C|)$ time.
\end{proof}

\begin{proof}[Proof of Lemma \ref{lemma:enumerate}]
	In \cite[Thm. 3]{belazzougui2016linear} (see also \cite{Belazzougui14}) the authors show how to enumerate the following information for each right-maximal substring $W$ of $S$ in $\bigO(n)$ time and $\bigO(n\log \sigma)$ bits of space: $|W|$ and the suffix array range $range(Wb)$ of the string $Wb$, for all $b\in\Sigma$ such that $Wb$ is a substring of $S$. Since $W$ is right-maximal, those $Wb$ are  equal to our strings $s(e)$ (for every edge $e$). It follows that our problem is solved by outputting all $range(Wb)$ and $|Wb|$ returned by the algorithm in~\cite[Thm. 3]{belazzougui2016linear}.
\end{proof}

\begin{proof}[Proof of Theorem \ref{th:reporting}]
	
	The idea is to use the variant of Theorem \ref{th:checking} based on the (uncompressed) suffix tree together with the optimal structure for weighted ancestors queries on the suffix tree described in~\cite{gawrychowski2014weighted}. Weighted ancestors on the suffix tree can be used to find the suffix tree edge $e$ where string $s = S[i..j]$ ends (starting from the root) in constant time given as input the range $\langle i,j\rangle$~\cite{gawrychowski2014weighted}. Reporting an arbitrary number of occurrences of $s=S[i..j]$ straddling a position in $\Gamma$ corresponds to solving a three-sided orthogonal range reporting query:
	we find the minimum $D[m]$ in $D[l_e..r_e]$ and check if $|s| = j-i+1 > D[m]$. If the answer is yes, we output $SA[m]$ and recurse on $D[l_e..m-1]$ and $D[m+1..r_e]$. If the answer is no, we stop. Note that we can report an arbitrary number of such occurrences with constant delay. Since the structure~\cite{gawrychowski2014weighted} can be built in linear time and space, we obtain our claim.
\end{proof}

\begin{proof}[Proof of Lemma \ref{theorem:minimal characterization}]
	$(\Rightarrow)$ Let $\Gamma$ be a minimal $k$-attractor.
	Let $j\in \Gamma$. Since $\Gamma$ is minimal, $\Gamma - \{j\}$ is not a $k$-attractor. From Theorem \ref{theorem:k-attr characterization}, this implies that $\Gamma - \{j\}$ is not a $k$-marking, i.e. there exists a suffix tree edge $e$, with $\lambda(e)\leq k$, that is not marked by $\Gamma - \{j\}$. On the other hand, the fact that $\Gamma$ is a $k$-attractor implies (Theorem \ref{theorem:k-attr characterization}) that $\Gamma$ is a $k$-marking, i.e. it also marks edge $e$. This, in particular, implies that $j$ marks $e$.
	Now, let $j'\in \Gamma$ be a position that marks $e$. Assume, by contradiction, that $j'\neq j$. Then, $j' \in \Gamma - \{j\}$, which implies that $\Gamma - \{j\}$ marks $e$. This is a contradiction, therefore it must be the case that $j'=j$, i.e. $j$ is $k$-necessary. Since the argument works for any $j\in \Gamma$, we obtain that all $j\in \Gamma$ are $k$-necessary.
	
	$(\Leftarrow)$ Assume that $\Gamma$ is a $k$-attractor and all $j\in \Gamma$ are $k$-necessary. Then, choose an arbitrary $j\in \Gamma$. 
	By Definition \ref{def:k-necessary}, there exists an edge $e$ that is only marked by $j$, i.e. for every $j'\in \Gamma - \{j\}$, $j'$ does not mark $e$.
	This implies (Theorem \ref{theorem:k-attr characterization}) that $\Gamma - \{j\}$ is not a $k$-attractor. Since the argument works for any $j\in \Gamma$, we obtain that $\Gamma$ is a minimal $k$-attractor.	
\end{proof}

\begin{proof}[Proof of Theorem \ref{theorem:minimality check}]
	We associate to each element in $\Gamma$ a distinct color from the set $\Sigma_\Gamma = \{c_i\ :\ i\in\Gamma\}$, and we build a two-dimensional $\Sigma_\Gamma$-colored grid $L \subseteq [1..n]^2\times \Sigma_\Gamma$ (i.e. each point $\langle i,j\rangle$ in $L$ is associated with a color from $\Sigma_\Gamma$) defined as $L = \{ \langle i,D[i], c_{SA[i]+D[i]} \rangle,\ i=1,...,n  \}$, that is, at coordinates $\langle i,D[i]\rangle$ we insert a point ``colored'' with the color associated to the attractor position immediately following---and possibly including---$SA[i]$. Then, for every suffix tree edge $e$ we check that the range $L[l_e..r_e, 0..\lambda(e)-1]$ contains at least two distinct colors. If there are at least two distinct colors $c_k\neq c_{k'}$ in the range, then we do not mark $k$ and $k'$ as necessary (note that they could be marked later by some other edge, though). However, if there is only one distinct color $c_k$ in the range then this is not enough to mark $k$ as necessary. The reason why is that in array $D$ we are tracking only the attractor position $i'$ \emph{immediately following} each text position $i$; it could well be that the attractor position $i'' > i'$ immediately following $i'$ marks $e$, but we miss it because we track only $i'$. This problem can be easily solved inserting in $L$ also a point corresponding to the \emph{second} nearest attractor position following every text position (so the number of points only doubles). It is easy to see that this is sufficient to solve our problem, since we only aim at enumerating at most two distinct colors in a range.
	
	At this point, we have reduced the problem to the so-called \emph{three-sided colored orthogonal range reporting} problem in two dimensions: report the distinct colors inside an three-sided orthogonal range in a grid. 
	For this problem, the fastest known data structure takes $\bigO(n\log n)$ space and answers queries in $\bigO(\log^2n + i)$ time, where $n$ is the number of points in the grid and $i$ is the number of returned points~\cite{gupta1995further}. This would result in an overall running time of $\bigO(n\log^2n)$ for our algorithm.
	We note that our problem is, however, simpler than the general one. In our case, it is enough to list \emph{two} distinct colors (if any); we are not interested in counting the total number of such colors or reporting an arbitrary number of them. 
	
	Our solution relies on wavelet trees~\cite{grossi2003high}. First, we pre-process the set of $v \leq 2n$ points so that they fit in a grid $[1..v]\times[1..v]$ such that every row and every column contain exactly one point. Mapping the original query on this grid can be easily done in constant time using well-established rank reduction techniques that we do not discuss here (see, e.g.~\cite{navarro2016compact}). We can view this grid as an integer vector $V \in [1..v]^v$, where each $V[i]$ is associated with a color $V[i].c\in\Sigma_{\Gamma}$.  We build a wavelet tree $WT(V)$ on $V$. Let $u_{s}$ denote the internal node of $WT(V)$ reached following the path $s\in\{0,1\}^*$. With $V_s$ we denote the subsequence of $V$ associated with $u_s$, i.e. the subsequence of $V$ such that the binary representation of each $V_s[i]$ is prefixed by $s$.
	For each internal node $u_s$ we store (i) the sequence of colors $C_{s} = V_s[1].c, \dots, V_s[|V_s|].c$, and (ii) a bitvector $B_s[1..|V_s|]$ such that $B_s[1]=0$ and $B_s[i] \neq B_s[i-1]$ iff $C_s [i] \neq C_s[i-1]$. We pre-process  each $B_s$ for constant-time rank and select queries~\cite{jacobson1988succinct,clark1998compact}. Overall, our data structure takes $\bigO(n\log \Gamma)$ words of space (that is, $\bigO(n\log \Gamma\log n)$ bits: at each of the $\log n$ levels of the wavelet tree we store $v\leq 2n$ colors). 
	
	To report two distinct colors in the range $[l,r]\times[l',r']$, we find in $\bigO(\log v)$ time the $\bigO(\log v)$ nodes of $WT(V)$ covering $[l',r']$ as usually done when solving orthogonal range queries on wavelet trees (see~\cite{navarro2014wavelet} for full details). For each such node $u_s$, let the range $V_s[l_s, r_s]$ contain the elements in common between $V_s$ and $V[l..r]$ (i.e. the range obtained mapping $[l..r]$ on $V_s$). We check whether in $B_s[l_s..r_s]$ there are two distinct bits at adjacent positions. 
	If this is the case, we locate their positions $B_s[i_0]=0$ and $B_s[i_1]=1$, with $l_s \leq i_0 = i_1-1 \leq r_s$ (the case $i_1=i_0-1$ is symmetric), and return the colors $C_s[i_0]$ and $C_s[i_1]$. By construction of $B_s$, $C_s[i_0]\neq C_s[i_1]$ and therefore we are done. Note that $i_0$ and $i_1$ can be easily found in constant time using rank/select queries on $B_s$.
	
	If, on the other hand, all sequences $B_s[l_s..r_s]$ are unary, then we just need to retrieve the $\bigO(\log v)$ colors $C_s[l_s]$, for all the $u_s$ covering the interval  $[l',r']$, and check if any two of them are distinct (e.g. radix-sort them in linear time). Also this step runs in $\bigO(\log v)$ time. 
	
	Overall, our solution uses $\bigO(n\log|\Gamma|)$ space and runs in $\bigO(n\log v) = \bigO(n\log n)$ time. 
	
\end{proof}

\begin{proof}[Proof of Lemma \ref{lemma: number minimal k-attractors}]
	Let $minimal(\sigma, k)$ denote the maximum number of minimal $k$-attractors on the alphabet $[1..\sigma]$ (independently of the string length $n$).
	Let $\Gamma$ be a minimal $k$-attractor. 
	By Lemma \ref{theorem:minimal characterization}, for every $j\in\Gamma$ there is at least one edge $e\in\mathcal U$ marked by $j$ only.
	Let $edge:\Gamma \rightarrow \mathcal U$ be the function defined as follows: $edge(j) = e$ such that (i) $e$ is marked  by $j$ only, and (ii) among all edges marked  by $j$ only, $e$ is the one with the lexicographically smallest $s(e)$ (where, if $s(e')$ prefixes $s(e'')$, then we consider $s(e')$ smaller than $s(e'')$ in lexicographic order). 
	Let $\mathcal U' = edge(\Gamma)$ be the image of $\Gamma$ through $edge$. By its definition, $edge$ is a bijection between $\Gamma$ and $\mathcal U'$. This implies that $|\Gamma| = |\mathcal U'| \leq |\mathcal U| \leq \sigma^k$: a minimal $k$-attractor is a set of cardinality at most $\sigma^k$ chosen from a universe $\mathcal C$ of size at most $|\mathcal C| \leq \sigma^{2k-1}+2k-2 \leq \sigma^{2k}$, therefore:
	$minimal(\sigma, k) \leq \sum_{i=1}^{\sigma^k} {{\sigma^{2k}}\choose{i}}
	$.
	We now give an upper-bound to the function $f(N,t) = \sum_{i=1}^{t} {{N}\choose{i}}$, where we assume $t\geq 2$ for simplicity (the hypothesis holds in our case since $t=\sigma^k$). Then, we will plug our bound in the above inequality. Since ${{N}\choose{i}} < \frac{N^i}{i!}$, we have that 
	$$
	\begin{array}{lcl}
	f(N,t) &<& \sum_{i=1}^{t} \frac{N^i}{i!}\\ 
	&=& \sum_{i=1}^{t} \left(\frac{N}{t}\right)^i\cdot \frac{t^i}{i!} \\
	&\leq& \sum_{i=1}^{t} \left(\frac{N}{t}\right)^i\cdot \sum_{i=1}^{t} \frac{t^i}{i!}\\
	&\in & \bigO\left(\left(\frac{N\cdot e}{t}\right)^t\right)
	\end{array}
	$$
	We obtain our claim: $
	minimal(\sigma, k) \leq f(\sigma^{2k},\sigma^k) \in \bigO(e^{\sigma^k}\cdot \sigma^{k\sigma^k}) \leq exp\left(\bigO(\sigma^k\log\sigma^k)\right)$.
\end{proof}

\begin{proof}[Proof of Theorem \ref{th:minimum}]
	Let $c(i) = S'[ i-k+1..i+k-1 ]$, where $S'[i] = \# \notin \Sigma$ if $i<1$ or $i>n$ and $S'[i] = S[i]$ otherwise, be the context string associated to position $i$. Consider the string 
	$$
	C = c(i_1)\$c(i_2)\$\dots\$c(i_t)
	$$
	where $\{i_1, i_2, \dots, i_t\} = \mathcal C$ and $\# \neq \$\notin \Sigma$. By our choice of $\mathcal C$, the length of this string is $|C| = (|\mathcal C|\cdot 2k) - 1 \leq (\sigma^{2k-1} + 2k)\cdot 2k\in\bigO(\sigma^{2k}\cdot k) \leq exp(\bigO(\log\sigma^k))$. We can build $C$ in $\bigO(n+|C|)$ time using the suffix tree of $S$ (i.e. extracting all paths from the root to nodes at string depth at most $2k-1$).
	
	Let now $\Gamma'' \subseteq \{k\cdot (2j+1)\ :\ j=0,\dots,t-1 \}$. It is easy to see that $\Gamma' = \{i\ :\ C[i]=\$\ or\ C[i]=\#\}\ \cup \Gamma''$ is a $k$-attractor for $C$ if and only if the set $\Gamma = \{ i_{(x-k)/(2k)}\ :\ x\in \Gamma''\}$ is a $k$-attractor for $S$. Suppose that $\Gamma'$ is a $k$-attractor for $C$, and consider a substring $s$ of $S$ of length at most $k$. By construction of $C$, $s$ is also a substring of $C$; in particular, there is an occurrence $C[i..i+|s|-1] = s$ straddling a position $k\cdot(2j+1)\in\Gamma''$, for some $j$. Then, $i_j\in \Gamma$ and, by the way we defined $C$, there is an occurrence of $s$ straddling position $i_j$ in S. Conversely, suppose that $\Gamma$ is a $k$-attractor for $S$, and let $s$ be a substring of $C$ of length at most $k$. If $s$ contains either $\$$ or $\#$, then it must straddle one of the positions in $\{i\ :\ C[i]=\$\ or\ C[i]=\#\}\ \subseteq \Gamma'$. Otherwise, it appears inside one of the substrings $c(i_k)$, for some $k\in[1..t]$. But then, this means that $s$ appears in $S$ and, in particular, that it has some occurrence $s'=s$ straddling a position $j\in \Gamma$. By the way we constructed $C$, $s$ has an occurrence in $C$ straddling position $k\cdot(2j+1) \in \Gamma'' \subseteq \Gamma'$.

	At this point, we check whether $\Gamma'$ is a $k$-attractor for $C$ for all possible $\Gamma''$, and return the smallest such set. 
	Instead of trying all subsets of $\mathcal C$, we use Lemma 
	\ref{lemma: number minimal k-attractors}
	and generate only subsets of $\mathcal C$ of size at most $\sigma^k$; these subsets will include all minimal $k$-attractors and, in particular, all minimum $k$-attractors.
	By Lemma
	\ref{lemma: number minimal k-attractors},
	there are at most $exp\left(\bigO(\sigma^k\log\sigma^k)\right)$ such sets, and each verification takes linear $\bigO(|C|) \leq exp(\bigO(\log\sigma^k))$ time using Theorem \ref{th:checking}. Overall, our algorithm for the minimum $k$-attractor runs in 
	$\bigO(n) +  exp(\bigO(\log\sigma^k))\cdot exp\left(\bigO(\sigma^k\log\sigma^k)\right)$
	time, which simplifies to $\bigO(n) + exp\big(\bigO(\sigma^k\log\sigma^k)\big)$.
\end{proof}

\begin{proof}[Proof of Theorem \ref{theorem:find minimal}]
	We build the graph $\mathcal G_{S,k}$ of Definition \ref{def:marking graph} with the time/space bounds of Corollary \ref{cor:final G size}.
	
	We start with an empty $k$-attractor $\Gamma = \emptyset$  and with a set $A$ of \emph{active} vertexes initially defined as $A = \mathcal C$. Intuitively, $A$ will contain positions that can be removed from the $k$-attractor.
	
	Our algorithm to generate a minimal $k$-attractor works as follows:
	
	\begin{enumerate}
		\item Scan $\mathcal U$ and, for every $e\in \mathcal U$ having only one adjacent edge $\langle j,e \rangle$, remove $j$ from $A$ and insert it in $\Gamma$.
		\item Pick any element $j\in A$ and: 
		\begin{enumerate}
			\item For every $\langle j, e \rangle \in E$, if $e$ has only one other adjacent edge $\langle j',e \rangle$ such that $j\neq j'$, update $A \leftarrow A -\{j'\}$ and $\Gamma \leftarrow \Gamma \cup \{j'\}$.
			\item remove $j$ from $A$ and from $\mathcal G_{S,k}$, together with all its adjacent vertexes.
		\end{enumerate}
		\item If $A\neq \emptyset$, repeat from step (2). Otherwise, return $\Gamma$.
	\end{enumerate}
	
	\textbf{Correctness} At step (1) we find all $k$-necessary positions with respect to $A \cup \Gamma$ (Definition \ref{def:k-necessary}) and move them from $A$ to $\Gamma$. From this point, we maintain the following invariants: 
	
	\begin{itemize}
		\item (Inv 1) if $j\in A$, then $j$ is not $k$-necessary with respect to $A \cup \Gamma$, 
		\item (Inv 2) if $j\in \Gamma$, then $j$ is $k$-necessary with respect to $A \cup \Gamma$,  
		\item (Inv 3) $A\cap \Gamma = \emptyset$, and
		\item (Inv 4). $j \in A \cup \Gamma$ marks $e$ iff $\langle j,e \rangle$ is an edge of $\mathcal G_{S,k}$.
	\end{itemize}

	We show that the invariants hold after step (1). (Inv 1): assume, by contradiction, that there is a $j\in A$ that is $k$-necessary with respect to $A \cup \Gamma$, and note that $A \cup \Gamma = \mathcal C$. Then, by definition, there exists $e\in \mathcal U$ such that only $j$ marks $e$, i.e. $e$ has only one adjacent edge: $\langle j,e \rangle$. Then, step (1) would have removed $j$ from $A$, which yields a contradiction. (Inv 2): let $j\in\Gamma$. Then, in step (1) we have inserted $j$ in $\Gamma$ because there exists an $e\in \mathcal U$ having only one adjacent edge: $\langle j,e \rangle$. But then, $j$ is $k$-necessary w.r.t. $A \cup \Gamma$. (Inv 3): trivial, since at the beginning $\Gamma=\emptyset$ and we only move some elements from $A$ to $\Gamma$. (Inv 4) Trivial by definition of $\mathcal G_{S,k}$, since $A\cup \Gamma$ is unchanged by step (1).
	
	We now show that step (2) preserves the validity of the invariants.
	Let $\hat A$, $\hat \Gamma$, and $\hat{\mathcal G}_{S,k}$ denote $A$, $\Gamma$, and $\mathcal G_{S,k}$ \emph{before} entering in step (2), and $A$, $\Gamma$, and $\mathcal G_{S,k}$ denote the same objects upon exiting step (2).
	
	\ \\(Inv 1). Let $j'\in A$, and assume by contradiction that $j'$ is $k$-necessary with respect to $A \cup \Gamma$. Then, this means that there exists $e \in \mathcal U$ marked only by $j'$ among the elements in $A \cup \Gamma$. Then, $j'\in \hat A$, since we did not add any element to $\hat A$. But then, by (Inv 1) $j'$ is not $k$-necessary with respect to $\hat A\cup \hat \Gamma$; then, this means that there is a $j\in \hat A\cup \hat \Gamma$, $j\neq j'$ marking $e$. This $j$ must be the $j$ removed at point (2.b), since it is not present in $A$ (and we only remove one element from $\hat A$, i.e. $j$). This means that $\langle j,e \rangle$  and $\langle j',e \rangle$ are the only two edges adjacent to $e$ in $\hat{\mathcal G}_{S,k}$. But then, $e$ passes the test at point (2.a), therefore $A = \hat A - \{j'\}$, i.e. $j'\notin A$. This is a contradiction.
	
	\ \\(Inv 2). Let $j'\in\Gamma$, and assume by contradiction that $j'$ is not $k$-necessary with respect to $A\cup \Gamma$. 
	Since $j'\in\Gamma$, then by (Inv 3) it is either in $\hat \Gamma$ or $\hat A$. 
	
	(i) If $j'\in\hat \Gamma$, by (Inv 2) it is $k$-necessary with respect to $\hat A \cup \hat \Gamma$. Let $e\in \mathcal U$ be marked by $j'$. Since $j'$ is not $k$-necessary with respect to $A\cup \Gamma$, there is a $j''\in A\cup \Gamma$, with $j''\neq j'$, that marks $e$. But then, note that $j''\in A\cup \Gamma \subseteq \hat A \cup \hat \Gamma$. This work for every edge $e$ marked by $j'$, which implies that $j'$ is not $k$-necessary with respect to $\hat A \cup \hat \Gamma$. This is a contradiction. 
	
	(ii) Let therefore $j'\in\hat A$. Since we also have that $j'\in\Gamma$, this means that either $j'\in\hat \Gamma$ (absurd by Inv 3), or that	we moved $j'$ from $\hat A$ to $\hat \Gamma$ in step (2.a). Then, note that we only move $j'$ from $\hat A$ to $\hat \Gamma$ in step (2.a) when there is a $e\in\mathcal U$ adjacent only to $j'$ and some $j\in \hat A$, with $j\neq j'$. 
	By (Inv 4), this means that $j',j$ are the only elements marking $e$ among the elements in $\hat A \cup \hat \Gamma$. This $j$ is then removed from $\hat A$ in step (2.b). This means that $j'$ is the only element marking $e$ among the elements in $A \cup \Gamma$, i.e. $j'$ is $k$-necessary with respect to $A \cup \Gamma$. This is, again, a contradiction.

	\ \\(Inv 3). Trivial, since we either move elements from $A$ to $\Gamma$ or remove elements from $A$.
	
	\ \\(Inv 4). Trivial, since whenever we remove an element $j$ from $A \cup \Gamma$ we also remove $j$ (with all its adjacent edges) from $\mathcal G_{S,k}$.
	
	\ \\Since at the end of the algorithm we have that $A = \emptyset$, (Inv 2) guarantees that every $j\in\Gamma$ is $k$-necessary with respect to $A\cup \Gamma = \Gamma$, i.e. $\Gamma$ is $k$-necessary. We now prove that $A\cup \Gamma = \Gamma$ is a $k$ attractor at any point of the algorithm.
	
	At the beginning, $A\cup \Gamma = \mathcal C$ is a $k$-attractor (the complete one). Then, note that we remove positions $j$ from $A\cup \Gamma$ (2.b) only when $j\in A$. By (Inv 1), such a position is not $k$-necessary, i.e. each edge that it marks is also marked by some other $j'\neq j$. It follows that removing $j$ does not leave any edge unmarked. This proves that, at the end of the algorithm, $A\cup \Gamma = \Gamma$ is a $k$-attractor. Since $\Gamma$ is also $k$-necessary, by Theorem \ref{theorem:minimal characterization} $\Gamma$  is a minimal $k$-attractor.
	
	\ \\\textbf{Complexity} Step 1 takes $\bigO(|\mathcal U|)$ time. At step 2.a, note that we visit the following edges: (i) those adjacent to $j\in A$, plus (ii) edges $\langle j', e\rangle$ such that $e$ is adjacent to $j$ and $e$ has only two adjacent vertexes. Since we later remove $j$ and $j'$ from $A$, those edges will not be visited anymore. It follows that overall we spend $\bigO(|E|)$ time inside step 2. Overall, our algorithm runs in an expected time linear in the graph's size. 
\end{proof}

\begin{proof}[Proof of Theorem \ref{theorem:minimal approx rate}]
	Let $\Gamma$ be a minimal attractor. Let $\mathcal T_k = ST^k(S)$, and let $\mathcal L\subseteq \mathcal U$  be the set of the edges adjacent to the leaves of $\mathcal T_k$.
	
	
	By Theorem \ref{theorem:minimal characterization}, each $j\in\Gamma$ is $k$-necessary. By Definition \ref{def:k-necessary} this means that, for every $j\in\Gamma$, there exists \emph{at least} one edge $e\in \mathcal U$ marked only by $j$.

	Let $edge:\Gamma \rightarrow \mathcal U$ be the function defined in the proof of Lemma \ref{lemma: number minimal k-attractors}: this function maps each element $j\in \Gamma$ to the unique edge (at string depth at most $k$) $e=edge(j)$ such that (i) $e$ is marked  by $j$ only, and (ii) among all edges marked  by $j$ only, $e$ is the one with the lexicographically smallest $s(e)$.

	Let now $ledge:\Gamma \rightarrow \mathcal L$ (\emph{ledge} stands for \emph{leaf-edge}) be the function defined as follows. Let $e = edge(j)$. Then, $j$ is straddled by some occurrence of the substring $s(e)$. Among those occurrences, consider the one $S[i..i+|s(e)|-1] = s(e)$ (with $i \leq j \leq i+|s(e)|-1$) such that $j-i$ is maximized. Then, $S[i..i+k-1]$ (or $S[i..n]$ if $i+k-1 > n$) defines a unique path starting at the root of $\mathcal T_k$ and ending in the label of some edge $\ell\in\mathcal L$. By definition this $\ell$ is unique, and we define $ledge(j) = \ell$. We now show that $ledge$ is an injective function. 
	
	Let $j,j'\in\Gamma$ such that $ledge(j) = ledge(j') = \ell$ and assume, by contradiction, that $j\neq j'$. Let $e=edge(j)$ and $e'=edge(j')$. By definition of $ledge$, both $e$ and $e'$ lie on the path starting from the root of $\mathcal T_k$ and ending in $\ell$. Assume, without loss of generality,  that $|s(e')|>|s(e)|$, i.e. that $e$ is closer to the root than $e'$. Then, by definition of $ledge$, the occurrence $S[i..i+|s(e)|-1] = s(e)$ such that $i \leq j \leq i+|s(e)|-1$ and $j-i$ is maximized prefixes an occurrence of $s(\ell)$, i.e. $S[i..i+|s(\ell)|-1] = s(\ell)$.
	In particular, since $e'$ is on the path from $e$ to $\ell$ then the occurrence $S[i..i+|s(e)|-1]$ prefixes also an occurrence of $s(e')$: $S[i..i+|s(e')|-1] = s(e')$.
	Moreover, $i \leq j \leq i+|s(e)|-1 \leq i+|s(e')|-1$, i.e. $j\neq j'$ marks $e'$. This  contradicts  the way we defined $e'=edge(j')$, since $edge(j')$ yields an edge marked  by $j'$ \emph{only}. Therefore, $ledge$ is injective, which yields $|\Gamma| \leq |\mathcal L|$.

	Now, consider a minimum $k$-attractor $\Gamma^*_k$ of size $\gamma^*_k$. $\Gamma^*_k$ has to mark, in particular, the $|\mathcal L|$ edges adjacent to the leaves of $\mathcal T_k$. Note that each $j\in \Gamma^*_k$ is straddled by at most $k$ distinct $k$-mers; since each of these $k$-mers defines exactly one root-to-leaf path in $\mathcal T_k$, then $j$ can mark at most $k$ edges adjacent to the leaves. But then, $\Gamma^*_k$ needs at least $\gamma^*_k \geq |\mathcal L|/k$ elements in order to mark all $|\mathcal L|$ edges adjacent to the leaves.
	
	Putting these two bounds together, we obtain $\frac{|\Gamma|}{\gamma^*_k} \leq \frac{|\mathcal L|}{|\mathcal L|/k} \leq k$.
\end{proof}

\begin{proof}[Proof of Lemma \ref{lemma:queue}]
	
	Let $x_1, \dots, x_q$ be the $q$ input elements.
	We connect elements with the same $p(x)$ in a doubly-linked list (at most $M$ linked lists). These linked-lists are implemented as follows.
	We keep a hash table $H$ of size $\bigO(q)$ such that:
	\begin{itemize}
		\item $H[x_j]=NULL$ if $x_j$ is no longer in the queue. Otherwise:
		\item A pointer $H[x_j].next$ to the  element following $x_j$ in its linked list ($H[x_j].next=NULL$ if $x_j$ is the last),
		\item A pointer $H[x_j].prev$ to the  element preceding $x_j$ in its linked list ($H[x_j].prev=NULL$ if $x_j$ is the first),
	\end{itemize}
	
	We keep an array $P[1..M]$. Each $P[i]$ stores either:
	\begin{itemize}
		\item $P[i]=NULL$ if there are no elements with $p(x)=i$ in the queue, or
		\item $P[i] = x$ such that $x$ is the first element in the linked list associated with priority $p(x)$.
	\end{itemize}
	
	We moreover keep a variable $max$ storing the maximum index in $[1..M]$ such that $P[max]\neq NULL$. At the beginning, $max = \max\{p(x_1), \dots, p(x_q)\}$. All these structures can easily be initialized in $\bigO(M+q)$ time. 
	
	$\mathcal Q.pop()$ is implemented as follows. Let $x = P[max]$. If $H[x].next \neq NULL$, we update $P[max] \leftarrow H[x].next$. If $H[x].next = NULL$, we find the maximum $max'<max$ such that $P[max']\neq NULL$ and update $max \leftarrow max'$. $max'$ is found with a linear scan from $max-1$. If such a $max'$ does not exists, then the queue is empty and no operations can be done anymore. Then, we remove $x$ from the queue with the operation $H[x]\leftarrow NULL$ and return it.
	
	$\mathcal Q.dec(x)$ is implemented as follows. 
	If $p(x)=max$ and $x$ is the only element in its list, then we update $max \leftarrow max-1$.
	We remove $x$ from its  list and insert it at the head of the list with head $P[p(x)-1]$ (this list could be empty; in this case,  a new list with head $x$ is created at $P[p(x)-1]$).
	Then, we update $p(x)\leftarrow p(x)-1$. 
	
	Note that all operations take expected constant time, except finding the maximum $max'<max$ while computing $\mathcal Q.pop()$. However, $max$ always decreases, so overall we do not spend more than $\bigO(M)$ time in this step. Our claim easily follows.
	
\end{proof}

\begin{proof}[Proof of Theorem \ref{theorem:approx}]
	
	All we need to do is to find a subset $\Gamma$ of our covering set $\mathcal C$ by repeatedly inserting in $\Gamma$ the element of $\mathcal C$ that marks the largest number of unmarked edges in $\mathcal U$ (starting with $\Gamma=\emptyset$ and stopping as soon as all elements of $\mathcal U$ are marked). Our algorithm works as follows. 
	
	First, we build the graph $\mathcal G_{S,k}$ of Definition \ref{def:marking graph} with the time/space bounds of Corollary \ref{cor:final G size}. We moreover initialize the empty queue $\mathcal Q$ of Lemma \ref{lemma:queue} over the universe $\mathcal C$ and with priority function $p:\mathcal C \rightarrow [1..|\mathcal U|]$ initially defined as $p(x)=deg(x)$ for each $x\in \mathcal C$. Let $\Gamma \leftarrow \emptyset$ be our starting $k$-attractor. 
	Let moreover $M \leftarrow \emptyset$ be a set keeping track of the \emph{marked} suffix tree edges, and let $\mathcal G_{S,k}(\mathcal U)$ denote the elements from the original set $\mathcal U$ that are still vertexes of $\mathcal G_{S,k}$ (in the algorithm we will prune the set $\mathcal G_{S,k}(\mathcal U)$). At the beginning, $\mathcal G_{S,k}(\mathcal U) = \mathcal U$
	Then:
	
	\begin{enumerate}
		\item For each $x\in \mathcal C$,  call $\mathcal Q.insert(x)$.
		\item Pick $y \leftarrow \mathcal Q.pop()$ and insert $y$ in $\Gamma$. For each $\langle y,e \rangle\in E$: 
		\begin{enumerate}
			\item for each $\langle y', e\rangle \in E$,  call $\mathcal Q.dec(y')$.
			\item  remove $e$ and all its adjacent edges from from $\mathcal G_{S,k}$.
			\item Update $M \leftarrow M \cup \{e\}$.
		\end{enumerate}
		\item If $\mathcal G_{S,k}(\mathcal U) \neq \emptyset$,  repeat from step (2). Otherwise, return $\Gamma$.
	\end{enumerate}

	\textbf{Correctness} It is sufficient to show that the following invariants are valid after step (1) and are maintained by step (2): 
	
	\begin{itemize}
		\item (Inv 1) For every $e\in M$, there is a $j\in \Gamma$ that marks $e$,
		\item (Inv 2) For every $e\in \mathcal G_{S,k}(\mathcal U)$, none of the elements in $\Gamma$ marks $e$,
		\item (Inv 3) For every $x\in\mathcal Q$, $p(x)$ is precisely the number of elements in $\mathcal G_{S,k}(\mathcal U)$ that are marked by $x$, and 
		\item (Inv 4) $\mathcal G_{S,k}(\mathcal U) \cup M = \mathcal U$ and $\mathcal G_{S,k}(\mathcal U) \cap M = \emptyset$.
	\end{itemize}
	
	(Inv 1) is trivially valid after step (1) since at the beginning $M = \emptyset$. (Inv 2) is trivially valid after step (1) since at the beginning $\Gamma = \emptyset$. 
	(Inv 3) is clearly valid after step (1) since we define $p(x) = deg(x)$, and $deg(x)$ is the number of elements in $\mathcal U = \mathcal G_{S,k}(\mathcal U)$ adjacent to $x$ in the graph $\mathcal G_{S,k}$, i.e. the number of suffix tree edges marked by $x$ (by definition of $\mathcal G_{S,k}$). (Inv 4) is also clearly valid after step (1) by the way we initialize $\mathcal G_{S,k}(\mathcal U)$ and $M$. We now show that step (2) preserves the invariants.\\\ \\
	(Inv 1) for every new element $e$ inserted in $M$ at step (2.c), we also insert an element $y$ in $\Gamma$ such that $\langle y,e \rangle\in E$, i.e. $y$ marks $e$. By inductive hypothesis, the remaining elements in $M$ were already marked before entering in step (2), therefore the invariant is preserved.\\\ \\
	(Inv 2) Easy, since whenever we insert an element $y$ in $\Gamma$, we also remove from $\mathcal G_{S,k}(\mathcal U)$ all elements $e$ marked by $y$ (step 2.b).\\\ \\
	(Inv 3) note that, whenever we decrement $p(y')$ at step (2.a) with operation $\mathcal Q.dec(y')$, we also remove from $\mathcal G_{S,k}(\mathcal U)$ an element $e\in \mathcal G_{S,k}(\mathcal U)$ marked by $y'$ (step 2.b). 
	This is the only place where $p(y')$ is modified, therefore the validity of the invariant easily follows. \\\ \\
	(Inv 4) Easy, since whenever we remove an element $e\in \mathcal G_{S,k}(\mathcal U)$ (step 2.b) we also add it to $M$ (step 2.c).\\\ \\
	At the end of the procedure,  $\mathcal G_{S,k}(\mathcal U) = \emptyset$ and therefore, by (Inv 4), $M = \mathcal U$.	 
	By invariant (Inv 1), For every $e\in M = \mathcal U$ there is a $j\in \Gamma$ that marks $e$, i.e. $j$ is a $k$-attractor. Invariants (Inv 2) and (Inv 3) moreover guarantee that at each step we insert in $\Gamma$ the element $j$ marking the largest number of unmarked suffix tree edges, which implies that ours is the greedy algorithm achieving approximation rate $\mathcal H(k(k+1)/2)$.\\
	
	\textbf{Complexity} It is easy to see that we do not visit each edge of $\mathcal G_{S,k}$ more than once, and we perform a constant number of operations per edge. From Lemma \ref{lemma:queue}, our algorithm runs in expected linear time in the size of $\mathcal G_{S,k}$.
	
\end{proof}

\begin{proof}[Proof of Theorem \ref{th:sharp-attractor-npc}]
	We show a polynomial time reduction from \textsc{$k$-SetCover} to
	\textsc{$k$-Sharp-Attractor}. 
	$m$  denotes the number of sets in the collection.
	Denote the sizes of individual sets in
	the collection $C$ by $n_i=|C_i|$ and let
	$C_i=\{c_{i,1},c_{i,2},\ldots,c_{i,n_i}\}$. Let $
	\Sigma=\bigcup_{i=1}^{n}\bigcup_{j=1}^{k}\{x_i^{(j)}\} \cup
	\bigcup_{i=1}^{m}\bigcup_{j=1}^{n_i+1}\{\$_{i,j}\} \cup \{\#,\$\} $
	be our alphabet. Note that in the construction below $x_i^{(j)}$
	denotes a single symbol, while $\#^{k-1}$ denotes a concatenation of
	$k-1$ occurrences of $\#$.  We will now build a string $S$ over the
	alphabet $\Sigma$.  Let $ S = R\cdot \prod_{i=1}^{m}S_i $ where
	$\cdot$/$\prod$ denotes the concatenation of strings and $R$ and
	$S_i$ are defined below.
	
	Intuitively, we associate each $u\in \mathcal{U}$ with the substring
	$x_{u}^{(1)}\cdots x_{u}^{(k)}$ and each collection $C_i$ with
	substring $S_i$.  Each $S_i$ will contain all $n_i$ strings
	corresponding to elements in $C_i$ as substrings. The aim of $S_i$
	is to simulate --- via how many positions within $S_i$ are used in
	the solution to the \textsc{$k$-Sharp-Attractor} on $S$ --- the
	choice between not including $C_i$ in the solution to
	\textsc{$k$-Sharp-SetCover} (in which case $S_i$ is covered using a
	minimum possible number of positions that necessarily leaves
	uncovered all substrings corresponding to items in $C_i$) or
	including $C_i$ (in which case, by using only one additional
	position in the cover of $S_i$, the solution covers all substrings
	unique to $S_i$ \emph{and simultaneously} all $n_i$ substrings of
	$S_i$ corresponding to items in $C_i$). Gadget $R$ is used to cover
	``for free'' the substring $\#^k$ so that any algorithm solving
	\textsc{$k$-Sharp-Attractor} for $S$ will not have to optimize for
	its coverage at the boundaries of $S_i$.  This will be achieved as
	follows: $R$ will have $k+1$ length-$k$ substrings that appear only
	in $R$ and nowhere else in $S$. Thus, any $k$-sharp-attractor for
	$S$ has to include at least two positions within $R$. On the other
	hand, we will show that there exists a choice of two positions
	within $R$ that covers all those unique substrings, plus the
	substring $\#^k$ that we want to cover ``for free'' within $R$.
	
	For $i\in\{1,2,\ldots,m\}$ let (brackets added for clarity)
	\[
	S_i = \left( \prod_{j=1}^{n_i} \#^{k-1} \$_{i,1} \cdots \$_{i,j}
	x_{c_{i,j}}^{(1)} \cdots x_{c_{i,j}}^{(k)} \$_{i,j} \right) \cdot
	\#^{k-1} \$_{i,1} \cdots \$_{i,n_i+1} \#^{k-1}
	\]
	
	The example of $S_i$ for $k=6$ and $n_i=4$ is reported below. The
	meaning of overlined and underlined characters is explained below
	the example.
	\begin{align*}
		S_i =\ & \# \# \# \# \# \underline{\$_{i,1}}
		\overline{x_{c_{i,1}}^{(1)}} x_{c_{i,1}}^{(2)} x_{c_{i,1}}^{(3)}
		x_{c_{i,1}}^{(4)} x_{c_{i,1}}^{(5)} x_{c_{i,1}}^{(6)}
		\overline{\underline{\$_{i,1}}} \# \# \# \# \# \$_{i,1}
		\underline{\$_{i,2}} \overline{x_{c_{i,2}}^{(1)}}
		x_{c_{i,2}}^{(2)} x_{c_{i,2}}^{(3)} x_{c_{i,2}}^{(4)}
		x_{c_{i,2}}^{(5)} x_{c_{i,2}}^{(6)}
		\\ \ \ &\overline{\underline{\$_{i,2}}} \# \# \# \# \# \$_{i,1}
		\$_{i,2} \underline{\$_{i,3}} \overline{x_{c_{i,3}}^{(1)}}
		x_{c_{i,3}}^{(2)} x_{c_{i,3}}^{(3)} x_{c_{i,3}}^{(4)}
		x_{c_{i,3}}^{(5)} x_{c_{i,3}}^{(6)}
		\overline{\underline{\$_{i,3}}} \# \# \# \# \# \$_{i,1} \$_{i,2}
		\$_{i,3} \underline{\$_{i,4}} \overline{x_{c_{i,4}}^{(1)}}
		x_{c_{i,4}}^{(2)} \\ \ \ & x_{c_{i,4}}^{(3)}x_{c_{i,4}}^{(4)}
		x_{c_{i,4}}^{(5)} x_{c_{i,4}}^{(6)}
		\overline{\underline{\$_{i,4}}} \# \# \# \# \# \overline{\$_{i,1}}
		\$_{i,2} \$_{i,3} \$_{i,4} \overline{\underline{\$_{i,5}}} \# \#
		\# \# \#
	\end{align*}
	
	Any $k$-sharp-attractor of $S$ contains at least $2n_i+1$ positions
	within $S_i$ because $S_i$ contains $2n_i+1$ non-overlapping
	substrings of length $k$ such that each
	\emph{necessarily}\footnote{That is, independent of what is $C_i$.
		Importantly, although any of the substrings in
		$\bigcup_{j=1}^{k}\{x_{c_{i,j}}^{(1)}\cdots x_{c_{i,j}}^{(k)}\}$
		could have the only occurrence in $S_i$, they are not
		\emph{necessarily} unique to $S_i$. This situation is analogous to
		\textsc{$k$-SetCover} when some $u\in\mathcal{U}$ is covered by
		only one set in $C$, and thus that set has to be included in the
		solution.}  occurs in $S_i$ only once and nowhere
	else\footnote{This can be verified by consulting the definition of
		$R$ below.} in $S$: $\bigcup_{j=1}^{n_i}\{\$_{i,j}\#^{k-1}\} \cup
	\bigcup_{j=1}^{n_i+1}\{\$_{i,j}\#^{k-1}\}$.  With this in mind we
	now observe that $S_i$ has the following two properties:
	\begin{enumerate}
		\item There exists a ``minimum'' set $\Gamma_{S,i}$ of $2n_i+1$
		positions within the occurrence of $S_i$ in $S$ that covers all
		substrings of $S_i$ of length $k$ that necessarily occur only in
		$S_i$ and nowhere else in $S$. The set $\Gamma_{S,i}$ includes
		the first occurrence of $\$_{i,j}$ for $j\in\{1,\ldots,n_i+1\}$
		and the second occurrence of $\$_{i,j}$ for
		$j\in\{1,\ldots,n_i\}$ ($\Gamma_{S,i}$ is shown in the above
		example using underlined positions). Furthermore, it is easy to
		check that $\Gamma_{S,i}$ is \emph{the only} such set (see also
		the proof of~\cite[Thm. 8]{kempa2017roots}).  We now observe
		that the only substrings of $S_i$ of length $k$ not covered by
		$\Gamma_{S,i}$ are precisely strings
		$\bigcup_{j=1}^{k}\{x_{c_{i,j}}^{(1)}\cdots
		x_{c_{i,j}}^{(k)}\}$.  Thus, if in any $k$-sharp-attractor of
		$S$, $S_i$ is covered using the minimum number of $2n_i+1$
		positions, these positions \emph{must} be precisely
		$\Gamma_{S,i}$ and hence, in particular, any of the strings from
		the set $\bigcup_{j=1}^{k}\{x_{c_{i,j}}^{(1)}\cdots
		x_{c_{i,j}}^{(k)}\}$ is not covered within $S_i$.
		\item There exists a ``universal'' set $\Gamma_{S,i}'$ of $2n_i+2$
		positions within the occurrence of $S_i$ in $S$ that covers
		\emph{all} substrings of $S_i$ of length $k$.  The set
		$\Gamma_{S,i}'$ includes: the only occurrence of
		$x_{c_{i,j}}^{(1)}$ for $j\in\{1,\ldots,n_i\}$, the second
		occurrence of $\$_{i,j}$ for $j\in\{1,\ldots,n_i\}$, the only
		occurrence of $\$_{i,n_i+1}$, and the last occurrence of
		$\$_{i,1}$ ($\Gamma_{S,i}'$ is shown in the above example using
		overlined positions).
	\end{enumerate}
	
	To finish the construction we will ensure that the substring $\#^k$
	is covered ``for free'' in $S$. This will allow us to easily exclude
	$\#^k$ from consideration, therby simplifying the proof.  To this
	end we introduce a string $R=\#^k \$ \$ \#^{k-1}$.  Any
	$k$-sharp-attractor of $S$ contains at least two positions within
	$R$ because $R$ contains $k+1$ substrings of length $k$ that occur
	in $R$ and nowhere else in $S$.  On the other hand, the set
	$\Gamma_{R}=\{k,k+2\}$ of positions within $R$ covers all substrings
	of length $k$ occurring only in $R$ as well as the substring
	$\#^{k}$.
	
	With the above properties, we are now ready to prove our earlier
	claim: an instance $\langle \mathcal{U},C \rangle$ of
	\textsc{$k$-SetCover} has a solution of size $\leq p$ if and only if
	$S$ has a $k$-sharp-attractor of size $\leq 2t+m+p+2$, where
	$t=\sum_{i=1}^{m}n_i$.
	
	``$(\Rightarrow)$'' Let $C' \subseteq C$ be a cover of $\mathcal{U}$
	of size $p'\leq p$ and let $\Gamma_{C'}=
	\bigcup\{\Gamma_{S,i}'\ \mid C_i \in C'\} \cup
	\bigcup\{\Gamma_{S,i}\mid C_i\not\in C'\} \cup \Gamma_{R}$.  It is
	easy to check that $|\Gamma_{C'}|=2t+m+p'+2$.  From the discussion
	above $\Gamma$ covers all substrings of $S$ of length $k$ contained
	inside gadgets. In particular, $\Gamma_{C'}$ covers all strings
	$\{x_i^{(1)}\cdots x_i^{(k)}\}_{i=1}^{m}$ since $C'$ is a cover of
	$\mathcal{U}$ and thus the claim follows.
	
	``$(\Leftarrow)$'' Let $\Gamma$ be a $k$-sharp-attractor of $S$ of
	size $\leq 2t+m+p+2$. We will show that $\mathcal{U}$ must have a
	cover of size $\leq p$ using elements from $C$.  Let $\mathcal{I}$
	be the set of indices $i\in\{1,\ldots,m\}$ for which $\Gamma$
	contains more than $2n_i+1$ positions within the occurrence of $S_i$
	in $S$. By the above discussion, $\Gamma$ cannot have less than two
	positions within the occurrence of $R$ in $S$. Thus, there are at
	most $2t+m+p$ positions left to use within $\{S_i\}_{i=1}^{m}$.
	Each of $S_i$, $i\in\{1,\ldots,m\}$ requires $2n_i+1$ positions, and
	hence there cannot be more than $p$ indices where $\Gamma$ uses more
	positions than necessary. Thus, $|\mathcal{I}|\leq p$. Let
	$C'_{\Gamma}=\{C_i \in C \mid i \in \mathcal{I}\}$. We now show that
	$C'_{\Gamma}$ is a cover of $\mathcal{U}$. Suppose that there exists
	$u\in\mathcal{U}$ that is not in $\bigcup C'_{\Gamma}$. Consider
	then the set $\mathcal{I}_u=\{i\in\{1,\ldots,m\} \mid u\in
	C_i\}$. We must have $\mathcal{I}\cap \mathcal{I}_u=\emptyset$ and
	thus each $S_i$ corresponding to $C_i$ containing $u$ is covered in
	$\Gamma$ using the minimum attractor of size $2n_i+1$.  This
	implies, by the above discussion, that the string $x_u^{(1)}\cdots
	x_u^{(k)}$ is not covered by $\Gamma$, a contradiction.
\end{proof}

\end{document}